\algrenewcommand\algorithmicindent{0.7em}%
\newcommand\fs@spaceruled{\def\@fs@cfont{\bfseries}\let\@fs@capt\floatc@ruled
  \def\@fs@pre{\vspace{0.1\baselineskip}\hrule height.8pt depth0pt \kern2pt}%
  \def\@fs@post{\kern2pt\hrule\relax}%
  \def\@fs@mid{\kern2pt\hrule\kern2pt}%
  \let\@fs@iftopcapt\iftrue}
\definecolor{myred}{HTML}{9E292B}
\definecolor{myblue}{HTML}{235787}
\definecolor{mygreen}{HTML}{5E6638}
\definecolor{mygray}{HTML}{444444}
\definecolor{myblack}{HTML}{000000}
\definecolor{mywhite}{HTML}{FFFFFF}
\definecolor{myaltred}{HTML}{D46A78}
\definecolor{myaltblue}{HTML}{6699C2}
\definecolor{myaltgreen}{HTML}{B0B58C}
\definecolor{myaltgray}{HTML}{AAAAAA}
\definecolor{mylightred1}{HTML}{B15455}
\definecolor{mylightred2}{HTML}{C57F80}
\definecolor{mylightred3}{HTML}{D8A9AA}
\definecolor{mylightred4}{HTML}{ECD4D5}
\definecolor{mylightblue1}{HTML}{5A7DA5}
\definecolor{mylightblue2}{HTML}{7D99BA}
\definecolor{mylightblue3}{HTML}{B3C3D7}
\definecolor{mylightblue4}{HTML}{D3DCE8}
\definecolor{mydarkgreen}{HTML}{3E4822}
\definecolor{mylightgreen1}{HTML}{828859}
\definecolor{mylightgreen2}{HTML}{9AA075}
\definecolor{mylightgreen3}{HTML}{B8BC96}
\definecolor{mylightgreen4}{HTML}{D4D4B8}
\definecolor{mylightgray1}{HTML}{6F6F6F}
\definecolor{mylightgray2}{HTML}{999999}
\definecolor{mylightgray3}{HTML}{B4B4B4}
\definecolor{mylightgray4}{HTML}{DCDCDC}
\newcommand{\safemath}[2]{\newcommand{#1}{\ensuremath{#2}\xspace}}
\safemath{\bma}{\mathbf{a}}
\safemath{\bmb}{\mathbf{b}}
\safemath{\bmc}{\mathbf{c}}
\safemath{\bmd}{\mathbf{d}}
\safemath{\bme}{\mathbf{e}}
\safemath{\bmf}{\mathbf{f}}
\safemath{\bmg}{\mathbf{g}}
\safemath{\bmh}{\mathbf{h}}
\safemath{\bmi}{\mathbf{i}}
\safemath{\bmj}{\mathbf{j}}
\safemath{\bmk}{\mathbf{k}}
\safemath{\bml}{\mathbf{l}}
\safemath{\bmm}{\mathbf{m}}
\safemath{\bmn}{\mathbf{n}}
\safemath{\bmo}{\mathbf{o}}
\safemath{\bmp}{\mathbf{p}}
\safemath{\bmq}{\mathbf{q}}
\safemath{\bmr}{\mathbf{r}}
\safemath{\bms}{\mathbf{s}}
\safemath{\bmt}{\mathbf{t}}
\safemath{\bmu}{\mathbf{u}}
\safemath{\bmv}{\mathbf{v}}
\safemath{\bmw}{\mathbf{w}}
\safemath{\bmx}{\mathbf{x}}
\safemath{\bmy}{\mathbf{y}}
\safemath{\bmz}{\mathbf{z}}
\safemath{\bmzero}{\mathbf{0}}
\safemath{\bmone}{\mathbf{1}}
\bmdefine{\biad}{a}
\bmdefine{\bibd}{b}
\bmdefine{\bicd}{c}
\bmdefine{\bidd}{d}
\bmdefine{\bied}{e}
\bmdefine{\bifd}{f}
\bmdefine{\bigd}{g}
\bmdefine{\bihd}{h}
\bmdefine{\biid}{i}
\bmdefine{\bijd}{j}
\bmdefine{\bikd}{k}
\bmdefine{\bild}{l}
\bmdefine{\bimd}{m}
\bmdefine{\bind}{n}
\bmdefine{\biod}{o}
\bmdefine{\bipd}{p}
\bmdefine{\biqd}{q}
\bmdefine{\bird}{r}
\bmdefine{\bisd}{s}
\bmdefine{\bitd}{t}
\bmdefine{\biud}{u}
\bmdefine{\bivd}{v}
\bmdefine{\biwd}{w}
\bmdefine{\bixd}{x}
\bmdefine{\biyd}{y}
\bmdefine{\bizd}{z}
\bmdefine{\bixid}{\xi}
\bmdefine{\bilambdad}{\lambda}
\bmdefine{\bimud}{\mu}
\bmdefine{\bithetad}{\theta}
\bmdefine{\biphid}{\phi}
\bmdefine{\bideltad}{\delta}
\safemath{\bmia}{\biad}
\safemath{\bmib}{\bibd}
\safemath{\bmic}{\bicd}
\safemath{\bmid}{\bidd}
\safemath{\bmie}{\bied}
\safemath{\bmif}{\bifd}
\safemath{\bmig}{\bigd}
\safemath{\bmih}{\bihd}
\safemath{\bmii}{\biid}
\safemath{\bmij}{\bijd}
\safemath{\bmik}{\bikd}
\safemath{\bmil}{\bild}
\safemath{\bmim}{\bimd}
\safemath{\bmin}{\bind}
\safemath{\bmio}{\biod}
\safemath{\bmip}{\bipd}
\safemath{\bmiq}{\biqd}
\safemath{\bmir}{\bird}
\safemath{\bmis}{\bisd}
\safemath{\bmit}{\bitd}
\safemath{\bmiu}{\biud}
\safemath{\bmiv}{\bivd}
\safemath{\bmiw}{\biwd}
\safemath{\bmix}{\bixd}
\safemath{\bmiy}{\biyd}
\safemath{\bmiz}{\bizd}
\safemath{\bmxi}{\bixid}
\safemath{\bmlambda}{\bilambdad}
\safemath{\bmmu}{\bimud}
\safemath{\bmtheta}{\bithetad}
\safemath{\bmphi}{\biphid}
\safemath{\bmdelta}{\bideltad}
\safemath{\bA}{\mathbf{A}}
\safemath{\bB}{\mathbf{B}}
\safemath{\bC}{\mathbf{C}}
\safemath{\bD}{\mathbf{D}}
\safemath{\bE}{\mathbf{E}}
\safemath{\bF}{\mathbf{F}}
\safemath{\bG}{\mathbf{G}}
\safemath{\bH}{\mathbf{H}}
\safemath{\bI}{\mathbf{I}}
\safemath{\bJ}{\mathbf{J}}
\safemath{\bK}{\mathbf{K}}
\safemath{\bL}{\mathbf{L}}
\safemath{\bM}{\mathbf{M}}
\safemath{\bN}{\mathbf{N}}
\safemath{\bO}{\mathbf{O}}
\safemath{\bP}{\mathbf{P}}
\safemath{\bQ}{\mathbf{Q}}
\safemath{\bR}{\mathbf{R}}
\safemath{\bS}{\mathbf{S}}
\safemath{\bT}{\mathbf{T}}
\safemath{\bU}{\mathbf{U}}
\safemath{\bV}{\mathbf{V}}
\safemath{\bW}{\mathbf{W}}
\safemath{\bX}{\mathbf{X}}
\safemath{\bY}{\mathbf{Y}}
\safemath{\bZ}{\mathbf{Z}}
\safemath{\bZero}{\mathbf{0}}
\safemath{\bOne}{\mathbf{1}}
\safemath{\bDelta}{\mathbf{\Delta}}
\safemath{\bLambda}{\mathbf{\UpLambda}}
\safemath{\bPhi}{\mathbf{\Upphi}}
\safemath{\bSigma}{\mathbf{\Upsigma}}
\safemath{\bOmega}{\mathbf{\Upomega}}
\safemath{\bTheta}{\mathbf{\Uptheta}}
\bmdefine{\biAd}{A}
\bmdefine{\biBd}{B}
\bmdefine{\biCd}{C}
\bmdefine{\biDd}{D}
\bmdefine{\biEd}{E}
\bmdefine{\biFd}{F}
\bmdefine{\biGd}{G}
\bmdefine{\biHd}{H}
\bmdefine{\biId}{I}
\bmdefine{\biJd}{J}
\bmdefine{\biKd}{K}
\bmdefine{\biLd}{L}
\bmdefine{\biMd}{M}
\bmdefine{\biOd}{N}
\bmdefine{\biPd}{O}
\bmdefine{\biQd}{P}
\bmdefine{\biRd}{R}
\bmdefine{\biSd}{S}
\bmdefine{\biTd}{T}
\bmdefine{\biUd}{U}
\bmdefine{\biVd}{V}
\bmdefine{\biWd}{W}
\bmdefine{\biXd}{X}
\bmdefine{\biYd}{Y}
\bmdefine{\biZd}{Z}
\bmdefine{\biDelta}{\Delta}
\bmdefine{\biLambda}{\Lambda}
\bmdefine{\biPhi}{\Phi}
\bmdefine{\biSigma}{\Sigma}
\bmdefine{\biOmega}{\Omega}
\bmdefine{\biTheta}{\Theta}
\safemath{\bimA}{\biAd}
\safemath{\bimB}{\biBd}
\safemath{\bimC}{\biCd}
\safemath{\bimD}{\biDd}
\safemath{\bimE}{\biEd}
\safemath{\bimF}{\biFd}
\safemath{\bimG}{\biGd}
\safemath{\bimH}{\biHd}
\safemath{\bimI}{\biId}
\safemath{\bimJ}{\biJd}
\safemath{\bimK}{\biKd}
\safemath{\bimL}{\biLd}
\safemath{\bimM}{\biMd}
\safemath{\bimN}{\biNd}
\safemath{\bimO}{\biOd}
\safemath{\bimP}{\biPd}
\safemath{\bimQ}{\biQd}
\safemath{\bimR}{\biRd}
\safemath{\bimS}{\biSd}
\safemath{\bimT}{\biTd}
\safemath{\bimU}{\biUd}
\safemath{\bimV}{\biVd}
\safemath{\bimW}{\biWd}
\safemath{\bimX}{\biXd}
\safemath{\bimY}{\biYd}
\safemath{\bimZ}{\biZd}
\safemath{\bimDelta}{\biDelta}
\safemath{\bimLambda}{\biLambda}
\safemath{\bimPhi}{\biPhi}
\safemath{\bimSigma}{\biSigma}
\safemath{\bimOmega}{\biOmega}
\safemath{\bimTheta}{\biTheta}
\safemath{\setA}{\mathcal{A}}
\safemath{\setB}{\mathcal{B}}
\safemath{\setC}{\mathcal{C}}
\safemath{\setD}{\mathcal{D}}
\safemath{\setE}{\mathcal{E}}
\safemath{\setF}{\mathcal{F}}
\safemath{\setG}{\mathcal{G}}
\safemath{\setH}{\mathcal{H}}
\safemath{\setI}{\mathcal{I}}
\safemath{\setJ}{\mathcal{J}}
\safemath{\setK}{\mathcal{K}}
\safemath{\setL}{\mathcal{L}}
\safemath{\setM}{\mathcal{M}}
\safemath{\setN}{\mathcal{N}}
\safemath{\setO}{\mathcal{O}}
\safemath{\setP}{\mathcal{P}}
\safemath{\setQ}{\mathcal{Q}}
\safemath{\setR}{\mathcal{R}}
\safemath{\setS}{\mathcal{S}}
\safemath{\setT}{\mathcal{T}}
\safemath{\setU}{\mathcal{U}}
\safemath{\setV}{\mathcal{V}}
\safemath{\setW}{\mathcal{W}}
\safemath{\setX}{\mathcal{X}}
\safemath{\setY}{\mathcal{Y}}
\safemath{\setZ}{\mathcal{Z}}
\safemath{\emptySet}{\varnothing}
\safemath{\colA}{\mathscr{A}}
\safemath{\colB}{\mathscr{B}}
\safemath{\colC}{\mathscr{C}}
\safemath{\colD}{\mathscr{D}}
\safemath{\colE}{\mathscr{E}}
\safemath{\colF}{\mathscr{F}}
\safemath{\colG}{\mathscr{G}}
\safemath{\colH}{\mathscr{H}}
\safemath{\colI}{\mathscr{I}}
\safemath{\colJ}{\mathscr{J}}
\safemath{\colK}{\mathscr{K}}
\safemath{\colL}{\mathscr{L}}
\safemath{\colM}{\mathscr{M}}
\safemath{\colN}{\mathscr{N}}
\safemath{\colO}{\mathscr{O}}
\safemath{\colP}{\mathscr{P}}
\safemath{\colQ}{\mathscr{Q}}
\safemath{\colR}{\mathscr{R}}
\safemath{\colS}{\mathscr{S}}
\safemath{\colT}{\mathscr{T}}
\safemath{\colU}{\mathscr{U}}
\safemath{\colV}{\mathscr{V}}
\safemath{\colW}{\mathscr{W}}
\safemath{\colX}{\mathscr{X}}
\safemath{\colY}{\mathscr{Y}}
\safemath{\colZ}{\mathscr{Z}}
\safemath{\opA}{\mathbb{A}}
\safemath{\opB}{\mathbb{B}}
\safemath{\opC}{\mathbb{C}}
\safemath{\opD}{\mathbb{D}}
\safemath{\opE}{\mathbb{E}}
\safemath{\opF}{\mathbb{F}}
\safemath{\opG}{\mathbb{G}}
\safemath{\opH}{\mathbb{H}}
\safemath{\opI}{\mathbb{I}}
\safemath{\opJ}{\mathbb{J}}
\safemath{\opK}{\mathbb{K}}
\safemath{\opL}{\mathbb{L}}
\safemath{\opM}{\mathbb{M}}
\safemath{\opN}{\mathbb{N}}
\safemath{\opO}{\mathbb{O}}
\safemath{\opP}{\mathbb{P}}
\safemath{\opQ}{\mathbb{Q}}
\safemath{\opR}{\mathbb{R}}
\safemath{\opS}{\mathbb{S}}
\safemath{\opT}{\mathbb{T}}
\safemath{\opU}{\mathbb{U}}
\safemath{\opV}{\mathbb{V}}
\safemath{\opW}{\mathbb{W}}
\safemath{\opX}{\mathbb{X}}
\safemath{\opY}{\mathbb{Y}}
\safemath{\opZ}{\mathbb{Z}}
\safemath{\opZero}{\mathbb{O}}
\safemath{\identityop}{\opI}
\safemath{\veca}{\bma}
\safemath{\vecb}{\bmb}
\safemath{\vecc}{\bmc}
\safemath{\vecd}{\bmd}
\safemath{\vece}{\bme}
\safemath{\vecf}{\bmf}
\safemath{\vecg}{\bmg}
\safemath{\vech}{\bmh}
\safemath{\veci}{\bmi}
\safemath{\vecj}{\bmj}
\safemath{\veck}{\bmk}
\safemath{\vecl}{\bml}
\safemath{\vecm}{\bmm}
\safemath{\vecn}{\bmn}
\safemath{\veco}{\bmo}
\safemath{\vecp}{\bmp}
\safemath{\vecq}{\bmq}
\safemath{\vecr}{\bmr}
\safemath{\vecs}{\bms}
\safemath{\vect}{\bmt}
\safemath{\vecu}{\bmu}
\safemath{\vecv}{\bmv}
\safemath{\vecw}{\bmw}
\safemath{\vecx}{\bmx}
\safemath{\vecy}{\bmy}
\safemath{\vecz}{\bmz}
\safemath{\veczero}{\bmzero}
\safemath{\vecone}{\bmone}
\safemath{\vecxi}{\bmxi}
\safemath{\veclambda}{\bmlambda}
\safemath{\vecmu}{\bmmu}
\safemath{\vectheta}{\bmtheta}
\safemath{\vecphi}{\bmphi}
\safemath{\vecdelta}{\bmdelta}
\safemath{\matA}{\bA}
\safemath{\matB}{\bB}
\safemath{\matC}{\bC}
\safemath{\matD}{\bD}
\safemath{\matE}{\bE}
\safemath{\matF}{\bF}
\safemath{\matG}{\bG}
\safemath{\matH}{\bH}
\safemath{\matI}{\bI}
\safemath{\matJ}{\bJ}
\safemath{\matK}{\bK}
\safemath{\matL}{\bL}
\safemath{\matM}{\bM}
\safemath{\matN}{\bN}
\safemath{\matO}{\bO}
\safemath{\matP}{\bP}
\safemath{\matQ}{\bQ}
\safemath{\matR}{\bR}
\safemath{\matS}{\bS}
\safemath{\matT}{\bT}
\safemath{\matU}{\bU}
\safemath{\matV}{\bV}
\safemath{\matW}{\bW}
\safemath{\matX}{\bX}
\safemath{\matY}{\bY}
\safemath{\matZ}{\bZ}
\safemath{\matzero}{\bmzero}
\safemath{\matDelta}{\bDelta}
\safemath{\matLambda}{\bLambda}
\safemath{\matPhi}{\bPhi}
\safemath{\matSigma}{\bSigma}
\safemath{\matOmega}{\bOmega}
\safemath{\matTheta}{\bTheta}
\safemath{\matidentity}{\matI}
\safemath{\matone}{\matO}
\safemath{\rnda}{A}
\safemath{\rndb}{B}
\safemath{\rndc}{C}
\safemath{\rndd}{D}
\safemath{\rnde}{E}
\safemath{\rndf}{F}
\safemath{\rndg}{G}
\safemath{\rndh}{H}
\safemath{\rndi}{I}
\safemath{\rndj}{J}
\safemath{\rndk}{K}
\safemath{\rndl}{L}
\safemath{\rndm}{M}
\safemath{\rndn}{N}
\safemath{\rndo}{O}
\safemath{\rndp}{P}
\safemath{\rndq}{Q}
\safemath{\rndr}{R}
\safemath{\rnds}{S}
\safemath{\rndt}{T}
\safemath{\rndu}{U}
\safemath{\rndv}{V}
\safemath{\rndw}{W}
\safemath{\rndx}{X}
\safemath{\rndy}{Y}
\safemath{\rndz}{Z}
\safemath{\rveca}{\bimA}
\safemath{\rvecb}{\bimB}
\safemath{\rvecc}{\bimC}
\safemath{\rvecd}{\bimD}
\safemath{\rvece}{\bimE}
\safemath{\rvecf}{\bimF}
\safemath{\rvecg}{\bimG}
\safemath{\rvech}{\bimH}
\safemath{\rveci}{\bimI}
\safemath{\rvecj}{\bimJ}
\safemath{\rveck}{\bimK}
\safemath{\rvecl}{\bimL}
\safemath{\rvecm}{\bimM}
\safemath{\rvecn}{\bimN}
\safemath{\rveco}{\bomO}
\safemath{\rvecp}{\bimP}
\safemath{\rvecq}{\bimQ}
\safemath{\rvecr}{\bimR}
\safemath{\rvecs}{\bimS}
\safemath{\rvect}{\bimT}
\safemath{\rvecu}{\bimU}
\safemath{\rvecv}{\bimV}
\safemath{\rvecw}{\bimW}
\safemath{\rvecx}{\bimX}
\safemath{\rvecy}{\bimY}
\safemath{\rvecz}{\bimZ}
\safemath{\rvecxi}{\bmxi}
\safemath{\rveclambda}{\bmlambda}
\safemath{\rvecmu}{\bmmu}
\safemath{\rvectheta}{\bmtheta}
\safemath{\rvecphi}{\bmphi}
\safemath{\rmatA}{\bimA}
\safemath{\rmatB}{\bimB}
\safemath{\rmatC}{\bimC}
\safemath{\rmatD}{\bimD}
\safemath{\rmatE}{\bimE}
\safemath{\rmatF}{\bimF}
\safemath{\rmatG}{\bimG}
\safemath{\rmatH}{\bimH}
\safemath{\rmatI}{\bimI}
\safemath{\rmatJ}{\bimJ}
\safemath{\rmatK}{\bimK}
\safemath{\rmatL}{\bimL}
\safemath{\rmatM}{\bimM}
\safemath{\rmatN}{\bimN}
\safemath{\rmatO}{\bimO}
\safemath{\rmatP}{\bimP}
\safemath{\rmatQ}{\bimQ}
\safemath{\rmatR}{\bimR}
\safemath{\rmatS}{\bimS}
\safemath{\rmatT}{\bimT}
\safemath{\rmatU}{\bimU}
\safemath{\rmatV}{\bimV}
\safemath{\rmatW}{\bimW}
\safemath{\rmatX}{\bimX}
\safemath{\rmatY}{\bimY}
\safemath{\rmatZ}{\bimZ}
\safemath{\rmatDelta}{\bimDelta}
\safemath{\rmatLambda}{\bimLambda}
\safemath{\rmatPhi}{\bimPhi}
\safemath{\rmatSigma}{\bimSigma}
\safemath{\rmatOmega}{\bimOmega}
\safemath{\rmatTheta}{\bimTheta}
\newenvironment{textbmatrix}{	\setlength{\arraycolsep}{2.5pt}%
								\big[\begin{matrix}}{\end{matrix}\big]%
								\raisebox{0.08ex}{\vphantom{M}}}
\def\be{\begin{equation}}
\def\ee{\end{equation}}
\def\een{\nonumber \end{equation}}
\def\mat{\begin{bmatrix}}
\def\emat{\end{bmatrix}}
\def\btm{\begin{textbmatrix}}
\def\etm{\end{textbmatrix}}
\def\ba#1\ea{\begin{align}#1\end{align}}
\def\bas#1\eas{\begin{align*}#1\end{align*}}
\def\bs#1\es{\begin{split}#1\end{split}}
\def\bg#1\eg{\begin{gather}#1\end{gather}}
\def\bml#1\eml{\begin{multline}#1\end{multline}}
\def\bi#1\ei{\begin{itemize}#1\end{itemize}}
\newcommand{\lefto}{\mathopen{}\left}
\DeclareMathOperator{\Tr}{\opT r}			% Trace
\DeclareMathOperator{\rank}{rank}			% rank of a matrix
\DeclareMathOperator*{\argmax}{arg\;max}		% arg max
\DeclareMathOperator{\kron}{\otimes}			% Kroneker Product
\DeclareMathOperator{\Prob}{\opP}			% probability of an event
\DeclareMathOperator{\Exop}{\opE}			% expectation operator
\newcommand{\orth}{\perp}					% orthogonal
\newcommand{\Ex}[2]{\ensuremath{\Exop_{#1}\lefto[#2\right]}} 	% expectation
\newcommand{\tp}[1]{\ensuremath{#1^{\text{T}}}} 		% transpose
\newcommand{\herm}[1]{\ensuremath{#1^{\text{H}}}} 	% hermitian transpose
\newcommand{\inv}[1]{\ensuremath{#1^{-1}}} 	% inverse
\newcommand{\pinv}[1]{\ensuremath{#1^{\dagger}}} 	% Moore-Penrose pseudo-inverse
\safemath{\dirac}{\delta}					% Dirac delta
\safemath{\krond}{\dirac}					% Kronecker delta
\safemath{\upto}{\uparrow}
\safemath{\downto}{\downarrow}
\safemath{\iu}{j}							% imaginary unit
\safemath{\ev}{\lambda}						% eigenvalue
\safemath{\hilseqspace}{l^{2}}				% Hilbert sequence space
\newcommand{\banachfunspace}[1]{\setL^{#1}}	% Banach function space
\safemath{\hilfunspace}{\banachfunspace{2}}	% Hilbert function space
\safemath{\SNR}{\textit{SNR}} 				% signal to noise ratio
\safemath{\PAR}{\textit{PAR}} 				% signal to noise ratio
\safemath{\No}{N_0}							% noise spectral density
\safemath{\Es}{E_s}							% energy per symbol
\safemath{\Eb}{E_b}							% energy per bit
\safemath{\EbNo}{\frac{\Eb}{\No}}
\safemath{\EsNo}{\frac{\Es}{\No}}
\DeclareMathOperator{\CHop}{\ensuremath{\opH}} % channel operator
\safemath{\tvir}{\rndh_{\CHop}}				% time-varying impulse response
\safemath{\tvtf}{\rndl_{\CHop}}				% 	-''- transfer function
\safemath{\spf}{\rnds_{\CHop}}				% spreading function
\safemath{\bff}{H_{\CHop}}					% bi-freuqency function
\safemath{\ircf}{r_{h}}						% impulse response correlation fn.
\safemath{\tftvcf}{r_{s}}					% scattering function
\safemath{\tfcf}{r_{l}}						% time-frequency correlation fn.
\safemath{\bfcf}{r_{H}}						% bi-frequency correlation fn.
\safemath{\tcorr}{c_h}						% time-correlation function
\safemath{\scf}{c_{s}}						% spreading function
\safemath{\tfcorr}{c_{l}}					% transfer-function correlation
\safemath{\fcorr}{c_{H}}						% frequency-correlation function
\safemath{\mi}{I}							% mutual information
\safemath{\capacity}{C}						% capacity
\safemath{\normal}{\mathcal{N}}			% normal distribution
\safemath{\jpg}{\mathcal{CN}}			% jointly proper Gaussian
\safemath{\mchain}{\leftrightarrow}		% Markov chain
\safemath{\dB}{\,\mathrm{dB}}
\safemath{\dBm}{\,\mathrm{dBm}}
\safemath{\Hz}{\,\mathrm{Hz}}
\safemath{\kHz}{\,\mathrm{kHz}}
\safemath{\MHz}{\,\mathrm{MHz}}
\safemath{\GHz}{\,\mathrm{GHz}}
\safemath{\s}{\,\mathrm{s}}
\safemath{\ms}{\,\mathrm{ms}}
\safemath{\mus}{\,\mathrm{\text{\textmu}s}}
\safemath{\ns}{\,\mathrm{ns}}
\safemath{\ps}{\,\mathrm{ps}}
\safemath{\meter}{\,\mathrm{m}}
\safemath{\mm}{\,\mathrm{mm}}
\safemath{\cm}{\,\mathrm{cm}}
\safemath{\m}{\,\mathrm{m}}
\safemath{\W}{\,\mathrm{W}}
\safemath{\mW}{\, \mathrm{mW}}
\safemath{\J}{\,\mathrm{J}}
\safemath{\K}{\,\mathrm{K}}
\safemath{\bit}{\,\mathrm{bit}}
\safemath{\nat}{\,\mathrm{nat}}
\safemath{\define}{\triangleq}			% definition
\safemath{\equivalent}{\sim}
\safemath{\distas}{\sim}					% distributed according to
\safemath{\sdiff}{\Delta}				% symmetric set difference
\safemath{\reals}{\mathbb{R}}
\safemath{\positivereals}{\reals_{+}}
\safemath{\integers}{\mathbb{Z}}
\safemath{\posint}{\integers_{+}}
\safemath{\naturals}{\mathbb{N}}
\safemath{\posnaturals}{\naturals_{+}}
\safemath{\complexset}{\mathbb{C}}
\safemath{\rationals}{\mathbb{Q}}
\newcommand*{\fancyrefapplabelprefix}{app}		% Appendix
\newcommand*{\fancyrefthmlabelprefix}{thm}		% Theorem
\newcommand*{\fancyreflemlabelprefix}{lem}		% Lemma
\newcommand*{\fancyrefcorlabelprefix}{cor}		% Corollary
\newcommand*{\fancyrefdeflabelprefix}{def}		% Definition
\newcommand*{\fancyrefproplabelprefix}{prop}		% Proposition
\newcommand*{\fancyrefexmpllabelprefix}{exmpl}
\newcommand*{\fancyrefalglabelprefix}{alg}		% Algorithm
\newcommand*{\fancyreftbllabelprefix}{tbl}		% Algorithm
 \newtheorem{thm}{Theorem}
 \newtheorem{cor}[thm]{Corollary}   % Turned off theorem numbering
 \newtheorem{defi}{Definition}
 \newtheorem{lem}[thm]{Lemma}
 \newtheorem*{remark*}{Remark}
\safemath{\dictab}{[\,\dicta\,\,\dictb\,]}
\safemath{\ysig}{\bmy}
\safemath{\ysighat}{\hat{\ysig}}
\safemath{\ysigdim}{M}
\safemath{\xsig}{\bmx}
\safemath{\xsigdim}{N}
\safemath{\nx}{n_x}
\safemath{\zsig}{\bmz}
\safemath{\zsigdim}{\ysigdim}
\safemath{\rsig}{\bmr}
\safemath{\Adict}{\bA}
\safemath{\Adicttilde}{\widetilde{\Adict}}
\safemath{\Adictdim}{\outputdim\times\xsigdim}
\safemath{\avec}{\bma}
\safemath{\avectilde}{\tilde{\avec}}
\safemath{\Bdict}{\bB}
\safemath{\Bdicttilde}{\widetilde{\Bdict}}
\safemath{\Cdict}{\bC}
\safemath{\cvec}{\bmc}
\safemath{\Ddict}{\bD}
\safemath{\Ddictdim}{\ysigdim\times\xsigdim}
\safemath{\dvec}{\bmd}
\safemath{\Ddicttilde}{\widetilde{\bD}}
\safemath{\Bonb}{\bB}
\safemath{\bvec}{\bmb}
\safemath{\Bonbdim}{\ysigdim\times\ysigdim}
\safemath{\noise}{\bmn}
\safemath{\noisedim}{\ysigim}
\safemath{\err}{\bme}
\safemath{\errdim}{\ysigdim}
\safemath{\errset}{\setE}
\safemath{\nerr}{n_e}
\safemath{\delop}{\bP_\errset}
\safemath{\delopc}{\bP_{{\errset}^c}}
\safemath{\cplxi}{\imath}
\safemath{\cplxj}{\jmath}
\safemath{\dict}{\matD}
\safemath{\inputdim}{N}		% number of columns of dictionary D
\safemath{\outputdim}{M}		%number of rows of dictionary D
\safemath{\sparsity}{S}	%sparsity
\safemath{\inputdimA}{{N_a}}	%total number of elements in dictionary A
\safemath{\inputdimB}{{N_b}}	%total number of elements in dictionary B
\safemath{\elemA}{{n_a}}	%number of elements chosen from dictionary A
\safemath{\elemB}{{n_b}}	%number of elements chosen from dictionary B
\safemath{\resA}{\matR_a}	%restriction map to elements of dictionary A
\safemath{\resB}{\matR_b}	%restriction map to elements of dictionary B
\safemath{\subD}{\matS} %subdictionary
\safemath{\subA}{\matS_a} %subdictionary part of A
\safemath{\subB}{\matS_b} %subdictionary part of B
\safemath{\dicta}{\matA} 	% first subdictionary
\safemath{\dictb}{\matB} 	% second subdictionary
\safemath{\hollowS}{H}
\safemath{\hollowA}{H_a}
\safemath{\hollowB}{H_b}
\safemath{\cross}{Z}
\safemath{\coh}{\mu_d}			% coherence dictionary
\safemath{\coha}{\mu_a}			% coherence first subdictionary
\safemath{\cohb}{\mu_b}			% coherence second subdictionary
\safemath{\mubs}{\nu}	%block sub-coherence
\safemath{\cohm}{\mu_m} %mutual coherence
\safemath{\dictset}{\setD}	% set of dictionaries
\safemath{\dictsetp}{\dictset(\coh,\coha,\cohb)}	% set of dictionaries parametrized
\safemath{\dictsetgen}{\dictset_\text{gen}}
\safemath{\dictsetgenp}{\dictsetgen(\coh)}
\safemath{\dictsetonb}{\dictset_\text{onb}}
\safemath{\dictsetonbp}{\dictsetonb(\coh)}
\safemath{\leftside}{U}
\safemath{\rightsideA}{R_a}
\safemath{\rightsideB}{R_b}
\safemath{\indexS}{\setI_S} %set of indices participating in sub-dictionary S
\safemath{\na}{n_a}			% cardinality of set of linearly independent columns of first dictionary
\safemath{\nb}{n_b}			% cardinality of set of linearly independent columns of second dictionary
\safemath{\coeffa}{p_i}	%coefficients for columns of A
\safemath{\coeffb}{q_j}	%coefficients for columns of B
\safemath{\seta}{\setP}		% set of linearly independent columns of A
\safemath{\setb}{\setQ}     % set of linearly independent columns of B
\safemath{\setw}{\setW}	%set of n largest elements of w
\safemath{\setz}{\setZ}	%set of L-n largest elements of z
\safemath{\cola}{\veca}		% generic element of the dictionary A
\safemath{\colb}{\vecb}		% generic element of the dictionary B
\safemath{\cold}{\vecd}		% generic element of the dictionary D
\safemath{\inputvec}{\vecx} 	%coefficient vector (input)
\safemath{\error}{\vece}	%error vector
\safemath{\noiseout}{\vecz} 	%noisy output vector
\safemath{\inputvecel}{x}
\safemath{\inputveca}{\vecx_a}
\safemath{\inputvecb}{\vecx_b}
\safemath{\outputvec}{\vecy}	%output of Dictionary
\safemath{\lambdamin}{\lambda_{\mathrm{min}}}
\safemath{\elltwo}{\ell_2}
\safemath{\ellone}{\ell_1}
\safemath{\ellzero}{\ell_0}
\safemath{\ellinf}{\ell_\infty}
\safemath{\ellinftilde}{\ell_{\widetilde\infty}}
\safemath{\licard}{Z(\coh,\coha,\cohb)}
\safemath{\xsol}{\hat{x}}
\safemath{\xbord}{x_b}		%Solution at the border
\safemath{\xstat}{x_s}		%Solution stationary in l0 prob
\safemath{\xstatLone}{\tilde{x}_s}
\safemath{\order}{\mathcal{O}} %order notation (big O)
\safemath{\scales}{\Theta} %scales as
\safemath{\ones}{\mathbf{1}} %all ones matrix
\safemath{\zeroes}{\mathbf{0}} %all zeroes matrix
\safemath{\thlone}{\kappa(\coh,\cohb)} %treshold l1 problem
\safemath{\constoneA}{\delta} %constant in l1 theorem to save space
\safemath{\constoneB}{\epsilon} %constant in l1 theorem to save space
\safemath{\nlarge}{L}				   %num large elements
\safemath{\sumlarge}{S_\nlarge}
\safemath{\maxlarger}{P_\nlarge}	   % maximum in Gribonval and Nielsen
\safemath{\Pzero}{\textrm{P0}}	
\safemath{\Pone}{\textrm{P1}}
\safemath{\vecfir}{\vecw}			 % \vecv element of the kernel of the dictionary, \vecv=[\vecfir \vecsec]
\safemath{\vecsec}{\vecz}
\safemath{\elvecfir}{w}              % element of vecfir
\safemath{\elvecsec}{z}				 % element of vecsec
\safemath{\nlargefir}{n}
\safemath{\normout}{\gamma}
\safemath{\auxfun}{h}
\safemath{\supp}{\textrm{supp}}%support
\safemath{\indexa}{\ell}
\safemath{\indexb}{r}
\safemath{\indexc}{i}
\safemath{\indexd}{j}
\safemath{\project}{P}%projector
\definecolor{gray}{HTML}{555555}   
\newcommand*\tinygraycircled[1]{\Circled[inner color=white, fill color=mylightgray1, outer color=mylightgray1]{\textnormal{\footnotesize{#1}}}}
\safemath{\Hj}{\bJ}
\safemath{\bsj}{\bmw}
\safemath{\sj}{w}
\safemath{\Ej}{E_w}
\safemath{\proxg}{\text{prox}_g}
\safemath{\rE}{\rho_{\textsf{\tiny{E}}}}
\safemath{\rP}{\rho_{\textsf{\tiny{P}}}}
\renewcommand{\bSigma}{\mathbf{\Sigma}}
\begin{document}
\bstctlcite{IEEEexample:BSTcontrol} % turn off dash in bibliography for repeated authors

\title{Joint Jammer Mitigation and Data Detection}

\author{\IEEEauthorblockN{Gian Marti and Christoph Studer}
\thanks{A short version of this paper has been presented at IEEE ICC 2023 \cite{marti2023jmd}.}
\thanks{The work of GM and CS was supported in part by an ETH Research~Grant.}
\thanks{GM and CS are with the Department of Information Technology and Electrical Engineering, ETH Zurich, Switzerland; e-mail: gimarti@ethz.ch and  studer@ethz.ch}
}

\maketitle

\begin{abstract}
Multi-antenna (or MIMO) processing is a promising solution to the problem of jammer mitigation. Existing methods mitigate the jammer based on an estimate of its spatial signature that is acquired through a dedicated training phase. This strategy has two main drawbacks: (i) it reduces the communication rate since no data can be transmitted during the training phase and (ii) it can be evaded by smart or multi-antenna jammers that do not transmit during the training phase or that dynamically change their subspace through time-varying beamforming. 
To address these drawbacks, we propose \emph{Joint jammer Mitigation and data Detection (JMD)}, a novel paradigm for MIMO jammer mitigation. The core idea of JMD is to estimate and remove the jammer interference subspace \emph{jointly} with detecting the legitimate transmit data over multiple time slots. Doing so removes the need for a dedicated and rate-reducing training period while being able to mitigate smart and dynamic multi-antenna jammers.
We provide two JMD-type algorithms, SANDMAN and MAED, that differ in the way they estimate the channels of the legitimate transmitters
and achieve different complexity-performance tradeoffs. Extensive simulations demonstrate the efficacy of JMD for jammer mitigation. 

\end{abstract}

\begin{IEEEkeywords}
	Jammer mitigation, joint jammer mitigation and data detection (JMD), MAED, MIMO, SANDMAN,
\end{IEEEkeywords}

\section{Introduction}

Averting the threat of jamming attacks on wireless communication systems is
a problem of vital importance as society becomes ever more reliant on wireless communication infrastructure~\cite{threatvectors2021cisa, spacethreat2022, idr2022unmanned, economist2021satellite}. 
Multi-antenna (or MIMO) processing offers an attractive and effective solution by enabling the mitigation 
of jammers through spatial filtering \cite{pirayesh2022jamming}.
In MIMO-based jammer mitigation, the jammer's
interference is often removed by projecting the receive signal onto the orthogonal complement of the jammer channel's subspace, 
which traditionally is estimated in a dedicated training period 
\cite{shen14a, marti2021snips, yan2016jamming, do18a, akhlaghpasand20a,nguyen2021anti, leost2012interference, jiang2021efficient}.
Such an approach has two main disadvantages: First, a dedicated training period for estimating the jammer's channel reduces the achievable data rate, since no information can be transmitted during the training period. 
Second, a training-period-based approach is ineffective against smart jammers that transmit only at specific 
instances to evade estimation \cite{marti2023maed} or against dynamic jammers that change their subspace through 
time-varying multi-antenna transmit beamforming \cite{hoang2021suppression}.
In order to avoid both of these limitations, we propose a novel paradigm for jammer mitigation with MIMO processing that we call \emph{joint jammer mitigation and data detection~(JMD)}.

\subsection{State of the Art}
The fundamental challenge of jammer mitigation through spatial filtering is that
it requires information about the jammer, 
such as the subspace spanned by the jammer's channel \cite{yan2016jamming, shen14a, marti2021snips, do18a} 
or the covariance matrix of the jammer's interference \cite{marti2021snips, zeng2017enabling, yang2022estimation}.
Existing methods often assume that the jammer transmits permanently and with static beamforming.
Such a behavior would enable one to estimate the required quantities during a dedicated training period in which the legitimate transmitters do not transmit 
\cite{shen14a, marti2021snips, yang2022estimation} or in which they transmit predetermined symbols (pilots) that carry no information~\cite{yan2016jamming,do18a, akhlaghpasand20a,nguyen2021anti,leost2012interference}. 
Relying on such a  static jammer assumption,
the receiver can filter the jammer in the subsequent communication period until the wireless channel changes and 
the jammer training process is started anew.
A smart jammer, however, can easily circumvent such mitigation methods by deliberately violating 
their assumptions: It can pause jamming for the duration of the dedicated training period, 
so that the receiver learns nothing meaningful~\cite{marti2023maed}.
Or, if the jammer has multiple antennas, it can use beamforming to dynamically change its subspace
(as well as the interference covariance matrix at the receiver), so that after the training period, 
the receiver's filter will no longer match the jammer's transmit characteristics 
\cite{hoang2021suppression, hoang2022multiple}.\footnote{Another hard-to-mitigate threat is posed by multiple 
single-antenna distributed jammers, 
which also cause high-rank interference \cite{gulgun2020massive, vinogradova16a}.}

In order to mitigate smart jammers, methods have been suggested that attempt to fool the jammer
into transmitting during the training period by distributing and randomizing the timing of the training 
period \cite{shen14a, yan2016jamming}. However, such methods may not work against jammers that jam only 
intermittently at random time instants.
Similarly, methods have been proposed to mitigate dynamic multi-antenna jammers by perpetually  
estimating their instantaneous subspace~\cite{hoang2021suppression}. However, such methods may be effective only against jammers that change 
their subspace in a sufficiently slow manner. 
Furthermore, all training-period based mitigation methods are subject to an inherent trade-off between the time that 
they dedicate to the attempt of estimating the required jammer characteristics 
and the time that remains for payload data transmission. 

In light of these considerations, a more principled 
approach to MIMO-based jammer mitigation is needed. 
We have recently proposed MAED \cite{marti2023maed}, which, in hindsight, can be viewed as 
the first instance of our JMD paradigm. 
MAED unifies not only jammer mitigation and data detection, but also channel estimation (at the cost of increased 
computational complexity).

\subsection{Contributions}

We propose \emph{joint jammer mitigation and data detection (JMD)}, a novel paradigm for jammer mitigation. 
The core~idea is to estimate and remove the subspace of the jammer interference \emph{jointly}
with detecting the data of an entire transmission frame (or coherence interval). 
JMD removes the need for a dedicated training period, which enables higher data rates because more time is available
for data transmission. Beyond that, considering an entire transmission 
frame at once enables JMD to deal with smart jammers that try to evade mitigation 
(i)~by jamming only at specific instances or (ii) by dynamically changing their subspace through multi-antenna beamforming.

As in \cite{marti2023maed}, we leverage the fact that a jammer cannot leave its subspace within a 
coherence interval (which holds also for multi-antenna jammers with time-varying beamforming).
Going beyond our work~in~\cite{marti2023maed}, we show that this fact can 
be exploited to mitigate the jammer's contamination of the channel estimate even when the channel is 
estimated independently of the subsequent jammer mitigation and data detection. This insight opens the door 
to computationally more efficient signal-processing algorithms for jammer mitigation.\footnote{In particular, 
this insight has enabled the first application-specific integrated circuit (ASIC) implementation of an 
algorithm that mitigates smart jammers in multi-user MIMO \cite{bucheli2024jammer}.}
Moreover, our earlier work in~\cite{marti2023maed} only mitigates single-antenna jammers, 
while this paper shows that JMD is also well suited for the mitigation of multi-antenna jammers---even dynamic ones.
We capitalize on these new insights by proposing the JMD-type algorithm SANDMAN 
(an algorithm that offers all of the jammer mitigation 
capabilities of MAED \cite{marti2023maed}, but at reduced computational complexity), 
and by extending MAED to multi-antenna jammers. 
Furthermore, we present theoretical guarantees for JMD that are stronger than those in \cite{marti2023maed} 
\emph{and} that rely on weaker assumptions. 
Extensive simulations demonstrate the efficacy of our JMD-type algorithms SANDMAN and MAED for mitigating a wide 
range of smart and dynamic single- and multi-antenna jammers.
This paper therefore provides a full and mature account of the JMD paradigm, while the earlier work in \cite{marti2023maed} 
should (in hindsight) be regarded as a precursor that did not yet grasp the full breadth
and depth of the JMD paradigm. Compared to the conference version \cite{marti2023jmd}, this paper adds most of the theoretical 
results (and~\emph{all} of the proofs), more comprehensive discussion, and more extensive evaluation.

\subsection{Notation}
Matrices and column vectors are represented by boldface uppercase and lowercase letters, respectively.
For a matrix~$\bA$, the transpose is $\tp{\bA}$, the conjugate transpose is $\herm{\bA}$, 
the Moore-Penrose pseudoinverse is $\pinv{\bA}$ 
(if $\bA$ is full-rank and tall, then $\pinv{\bA}=\inv{(\herm{\bA}\bA)}\herm{\bA}$, 
and if $\bA$ is full-rank and wide, then $\pinv{\bA}=\herm{\bA}\inv{(\bA\herm{\bA})}$), 
and the Frobenius norm is $\| \bA \|_F$.
The columnspace and rowspace of $\bA$ are $\textit{col}(\bA)$ and $\textit{row}(\bA)$, respectively, 
and their orthogonal complements are $\textit{col}(\bA)^\orth$ and $\text{row}(\bA)^\orth$.
Horizontal concatenation of two matrices $\bA$ and~$\bB$ is denoted by $[\bA,\bB]$; vertical concatenation by $[\bA;\bB]$.
The $N\!\times\!N$ identity matrix is $\bI_N$.
The $\ell_2$-norm of a vector $\bma$ is $\|\bma\|_2$, 
and the $\ell_0$-``norm'' $\|\bma\|_0$ indicates the number of nonzero entries of $\bma$. 
We use $\setC\setN(\mathbf{0},\bC)$ to denote the $N$-dimensional circularly-symmetric 
complex Gaussian distribution with covariance matrix $\bC\in\opC^{N\times N}$. 
Finally, $[1:N]$ denotes the set of integers from $1$ through $N$.

\section{Transmission Model} \label{sec:model}

We focus on mitigating jamming attacks in the flat-fading, block-fading multi-user (MU) MIMO uplink.\footnote{
Our methods are also translatable to other MIMO contexts. For example, an extension to MIMO-OFDM in frequency-selective 
channels is possible but not straightforward, see \cite{marti2023single}.}
We consider the following transmission model:
\begin{align}
	\bmy_k = \bH \bms_k + \Hj \bsj_k + \bmn_k.  \label{eq:model}
\end{align}
Here, $\bmy_k\in\opC^B$ is the BS receive vector at sample instant~$k$,
$\bH \in \opC^{B\times U}$ is the UE channel matrix, 
$\bms_k\in\setS^U$ contains the sample-$k$ transmit symbols of $U$ single-antenna UEs
with constellation $\setS$ and is assumed to be uniformly distributed over $\setS^U$, 
$\bJ \in\opC^{B\times I}$ is the channel matrix of an \mbox{$I$-antenna} jammer,\footnote{The
model in \eqref{eq:model} can also represent distributed 
single- or multi-antenna jammers with a total of $I$ antennas. We consider this case in \fref{sec:smart}.} 
$\bmw_k\in\opC^{I}$ is the sample-$k$ jammer transmit vector, and \mbox{$\bmn_k\sim\setC\setN(\mathbf{0},N_0\bI_B)$}
is circularly-symmetric complex 
Gaussian noise with per-entry variance $N_0$.
We assume that $B\geq U+I$, i.e., the number of receive antennas is greater than or equal to the number of 
antennas of the UEs and the jammer combined, and that the matrix $[\bH,\bJ]$ has full rank $U+I$.
For simplicity, 
we take~$\setS$ to be QPSK, though other constellations would also be possible \cite{marti2023maed}. 
The constellation $\setS$ is scaled to unit symbol power, so that $\Ex{}{\bms_k\herm{\bms_k}}=\bI_U$, and
factors related to power control are absorbed into the channel matrix $\bH$.
In general, the jammer is a dynamic multi-antenna jammer and is able to dynamically change
its jamming activity.
Specifically, the jammer transmits
\begin{align} 
	\bsj_k = \bA_k \tilde\bsj_k, \label{eq:jammer_beamforming}
\end{align}
where, without loss of generality, $\Ex{}{\tilde\bmw_k\herm{\tilde\bmw_k}}=\bI_I$ for all $k$,
and $\bA_k\in\opC^{I\times I}$ is a beamforming matrix that can change arbitrarily over time 
(i.e., $\bA_k$ depends on the sample instant~$k$).
In particular, $\bA_k$ can be the all-zero matrix (no~jamming at sample instant~$k$), 
or some of its rows can be zero (the jammer uses only a subset of its antennas at sample instant $k$), 
or it can be rank-deficient in some other way. 

In JMD, the receiver will process the receive signal in blocks consisting of $K$ sample indices (or channel uses). 
The length~$K$ of these
blocks, which we call \emph{frames}, may be up to the length of a coherence interval, so that the channel 
matrices $\bH$ and $\bJ$ can be assumed to be constant for the duration of a frame. 
It will be useful to define notation for the receive signal of an entire frame:
\begin{align}
	\bY &= \bH\bS + \bJ\bW + \bN. \label{eq:matrix_io}
\intertext{
Here, $\bY=[\bmy_1,\dots,\bmy_K]\in\opC^{B\times K}$ is the receive matrix, and the quantities $\bS\in\opC^{U\times K}$, 
$\bW\in\opC^{I\times K}$, and $\bN\in\opC^{B\times K}$ are defined analogously.
We divide each frame into a \emph{pilot phase} of length $T\geq U$ 
and a \emph{data phase} of length $D$ (such that $T+D=K$). The input-output relations
for these phases are denoted by 
}
	\bY_T &= \bH\bS_T + \bJ\bW_T + \bN_T \label{eq:pilot_io}	
\intertext{
and
}
	\bY\!_D &= \bH\bS_D + \bJ\bW\!_D + \bN_D, \label{eq:data_io}	
\end{align}
respectively. 
Thus, we have $\bY=[\bY_T,\bY\!_D]$, $\bS=[\bS_T,\bS_D]$ with $\bS_T\in\opC^{U\times T}$ 
and $\bS_D\in\setS^{U\times D}$, $\bW=[\bW_T,\bW\!_D]$, and $\bN=[\bN_T,\bN_D]$.
Note that we do not restrict the pilots to the constellation $\setS$. However, we \emph{do} impose an average power constraint 
$\frac{1}{T}\|\bms_T\|_2^2=1$ on all rows $\bms_T$ of $\bS_T$. 
We assume that the columns $\bms_k$ of $\bS_D$ 
are independent of each other, so that $\bS_D\sim\text{Unif}[\setS^{U\times D}]$ (where $\text{Unif}[\setA]$ denotes 
the uniform distribution over the set $\setA$).
In what follows, we use plain letters for the true transmit signals and channels, 
letters with a tilde for optimization variables, and letters with a hat for (approximate) 
solutions to optimization problems. 
For example, $\hat\bS_D$ is an estimate of the data matrix $\bS_D$ obtained by 
solving an optimization problem with respect to a variable~$\tilde\bS_D$.

\section{Joint Jammer Mitigation and Data Detection} \label{sec:jmd}
Existing methods for MIMO jammer mitigation often null a jammer by 
projecting the receive signal onto the orthogonal complement of
the jammer subspace \cite{marti2021snips, yan2016jamming, do18a, hoang2021suppression}. 
That is, they compute $\bP\bmy_k$, where $\bP = \bI_B - \Hj\pinv\Hj$ is the orthogonal projection onto $\textit{col}(\Hj)^\orth$ 
and has the property $\bP\bJ=\mathbf{0}$.
After this projection, the transmit data can be detected using the effective channel matrix $\bH_\bP \triangleq \bP\bH$, 
since
\begin{align}
	\bP\bmy_k &= \bP\bH\bms_k + \bP\Hj\bA_k \tilde \bsj_k + \bP \bmn_k 
	= \bP\bH\bms_k + \bP \bmn_k \\
	&\triangleq \bH_\bP \bms_k + \bmn_{\bP,k}. \label{eq:projected_model}
\end{align}
This strategy mitigates the jammer regardless of which vectors~$\tilde\bsj_k$ and which beamforming matrices
$\bA_k$ the jammer uses (cf.~\eqref{eq:jammer_beamforming}), since $\textit{col}(\bJ\bA_k)\subseteq\textit{col}(\bJ)$ 
for any $\bA_k$. The problem, however, is that of estimating $\bJ$---or $\textit{col}(\bJ)$---when 
the jammer changes $\bA_k$ dynamically such that $\textit{col}(\bJ\bA_k)$ depends on $k$. 
In that case, the receiver will observe different interference subspaces $\textit{col}(\bJ\bA_k)$ at different instants, 
but possibly never the ``pure'' jammer subspace~$\textit{col}(\bJ)$.

To address this problem, we propose JMD.
The key idea is to consider jammer subspace estimation and nulling as well as 
data detection over an entire transmission frame (or coherence interval) \emph{jointly}. 
The jammer subspace is identified with the subspace that is not explainable in terms of UE transmit signals; 
simultaneously, the UE transmit data are estimated while projecting the receive signals onto the 
orthogonal complement of the jammer subspace estimate. 
Mathematically, this can be formulated as the following optimization problem:\footnote{For the moment, 
we neglect estimation of the UE channel matrix $\bH$ and assume perfect channel knowledge. 
Channel estimation is considered in \fref{sec:chest}. In this section, signals from the pilot phase \eqref{eq:pilot_io} are
therefore ignored.}
\begin{align}
	\min_{\substack{\tilde\bS_D \,\in\, \setS^{U\times D},\\ \tilde\bP \,\in\, \mathscr{G}_{B-I}(\opC^B)}} 
	\| \tilde\bP (\bY\!_D -\bH \tilde\bS_D) \|_F^2. \label{eq:jmd_problem}
\end{align}
Here, $\tilde\bS_D$ is the potential estimate of the data matrix $\bS_D$ 
and $\tilde\bP = \bI_B - \tilde\bJ\pinv{\tilde\bJ}$ is the projection onto the orthogonal
complement of the potential jammer subspace estimate $\textit{col}(\tilde\bJ)$, with $\tilde\bJ\in\opC^{B\times I}$.
The range of $\tilde\bP$ is the Grassmannian manifold 
$\mathscr{G}_{B-I}(\opC^B)=\{\bI_B-\tilde\bJ\pinv{\tilde\bJ}:\tilde\bJ\!\in\!\opC^{B\times I}\}$,
i.e., the set~of~orthogonal projections onto $(B\!-\!I)$-dimensional subspaces of~$\opC^B$.
The intuition behind this problem formulation is as follows: Using \eqref{eq:data_io}, 
we can rewrite the objective of \eqref{eq:jmd_problem} as
\begin{align}
	\|\tilde\bP( \bH(\bS_D-\tilde\bS_D) + \bJ\bW\!_D + \bN_D)\|_F^2. \label{eq:pre_intuition}
\end{align}
If we assume that the thermal noise $\bN_D$ is negligible \mbox{($\bN_D=\mathbf{0}$)} compared 
to the legitimate signals $\bH\bS_D$ and to the jammer interference $\bJ\bW\!_D$, 
then we may further simplify \eqref{eq:pre_intuition} to 
\begin{align}\textstyle
	\bigg\|\tilde\bP \begin{bmatrix}\bH,\bJ\end{bmatrix}
	\begin{bmatrix}\bS_D-\tilde\bS_D \\ \bW\!_D \end{bmatrix} \bigg\|_{F \,\displaystyle{.}}^2 \label{eq:intuition}
\end{align}
The justification for this low-noise assumption---which we make repeatedly throughout the paper---is  
that this paper's focus is \emph{jammer mitigation}, not data detection under high thermal noise.
In other words, we are concerned with scenarios where communication performance is not primarily limited by thermal noise, 
but by jamming. 
The objective in \eqref{eq:intuition} is minimized when the argument of the Frobenius norm is the all-zero matrix, 
and the projector $\tilde\bP$ can null a matrix of rank less than or equal to~$I$. 
If $\bJ\bW\!_D$ has rank $I$, this necessitates that $\tilde\bS_D=\bS_D$
(by assumption, the matrix $[\bH,\bJ]$ has full rank $U+I$; see \fref{sec:model}).
To make this intuition more rigorous, we introduce the notion of \emph{eclipsing}, which describes
a certain relation between the jammer transmit matrix $\bW\!_D$ and the error matrix $\bS_D-\tilde\bS_D$ 
corresponding to a potential symbol estimate~$\tilde\bS_D$. 
In this section, we define eclipsing while assuming perfect channel state information (CSI).
The more realistic case, in which CSI is estimated using pilots, is considered in~\fref{sec:chest}. 

\begin{defi}[Eclipsing with perfect CSI] \label{def:csi_eclipsing}
\mbox{The jammer} is eclipsed in a frame if there exists a matrix 
\mbox{$\tilde\bS_D\in\setS^{U\times D}\!\setminus\!\{\bS_D\}$} so that the 
virtual interference matrix $\bSigma(\tilde\bS_D)\triangleq[\bE(\tilde\bS_D); \bW\!_D]$ 
is a matrix of rank less than or equal to $I$, 
where $\bE(\tilde\bS_D) \triangleq \bS_D-\tilde\bS_D$ is the error matrix. 
\end{defi}

To say that a jammer is eclipsed therefore means that there exists an incorrect (i.e., distinct from $\bS_D$) 
estimate $\tilde\bS_D$ of the data matrix $\bS_D$ such that the virtual interference matrix $\mathbf{\Sigma}(\tilde\bS_D)$,
which appears in \eqref{eq:intuition}, has rank less than or equal to~$I$.
Our first theoretical result is that, under the assumption of negligible thermal noise, eclipsing is the only aspect 
that might hinder successful jammer nulling and data recovery:

\begin{thm}\label{thm:perfect_csi:correct}
	If the thermal noise is zero \textnormal{($\bN_D=\boldsymbol{0}$)}, and if the jammer is not eclipsed, 
	then the problem in \eqref{eq:jmd_problem} has the unique solution 
	$\{\hat\bP,\hat\bS_D\} = \{\bI_B - \bJ\pinv{\bJ}, \bS_D\}$.
\end{thm}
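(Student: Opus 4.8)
The plan is to split the claim into two parts: (i) the optimal objective value of \eqref{eq:jmd_problem} equals zero, and (ii) $\{\bI_B-\bJ\pinv\bJ,\bS_D\}$ is the unique feasible point attaining it. Part (i) is immediate: the point $\{\bI_B-\bJ\pinv\bJ,\bS_D\}$ is feasible (take $\tilde\bJ=\bJ$), and evaluating the objective there gives, via $\bN_D=\mathbf{0}$ and \eqref{eq:data_io}, the value $\|(\bI_B-\bJ\pinv\bJ)\bJ\bW_D\|_F^2=0$ since $\bJ\pinv\bJ\bJ=\bJ$. Hence the minimum is $0$, and only feasible points with zero objective can be minimizers.

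Before tackling uniqueness, I would extract a key consequence of the ``not eclipsed'' hypothesis: the jammer transmit matrix $\bW_D$ must have full row rank $I$. Indeed, if $\rank(\bW_D)\le I-1$, then any $\tilde\bS_D\in\setS^{U\times D}$ that differs from $\bS_D$ in a single column (such a $\tilde\bS_D$ exists since $\setS$ has more than one element) produces a rank-one error $\bE(\tilde\bS_D)$, so that $\bSigma(\tilde\bS_D)=[\bE(\tilde\bS_D);\bW_D]$ has rank at most $1+(I-1)=I$ --- i.e., the jammer would be eclipsed, a contradiction. Since $[\bH,\bJ]$ has full column rank $U+I$ (hence so does $\bJ$), it follows that $\bJ\bW_D$ has rank $I$, and therefore $\textit{col}(\bJ\bW_D)=\textit{col}(\bJ)$.

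For uniqueness, let $\{\tilde\bP,\tilde\bS_D\}$ be any minimizer. By part (i), $\tilde\bP(\bY_D-\bH\tilde\bS_D)=\mathbf{0}$, and using $\bN_D=\mathbf{0}$ together with \eqref{eq:data_io} this reads $\tilde\bP\big(\bH\bE(\tilde\bS_D)+\bJ\bW_D\big)=\mathbf{0}$. Writing $\tilde\bP=\bI_B-\tilde\bJ\pinv{\tilde\bJ}$ with $\tilde\bJ\in\opC^{B\times I}$, the matrix $\tilde\bJ\pinv{\tilde\bJ}$ is the orthogonal projection onto $\textit{col}(\tilde\bJ)$, so $\tilde\bP$ is the orthogonal projection onto $\textit{col}(\tilde\bJ)^\orth$, its kernel is $\textit{col}(\tilde\bJ)$ (of dimension at most $I$), and $\tilde\bP\bM=\mathbf{0}$ iff $\textit{col}(\bM)\subseteq\ker\tilde\bP$. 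Applying this with $\bM=\bH\bE(\tilde\bS_D)+\bJ\bW_D=[\bH,\bJ]\,\bSigma(\tilde\bS_D)$, and using that left-multiplication by the full-column-rank matrix $[\bH,\bJ]$ preserves rank, I obtain $\rank(\bSigma(\tilde\bS_D))=\rank(\bM)\le\dim\ker\tilde\bP\le I$. If $\tilde\bS_D\neq\bS_D$, this is precisely the eclipsing condition and contradicts the hypothesis; hence $\tilde\bS_D=\bS_D$ and $\bE(\tilde\bS_D)=\mathbf{0}$. The zero-objective condition then collapses to $\tilde\bP\bJ\bW_D=\mathbf{0}$, i.e., $\textit{col}(\bJ)=\textit{col}(\bJ\bW_D)\subseteq\ker\tilde\bP$; combined with $\dim\ker\tilde\bP\le I=\dim\textit{col}(\bJ)$ this forces $\ker\tilde\bP=\textit{col}(\bJ)$, hence $\tilde\bP$ is the unique orthogonal projection onto $\textit{col}(\bJ)^\orth$, namely $\bI_B-\bJ\pinv\bJ$. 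This establishes both parts.

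The step I expect to carry the real content --- rather than routine pseudoinverse bookkeeping and the rank identity for vertical stacking --- is deducing $\rank(\bW_D)=I$, equivalently $\textit{col}(\bJ\bW_D)=\textit{col}(\bJ)$, from the ``not eclipsed'' assumption via rank-one perturbations of $\bS_D$. Without it, the constraint $\tilde\bP\bJ\bW_D=\mathbf{0}$ would only yield $\ker\tilde\bP\supseteq\textit{col}(\bJ\bW_D)$, which does not pin down $\tilde\bP$; it is exactly the fact that an un-eclipsed jammer is ``active enough'' over the data phase that makes the projector $\hat\bP$ identifiable.
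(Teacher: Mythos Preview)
Your proof is correct and follows essentially the same route as the paper's: show the objective vanishes at $\{\bI_B-\bJ\pinv\bJ,\bS_D\}$, then use that $\tilde\bP$ can only annihilate rank-$\leq I$ matrices together with the full column rank of $[\bH,\bJ]$ to force $\rank(\bSigma(\tilde\bS_D))\le I$, invoke non-eclipsing to conclude $\tilde\bS_D=\bS_D$, and finally use $\rank(\bW_D)=I$ to pin down $\tilde\bP$. The one place you are more explicit than the paper is the rank-one perturbation argument showing that non-eclipsing implies $\rank(\bW_D)=I$; the paper states this as an ``immediate consequence'' in the body text preceding the theorem and simply invokes it in the proof, whereas you spell it out.
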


(All proofs are in the appendix.)
The question is therefore: What are the chances that a jammer is---or is not---eclipsed?
This question turns out to be, in general, quite complex.  
For instance, an immediate consequence of \fref{def:csi_eclipsing} is that any jammer with $\textit{rank}(\bW\!_D)<I$
is eclipsed. The ``problem'' with such a jammer that only jams with rank smaller than $I$ is that a projection 
$\tilde\bP\in\mathscr{G}_{B-I}(\opC^B)$ can null more dimensions than are occupied by interference, 
and so can also null any data detection errors $\bS_D-\tilde\bS_D$ of rank up to $I-\textit{rank}(\bW\!_D)$, 
which jeopardizes accurate data detection. In that sense, it would therefore be preferable to identify 
the parameter $I$ not with the number of physical jammer antennas $I$, but with the effective rank of the jammer interference, 
$\textit{rank}(\bW\!_D)$. (Any receive interference $\bJ\bW\!_D$ with $\textit{rank}(\bW\!_D)<I$ can be rewritten 
also as a product $\bJ'\bW_{\!D}'$ for some $\bJ'\in\opC^{B\times\textit{rank}(\bW\!_D)}$.)
Nevertheless, we have decided to identify $I$ with the number of jammer antennas because this is more natural from an operational point
of view. We note, however, that our working assumption will therefore be that the jammer transmits with full antenna rank, 
i.e., \mbox{$\textit{rank}(\bW\!_D)=I$.}
As a consequence of this difficulty, we can only give strong theoretical guarantees for the probability of eclipsing for single-antenna 
jammers.\footnote{Also, it is evident from our proofs that the combinatorics involved in the analysis are 
quite involved already for single-antenna jammers, and we expect them to significantly increase in difficulty for multi-antenna~jammers.}
But we emphasize that our empirical results demonstrate that JMD works well also for multi-antenna jammers; see \fref{sec:results}.
For single-antenna jammers, for which we write $\bJ\bW\!_D$ as $\bmj\tp{\bmw_D}$, 
the probability of eclipsing turns out to depend on the number of symbols that the jammer jams 
\mbox{(i.e., on $\|\bmw_D\|_0$).}
A jammer that never jams is guaranteed to eclipse; a jammer that jams perpetually eclipses only with
minuscule probability:

\begin{thm} \label{thm:csi_eclipsing}
	The probability that a single-antenna jammer with transmit signal~$\bmw_D$ eclipses is upper bounded by
	\vspace{-1mm}
	\begin{align}
		p_e(\bmw_D) = 
		\begin{cases}
			1 &\text{if } \bmw_D = \mathbf{0} \\
			1-\left(1-\frac{2^{\|\bmw_D\|_0}-1}{4^{\|\bmw_D\|_0-1}}\right)^U & \text{else.}
		\end{cases}
\label{eq:csi_eclipsing} 
	\end{align}
	If $\|\bmw_D\|_0\gg0$, then $p_e(\bmw_D) \approx \tilde p_e(\bmw_D)\triangleq4U\cdot2^{-\|\bmw_D\|_0}$.
\end{thm}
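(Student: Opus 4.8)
The plan is to bound the eclipsing probability by a union bound over all possible error vectors and, crucially, over all pairs of UEs whose error vectors are linearly related. First I would unpack Definition \ref{def:csi_eclipsing} in the single-antenna case: the jammer $\bmj\tp{\bmw_D}$ eclipses if and only if there is an incorrect estimate $\tilde\bS_D\neq\bS_D$ such that $\bSigma(\tilde\bS_D)=[\bE(\tilde\bS_D);\tp{\bmw_D}]$ has rank $\leq 1$. Since $\tp{\bmw_D}\neq\mathbf{0}$ whenever $\bmw_D\neq\mathbf{0}$, a rank-$\leq1$ matrix $[\bE;\tp{\bmw_D}]$ forces every row of $\bE$ to be a scalar multiple of $\tp{\bmw_D}$; in particular every row of $\bE(\tilde\bS_D)$ is supported within $\supp(\bmw_D)$, and any two nonzero rows of $\bE$ are scalar multiples of each other. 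Writing $\bme_u$ for the $u$th row of $\bE$ (so $\bme_u = (\bS_D)_{u,:}-(\tilde\bS_D)_{u,:}$ with entries drawn from the difference set $\setS-\setS$), eclipsing for row $u$ requires $\bme_u = c_u\tp{\bmw_D}$ for some scalar $c_u$, and — since $\tilde\bS_D$ must differ from $\bS_D$ — at least one $\bme_u$ must be nonzero.

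The case $\bmw_D=\mathbf{0}$ is immediate: any $\tilde\bS_D\neq\bS_D$ makes $\bSigma$ the zero matrix, which has rank $0\leq1$, so $p_e=1$. For $\bmw_D\neq\mathbf{0}$, I would proceed per-UE and use independence of the rows of $\bS_D$. Fix a UE $u$ and condition on $\bmw_D$; the probability (over $\bS_D$, hence over $\bme_u$ since $\tilde\bS_D$ is chosen adversarially) that there exists a \emph{nonzero} valid $\bme_u = c_u\tp{\bmw_D}$ is what I need to estimate. For QPSK, the per-entry difference $s-\tilde s$ for $s,\tilde s\in\setS$ takes value $0$ with the right probability and each nonzero value with probability $1/4$, independently across the $\|\bmw_D\|_0$ coordinates in the support (on coordinates outside the support $\bme_u$ must be zero, which is automatically available by choosing $\tilde s = s$ there). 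The number of nonzero vectors of the form $c\tp{\bmw_D}$ that are simultaneously realizable as a QPSK error pattern on the support is at most $2^{\|\bmw_D\|_0}-1$ (ratios of QPSK-difference entries lie in a small set), and each such specific nonzero pattern is hit with probability $4^{-\|\bmw_D\|_0}$ by the i.i.d.\ coordinates; a union bound over these patterns gives per-UE eclipsing probability at most $(2^{\|\bmw_D\|_0}-1)/4^{\|\bmw_D\|_0-1}$ after the normalization bookkeeping. Then, since the jammer is eclipsed iff \emph{some} UE admits a nonzero valid $\bme_u$ (the other UEs can take $\bme_u=0$), and the rows are independent, the complement probability factorizes: $1-p_e \geq (1-(2^{\|\bmw_D\|_0}-1)/4^{\|\bmw_D\|_0-1})^U$, which is exactly the stated bound. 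Finally, the asymptotic $\tilde p_e(\bmw_D)=4U\cdot 2^{-\|\bmw_D\|_0}$ follows by noting $(2^{m}-1)/4^{m-1}\approx 4\cdot 2^{-m}$ for large $m=\|\bmw_D\|_0$ and using $1-(1-x)^U\approx Ux$ for small $x$.

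The main obstacle I anticipate is getting the counting of realizable error patterns tight enough to land exactly the constant in \eqref{eq:csi_eclipsing} rather than something merely of the same order: one has to argue carefully that the set of vectors $\{c\tp{\bmw_D}\}$ that are attainable as QPSK-difference patterns on the support has size at most $2^{\|\bmw_D\|_0}-1$, which requires examining which scalars $c$ keep every entry $c\,w_i$ inside the difference set $\setS-\setS=\{0,\pm2,\pm2\iu,\ldots\}$ — essentially a claim about the structure of QPSK differences — and then matching this with the per-pattern probability $4^{-\|\bmw_D\|_0}$. A secondary subtlety is making precise the adversarial "there exists $\tilde\bS_D$" quantifier in Definition \ref{def:csi_eclipsing} as a statement about the random vector $\bme_u$ alone, which is where the i.i.d.\ structure of $\bS_D$ and the coordinate-wise independence of QPSK differences are used; once that reduction is clean, the union bound and the product over UEs are routine.
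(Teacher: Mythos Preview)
Your high-level decomposition is right and matches the paper: reduce eclipsing to a per-row event (some UE row $\bme_u$ nonzero and collinear with $\bmw_D$), use i.i.d.\ rows of $\bS_D$ to get the product $(1-q(\bmw_D))^U$, and finish with the binomial approximation. The gap is entirely in your computation of the per-UE probability $q(\bmw_D)$.

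Your proposed count---``at most $2^{\|\bmw_D\|_0}-1$ patterns $c\,\bmw_D$, each achievable with probability $4^{-\|\bmw_D\|_0}$''---does not hold. First, the number of nonzero scalars $c$ with $c\,\bmw_D\in(\Delta\setS)^{\|\bmw_D\|_0}$ is at most $|\Delta\setS|-1=8$, independent of $\|\bmw_D\|_0$; it is not $2^{\|\bmw_D\|_0}-1$. Second, the achievability probability of a fixed nonzero difference value is not uniformly $1/4$: with $s$ uniform over $\setS$ and $\tilde s$ adversarial, the axis values $\pm\sqrt2,\pm i\sqrt2$ are achievable with probability $1/2$, while the corner values $\pm\sqrt2(1\pm i)$ are achievable with probability $1/4$. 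If you plug in your numbers you get $(2^m-1)4^{-m}$, which is \emph{smaller} than the exact value $(2^m-1)4^{-(m-1)}$ for the worst-case jammer---so no amount of ``normalization bookkeeping'' can repair it into a valid upper bound. A corrected union bound over the eight scalars gives $4\cdot 2^{-m}+4\cdot 4^{-m}$, which is a valid but strictly looser bound than the one stated in the theorem.

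What the paper does instead is: (i) argue that the eclipsing-optimal $\bmw_D$ has all nonzero entries in $\{\alpha,i\alpha,-\alpha,-i\alpha\}$ (precisely because axis differences are achievable with probability $1/2$ rather than $1/4$), hence without loss of generality all nonzero entries of $\bmw_D$ equal $\sqrt2$; and then (ii) compute $q$ \emph{exactly} by counting realizations of $\bms$ (restricted to the support of $\bmw_D$) that admit a collinear $\bme$. The key observation for (ii) is that such $\bms$ are exactly those whose support coordinates all lie in a common half-plane of $\setS$: four half-planes times $2^m$ sequences each, minus four double-counted constant sequences, gives $4\cdot 2^m-4$ out of $4^m$, i.e., $q=(2^m-1)/4^{m-1}$. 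This half-plane counting is the missing idea in your sketch; the union bound over error patterns cannot reproduce the exact constant because the events ``$c\,\bmw_D$ is achievable'' overlap.
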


\fref{fig:collision_probability} depicts this probability as a function of the number of jammed symbols $\|\bmw_D\|_0$
for different numbers of UEs~$U$.

\section{Channel Estimation} \label{sec:chest}
In the previous section, we have assumed that the BS knows the UE channel matrix $\bH$ perfectly.
In practice, however, $\bH$ has to be estimated. This is typically done using pilots. 
In the presence of jamming, the obtained estimate can be contaminated
\cite{clancy2011efficient, sodagari2012efficient, shahriar2012performance, nguyen2023pilot}. 
We now propose two different approaches for jammer-resilient channel estimation in combination with~JMD.

\subsection{Linear Channel Estimation} \label{sec:indep}

The first approach consists of simply using a classical linear channel 
estimator such as least squares (LS) or linear minimum mean square error (LMMSE). 
It may not be immediately apparent why these estimates should be jammer-resilient. 
In fact, there is nothing inherently jammer-resilient about them; they become jammer-resilient only when 
combined with JMD.

The underlying mechanism works as follows: Let the pilot phase be given as in~\eqref{eq:pilot_io}, with 
a square ($T=U$) or wide ($T>U$) pilot matrix, and let the corresponding data phase be 
given as in~\eqref{eq:data_io}. 
If a linear channel estimator is used, then the channel estimation error due to the jammer interference is 
contained in $\textit{col}(\bJ)$. For instance, in the case of LS channel estimation, we have
\begin{align}
	\hat\bH &= \bY_T \pinv{\bS_T} 
	= \bH + \underbrace{\bJ \bW_T\pinv{\bS_T}}_{\in\,\textit{col}(\bJ)\hspace{-7.5mm}} + \bN_T\pinv{\bS_T}. \label{eq:pilot_contamination}
\end{align}
Therefore, if we simply plug the jammer-contaminated linear channel estimate $\hat\bH$ from \eqref{eq:pilot_contamination}
into \eqref{eq:jmd_problem} as follows,
\begin{align}
	\min_{\substack{\tilde\bS_D \,\in\, \setS^{U\times D},\\ \tilde\bP \,\in\, \mathscr{G}_{B-I}(\opC^B)}} 
	\| \tilde\bP (\bY\!_D - \hat\bH \tilde\bS_D) \|_F^2, \label{eq:jmd_problem2}
\end{align}
then the ``true'' jammer-nulling projection $\bP = \bI_B - \Hj\pinv\Hj$ also removes the jammer contamination of 
the channel estimate, since $\bP\hat\bH=\bP\bH$.
This implies that JMD can naturally incorporate jammer-resilient channel estimation, without requiring
extra countermeasures to protect channel estimation from jammer interference.

\begin{figure}[tp]
\centering
\includegraphics[height=4cm]{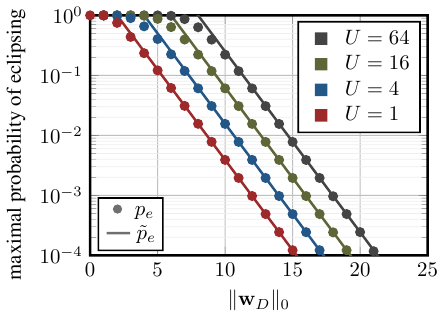}
\vspace{-3mm}
\caption{The bound $p_e$ (and its approximation $\tilde p_e$) on the probability that a single-antenna jammer eclipses vs.
the number of jammed symbols $\|\bmw\|_0$, for different numbers $U$ of~UEs.
The bound on the probability of eclipsing decreases exponentially in the number of jammed symbols. 
}
\vspace{-4mm}
\label{fig:collision_probability}
\end{figure}

To characterize the probability of successful data recovery for JMD in combination with linear channel estimation, 
we provide the following adapted (compared to \fref{def:csi_eclipsing}) definition of eclipsing:\footnote{
This definition of eclipsing coincides with \cite[Def.\,1]{marti2023maed} for $I\!=\!1$ and so can be seen 
as a generalization of the concept of eclipsing to multi-antenna~jammers.}

\begin{defi}[Eclipsing with channel estimation] \label{def:eclipse_chest}
The jammer is eclipsed in a given frame if there exists a matrix 
$\tilde\bS_D\in\setS^{U\times D}\setminus\!\{\bS_D\}$, so that
\mbox{$\mathbf{\Sigma}(\tilde\bS_D)\triangleq[\bS_D-\tilde\bS_D;$} $\bW\!_D - \bW_T\pinv{\bS_T}\tilde\bS_D]$ is a matrix of rank less than or equal to $I$. 
\end{defi}

With this modified definition of eclipsing, we obtain the following guarantee 
for JMD with linear channel estimation:

\begin{thm} \label{thm:independent_chest:correct}
	If the thermal noise is zero \textnormal{($\bN=\boldsymbol{0}$)}, and if the jammer is not eclipsed, 
	then the problem in \eqref{eq:jmd_problem2} has the unique solution 
	$\{\hat\bP,\hat\bS_D\} = \{\bI_B - \bJ\pinv{\bJ}, \bS_D\}$.
\end{thm}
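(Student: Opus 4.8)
The plan is to reuse the template of the proof of \fref{thm:perfect_csi:correct}, now with the channel‑estimation virtual interference matrix of \fref{def:eclipse_chest}. Since $\bN=\boldsymbol{0}$ we have $\bN_T=\bN_D=\boldsymbol{0}$, so \eqref{eq:pilot_contamination} gives $\hat\bH=\bH+\bJ\bW_T\pinv{\bS_T}$ and \eqref{eq:data_io} gives $\bY\!_D=\bH\bS_D+\bJ\bW\!_D$. Substituting both into the residual of \eqref{eq:jmd_problem2} yields
\begin{align*}
\bY\!_D-\hat\bH\tilde\bS_D &= \bH(\bS_D-\tilde\bS_D)+\bJ\big(\bW\!_D-\bW_T\pinv{\bS_T}\tilde\bS_D\big) \\
&= [\bH,\bJ]\,\bSigma(\tilde\bS_D),
\end{align*}
so the objective equals $\| \tilde\bP[\bH,\bJ]\,\bSigma(\tilde\bS_D) \|_F^2\geq0$. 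Because $[\bH,\bJ]$ has full rank $U+I$, the submatrix $\bJ$ has rank $I$, hence $\bI_B-\bJ\pinv{\bJ}\in\mathscr{G}_{B-I}(\opC^B)$; it annihilates $\bJ$, and therefore also annihilates $[\bH,\bJ]\,\bSigma(\bS_D)=\bJ\big(\bW\!_D-\bW_T\pinv{\bS_T}\bS_D\big)\in\textit{col}(\bJ)$. Thus $\{\bI_B-\bJ\pinv{\bJ},\bS_D\}$ attains the value $0$ and is a global minimizer; what remains is to show it is the only one.

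First, for uniqueness of the data matrix: let $\{\tilde\bP,\tilde\bS_D\}$ be any minimizer, so $\tilde\bP[\bH,\bJ]\,\bSigma(\tilde\bS_D)=\boldsymbol{0}$, i.e., $\textit{col}\big([\bH,\bJ]\,\bSigma(\tilde\bS_D)\big)\subseteq\textit{null}(\tilde\bP)$, which has dimension $I$ since $\tilde\bP\in\mathscr{G}_{B-I}(\opC^B)$. As left‑multiplication by the full‑column‑rank matrix $[\bH,\bJ]$ preserves rank, $\textit{rank}\big(\bSigma(\tilde\bS_D)\big)\leq I$; if $\tilde\bS_D\neq\bS_D$ this is precisely an eclipse in the sense of \fref{def:eclipse_chest}, contradicting the hypothesis. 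Hence $\tilde\bS_D=\bS_D$, and the residual at the minimizer is $\bJ\bG$ with $\bG\triangleq\bW\!_D-\bW_T\pinv{\bS_T}\bS_D\in\opC^{I\times D}$.

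For uniqueness of the projector, the key is $\textit{rank}(\bG)=I$. If instead $\textit{rank}(\bG)\leq I-1$, choose $\tilde\bS_D$ that agrees with $\bS_D$ in every column but one (possible since $\setS$ has more than one element), so that $\bE\triangleq\bS_D-\tilde\bS_D$ has rank one; a short computation gives $\bSigma(\tilde\bS_D)=[\boldsymbol{0};\bG]+[\bI_U;\bW_T\pinv{\bS_T}]\bE$, a sum of a rank‑$(I-1)$ and a rank‑one matrix, whence $\textit{rank}\big(\bSigma(\tilde\bS_D)\big)\leq I$ --- again an eclipse, a contradiction. With $\textit{rank}(\bG)=I$ we get $\textit{col}(\bJ\bG)=\textit{col}(\bJ)$, so $\tilde\bP\bJ\bG=\boldsymbol{0}$ forces $\textit{col}(\bJ)\subseteq\textit{null}(\tilde\bP)$; both spaces have dimension $I$, hence $\textit{null}(\tilde\bP)=\textit{col}(\bJ)$, and since an orthogonal projector is determined by its null space, $\tilde\bP=\bI_B-\bJ\pinv{\bJ}$. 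The step I expect to be the main obstacle is exactly this one: \fref{def:eclipse_chest} says nothing directly about $\bSigma(\bS_D)$ (its witness must differ from $\bS_D$), so the rank of the residual interference $\bJ\bG$ must be pinned down indirectly via a nearby \emph{admissible} perturbation, which is what the single‑column rank‑one error accomplishes; the remaining ingredients (rank invariance under multiplication by full‑column‑rank matrices, subadditivity of rank, and that an orthogonal projector is fixed by its range) are routine.
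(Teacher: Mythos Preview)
Your proof is correct and follows the same approach as the paper: rewrite the residual as $\tilde\bP[\bH,\bJ]\bSigma(\tilde\bS_D)$ and then argue exactly as in \fref{thm:perfect_csi:correct} with the eclipsing notion of \fref{def:eclipse_chest}. In fact, the paper only sketches this reduction and omits the details, whereas you spell out the step the paper leaves implicit---namely, that non-eclipsing forces $\textit{rank}(\bG)=I$ via the rank-one perturbation $\bSigma(\tilde\bS_D)=[\boldsymbol{0};\bG]+[\bI_U;\bW_T\pinv{\bS_T}]\bE$---so your argument is, if anything, more complete than the paper's.
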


As in \fref{sec:jmd}, the question becomes: what is the probability that 
a jammer eclipses?\footnote{As in \fref{sec:jmd}, we can characterize this probability analytically for 
the single-antenna jammer case, 
and we refer to the simulations of \fref{sec:results} for showing the efficacy against multi-antenna jammers empirically.}
The answer depends fundamentally on whether or not the jammer knows the pilots
that are being used by the UEs. Let us first consider the case where the jammer knows 
the pilots (as well as the transmit constellation $\setS$).

\begin{thm} \label{thm:independent_chest_eclipsing_nopilot}
If a single-antenna jammer knows at least one pilot sequence (i.e., one row of $\bS_T$)
as well as the constellation~$\setS$, then it can jam all frame symbols 
(i.e., $\|\bmw_T\|_0=T$ and $\|\bmw_D\|_0=D$) while eclipsing with probability $1-4^{-D}\approx1$. 
\end{thm}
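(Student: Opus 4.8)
The plan is to exhibit, for any jammer that knows a pilot, an explicit pilot/data transmit signal together with an explicit eclipsing witness $\tilde\bS_D$, and then to bound (over the random data $\bS_D$) the probability that the witness is admissible. By relabeling the UEs we may assume the jammer knows the \emph{first} row of $\bS_T$, written $\tp{\bme_1}\bS_T$ with $\bme_1\in\opC^U$ the first unit vector; recall that linear channel estimation implicitly requires $\bS_T$ to have full row rank $U$, so that $\bS_T\pinv{\bS_T}=\bI_U$. The crucial point is that this identity lets the jammer align the pilot-contamination term $\bmj\tp{\bmw_T}\pinv{\bS_T}$ of \eqref{eq:pilot_contamination} with the direction $\tp{\bme_1}$ using \emph{only} the single pilot row it knows: if it transmits $\tp{\bmw_T}=\tp{\bme_1}\bS_T$ during the pilot phase, then $\tp{\bmw_T}\pinv{\bS_T}=\tp{\bme_1}\bS_T\pinv{\bS_T}=\tp{\bme_1}$, regardless of the remaining (unknown) pilot rows. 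During the data phase the jammer transmits $\bmw_D=\bmv$ for an arbitrary fixed sequence $\bmv\in\setS^{D}$ (it knows $\setS$).

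With this strategy fixed, the natural witness is the matrix $\tilde\bS_D$ that agrees with $\bS_D$ in every row except the first, whose first row is set to $\tp{\bmv}$; equivalently $\tilde\bS_D=\bS_D-\bme_1\bigl(\tp{\bme_1}\bS_D-\tp{\bmv}\bigr)$. Plainly $\tilde\bS_D\in\setS^{U\times D}$, and $\tilde\bS_D\neq\bS_D$ exactly when $\bmv$ differs from the first row of $\bS_D$. I would then simply read off $\bSigma(\tilde\bS_D)$ from \fref{def:eclipse_chest}: its top $U\times D$ block is $\bS_D-\tilde\bS_D=\bme_1\tp{(\bmb-\bmv)}$, where $\bmb\in\opC^D$ is the first row of $\bS_D$ written as a column, hence this block is nonzero only in its first row; and its bottom row is $\tp{\bmw_D}-\tp{\bmw_T}\pinv{\bS_T}\tilde\bS_D=\tp{\bmv}-\tp{\bme_1}\tilde\bS_D=\tp{\bmv}-\tp{\bmv}=\mathbf{0}$, since $\tp{\bme_1}\tilde\bS_D$ is precisely the first row of $\tilde\bS_D$, namely $\tp{\bmv}$. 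Thus $\bSigma(\tilde\bS_D)$ has a single possibly-nonzero row, so $\textit{rank}\bigl(\bSigma(\tilde\bS_D)\bigr)\leq 1=I$, and therefore whenever $\bmv$ differs from the first row of $\bS_D$ the matrix $\tilde\bS_D$ is an admissible eclipsing witness in the sense of \fref{def:eclipse_chest}.

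It then remains to bound the failure event. Since $\bS_D\sim\mathrm{Unif}[\setS^{U\times D}]$ its first row is uniform on $\setS^{D}$, and as $|\setS|=4$ the probability that it equals $\tp{\bmv}$ is $4^{-D}$; hence the jammer is eclipsed with probability at least $1-4^{-D}\approx 1$. The ``jams every symbol'' claim is then immediate: $\bmw_D=\bmv\in\setS^{D}$ has no zero entries so $\|\bmw_D\|_0=D$, and $\|\bmw_T\|_0=\|\tp{\bme_1}\bS_T\|_0=T$ whenever the known pilot sequence has full support, as is standard. I do not anticipate a genuine obstacle in filling this in: the only nonroutine step is \emph{discovering} the construction---seeing that retransmitting the known pilot row forces the LS-contamination direction to be $\tp{\bme_1}$ through $\bS_T\pinv{\bS_T}=\bI_U$, and that a single-row data perturbation toward $\bmv$ then cancels the bottom block of $\bSigma$---after which the rank check and the $4^{-D}$ counting bound are elementary; the sole place a side assumption is used (full-support pilots, and $\bS_T$ of full row rank) is the $\|\bmw_T\|_0=T$ bookkeeping, and it is immaterial to the eclipsing conclusion itself.
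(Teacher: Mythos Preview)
Your proposal is correct and follows essentially the same approach as the paper: both have the jammer retransmit the known pilot row during the pilot phase so that $\tp{\bmw_T}\pinv{\bS_T}=\tp{\bme_1}$, transmit an arbitrary $\bmv\in\setS^D$ during the data phase, and then take as witness the matrix $\tilde\bS_D$ that replaces the first row of $\bS_D$ with $\tp{\bmv}$, after which the rank-one check and the $4^{-D}$ probability computation are identical. Your write-up is slightly more explicit (spelling out $\bS_T\pinv{\bS_T}=\bI_U$ and the full-support caveat for $\|\bmw_T\|_0=T$), but there is no substantive difference.
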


However, the situation is completely different if the (single-antenna) jammer does \emph{not} know the pilots. 
This can be modeled by assuming that a (square) pilot matrix $\bS_T$ is 
drawn at random according to the power-normalized Haar measure 
(i.e., drawn uniformly from the set of unitary $T\times T$ matrices and scaled to $\|\bS_T\|_F^2=U^2$)
\cite{meckes2019random} and not revealed to the jammer in advance:\footnote{Note that the following result is much stronger
than \cite[Thm.\,2]{marti2023maed}, which assumes that both $\bmw_D$ and $\bmw_T$ are distinct from zero, 
and which gives a nontrivial bound only if the number of UEs satisfies $U>3$.}

\begin{thm} \label{thm:independent_chest_eclipsing}
If $\bS_T\in\opC^{T\times U}$ is square ($T=U$) and drawn according to the power-normalized Haar measure 
without being revealed to the jammer, then the probability that a single-antenna jammer eclipses is upper bounded by 
	\begin{align}
		p_e(\bmw_D) = 
		\begin{cases}
			1 &\text{if } \bmw_D = \mathbf{0} \\
			1-\left(1-\frac{2^{\|\bmw_D\|_0}-1}{4^{\|\bmw_D\|_0-1}}\right)^U & \text{else,}
		\end{cases}
	\label{eq:ind_chest_eclipsing}
	\end{align}
if the jammer does not jam the pilot phase (i.e., if $\bmw_T=\mathbf{0}$), and otherwise (i.e., if $\bmw_T\neq\mathbf{0}$),
it is upper bounded by
\begin{align}
	p_{e,\bmw_T\neq\mathbf{0}} = 1-\left(1-\frac{2^D-1}{4^D-1}\right)^U. \label{eq:jammer_jams_pilots}
\end{align}
\end{thm}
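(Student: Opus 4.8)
The plan is to split the argument according to whether the jammer is silent during the pilot phase ($\bmw_T=\bmzero$) or not ($\bmw_T\neq\bmzero$). If $\bmw_T=\bmzero$, then $\bW_T\pinv{\bS_T}\tilde\bS_D=\bmzero$, so the bottom block of $\bSigma(\tilde\bS_D)$ in \fref{def:eclipse_chest} reduces to $\bW_D=\tp{\bmw_D}$, and the eclipsing condition coincides \emph{verbatim} with the perfect-CSI condition of \fref{def:csi_eclipsing} for $I=1$. In particular $\bS_T$ does not appear in the condition, so the eclipse event is independent of the Haar-distributed realization of $\bS_T$; hence \fref{thm:csi_eclipsing} applies conditionally on every $\bS_T$, thus unconditionally, and yields the first bound \eqref{eq:ind_chest_eclipsing}.

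Now assume $\bmw_T\neq\bmzero$ and set $\tp{\bmg}\triangleq\tp{\bmw_T}\pinv{\bS_T}$. Since $T=U$ and $\bS_T$ is power-normalized Haar, $\pinv{\bS_T}=\tfrac1U\herm{\bS_T}$ is a scaled Haar unitary, so for fixed nonzero $\bmw_T$ the vector $\bmg$ has a fixed length and a direction uniform on the complex sphere; in particular each entry $g_u$ is nonzero almost surely. For any candidate $\tilde\bS_D$, write $\bE_D\triangleq\bS_D-\tilde\bS_D$ and $\tp{\bmh}\triangleq\tp{\bmw_D}-\tp{\bmg}\bS_D$. A one-line computation gives $\bSigma(\tilde\bS_D)=[\bE_D;\tp{\bmh}+\tp{\bmg}\bE_D]=\begin{bmatrix}\bI_U&\bmzero\\\tp{\bmg}&1\end{bmatrix}[\bE_D;\tp{\bmh}]$, and since the left factor is invertible, $\rank(\bSigma(\tilde\bS_D))=\rank([\bE_D;\tp{\bmh}])$. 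Thus the jammer is eclipsed iff there is $\tilde\bS_D\in\setS^{U\times D}\setminus\{\bS_D\}$ with $\rank([\bS_D-\tilde\bS_D;\tp{\bmh}])\leq1$: this is exactly the perfect-CSI eclipsing condition, but with the jammer's \emph{fixed} data signature $\bmw_D$ replaced by the \emph{random} effective signature $\bmh=\bmw_D-\tp{\bS_D}\bmg$.

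As in the proof of \fref{thm:csi_eclipsing}, since $\bE_D\neq\bmzero$, a matrix $[\bE_D;\tp{\bmh}]$ of rank at most one with $\tp{\bmh}\neq\bmzero$ forces $\bE_D=\bma\tp{\bmh}$ for some $\bma\neq\bmzero$, and the requirement $\bS_D-\bma\tp{\bmh}\in\setS^{U\times D}$ then decouples over the rows of $\bS_D$. So, letting $B_u$ ($u\in[1:U]$) be the event that row $u$ of $\bS_D$ admits a scalar $a_u\neq0$ with $(\bS_D)_{u,k}-a_uh_k\in\setS$ for all $k\in[1:D]$, the jammer is eclipsed only if $\bmh=\bmzero$ or some $B_u$ occurs; and since $\bmh=\bmzero$ forces every $B_u$ (then any $a_u\neq0$ works), the eclipse event is contained in $\bigcup_{u=1}^{U} B_u$.

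It remains to show $\Prob(\bigcup_{u=1}^{U} B_u)\leq1-(1-\tfrac{2^D-1}{4^D-1})^U$. The approach is to condition on $\bS_T$ (hence on $\bmg$, with all $g_u\neq0$) and then on the rows of $\bS_D$ one at a time: substituting $h_k=c_k-g_u(\bS_D)_{u,k}$, where $c_k$ collects the jammer's signal and the remaining rows' contributions, turns $B_u$ into ``some $\lambda\neq1$ satisfies $\lambda(\bS_D)_{u,k}-\tfrac{\lambda-1}{g_u}c_k\in\setS$ for all $k$,'' and the QPSK constellation geometry---a nonzero shift of $\setS$ meets $\setS$ in at most two points, and a scaling fixes $\setS$ only for finitely many unit-modulus scalars---bounds the conditional probability of this event, over the i.i.d.\ entries of row $u$, by $\tfrac{2^D-1}{4^D-1}=\tfrac1{2^D+1}$; chaining the $U$ conditional bounds gives \eqref{eq:jammer_jams_pilots}. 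I expect the main obstacle to be exactly this final estimate: because the effective signature $\bmh$ depends jointly on all rows of $\bS_D$, the events $B_u$ are not manifestly independent, so one must verify that the per-row conditional bound holds for \emph{every} conditioning order; and the combinatorial enumeration behind the constant $\tfrac{2^D-1}{4^D-1}$---and behind the fact that it is strictly smaller than the perfect-CSI constant $\tfrac{2^{\|\bmw_D\|_0}-1}{4^{\|\bmw_D\|_0-1}}$, since the jammer can no longer choose its effective data signature to maximize the number of admissible error scalings---must be executed with care. The silent-pilot case and the algebraic reduction, by contrast, are routine.
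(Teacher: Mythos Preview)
Your treatment of the silent-pilot case ($\bmw_T=\bmzero$) is identical to the paper's: both simply observe that \fref{def:eclipse_chest} collapses to \fref{def:csi_eclipsing} and invoke \fref{thm:csi_eclipsing}.

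For the active-pilot case ($\bmw_T\neq\bmzero$) you take a genuinely different route. The paper does \emph{not} perform your row operation. It keeps the last row of $\bSigma(\tilde\bS_D)$ in its original form $\tp{\bmw_D}-\tp{\bmw_T}\pinv{\bS_T}\tilde\bS_D$ and exploits the fact that this vector depends only on $\bS_T$ and the optimisation variable $\tilde\bS_D$, hence is \emph{independent of the random data matrix} $\bS_D$. Using the Haar assumption, the paper writes $\tp{\bmw_T}\pinv{\bS_T}$ as a scaled standard Gaussian direction and concludes that, for every fixed $\tilde\bS_D$, the last row has full support almost surely. Because the effective signature is independent of $\bS_D$, the per-row analysis of \fref{thm:csi_eclipsing} (Lemmas~8--12) can be invoked directly with $\|\bmw_D\|_0$ replaced by $D$; no conditional chaining is needed, and the product structure $1-(1-q)^U$ is inherited verbatim from \fref{cor:eclipsing}.

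Your row-operation trick $\bSigma(\tilde\bS_D)=\begin{bmatrix}\bI_U&\bmzero\\\tp{\bmg}&1\end{bmatrix}[\bE_D;\tp{\bmh}]$ is elegant and does collapse the $\tilde\bS_D$-quantifier, but it trades one dependency for another: your effective signature $\bmh=\bmw_D-\tp{\bS_D}\bmg$ now depends on \emph{all} rows of $\bS_D$, so the events $B_u$ are no longer independent and the $1-(1-q)^U$ product no longer follows from \fref{cor:eclipsing}. You correctly identify this as the main obstacle, but the proposed fix---bounding each $\Pr(B_u\mid\text{other rows},\bS_T)$ and then ``chaining''---does not obviously yield the product bound: a uniform bound on $\Pr(B_u\mid\text{other rows})$ gives $\Pr(B_u)\leq p$ for each $u$ individually, not $\Pr(\bigcap_u B_u^c)\geq(1-p)^U$; for that you would need $\Pr(B_u\mid B_1^c,\dots,B_{u-1}^c)\leq p$, which is a different conditioning. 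The paper's approach sidesteps this entirely by never letting the effective signature touch $\bS_D$, so the row-wise independence inherited from $\bS_D\sim\text{Unif}[\setS^{U\times D}]$ survives intact and the passage to $1-(1-q)^U$ is immediate.
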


Note that the bound in \eqref{eq:ind_chest_eclipsing}, where the jammer is silent during the pilot phase,
coincides with the bound from \fref{thm:csi_eclipsing}. 
The bound in \eqref{eq:jammer_jams_pilots}, where the jammer is active during the pilot 
phase, satisfies $p_{e,\bmw_T\neq\mathbf{0}}\leq p_e(\bmw_D)$ with equality if and only if $\bmw_D$ 
has full support, in which case $p_{e,\bmw_T\neq\mathbf{0}}= p_e(\bmw_D) \approx 4U\cdot2^{-D}$.

It is evident that the guarantees for successful JMD jammer mitigation are fundamentally dependent on 
whether or not the jammer knows the pilots used for channel estimation. 
This finding is reminiscent of a more general information-theoretic result which shows that reliable communication 
in the presence of jamming is possible if the legitimate transmitter and the receiver share 
a common secret, but not otherwise\mbox{\cite[Sec.\,V]{lapidoth1998reliable}.}
Intuitively, the receiver needs a lever that enables it to distinguish between legitimate
and illegitimate transmit signals. This lever could be knowledge of the legitimate UEs' channel 
matrix $\bH$ (as in \fref{sec:jmd}). Or, if ground-truth channel knowledge is not available, 
it could be a mechanism which guarantees that the jammer cannot replicate a legitimate pilot 
sequence during the channel estimation phase (as in \fref{thm:independent_chest_eclipsing}).

\subsection{Joint Channel Estimation} \label{sec:joint}

Besides linear channel estimation, there is a second possible approach for channel estimation in combination with JMD,
namely to estimate the channel \emph{jointly} with the data and the 
jammer-mitigating projection.
Specifically, the idea is to use a pilot and a data phase as in \eqref{eq:pilot_io} and~\eqref{eq:data_io}, 
and to then solve
\begin{align}
	\min_{\substack{
	\tilde\bP \,\in\, \mathscr{G}_{B-I}(\opC^B),\\
	\tilde\bH \,\in\, \opC^{B\times U},\\ 
	\tilde\bS_D \,\in\, \setS^{U\times D}
	}} 
	\big\| \tilde\bP ([\bY_T, \bY\!_D] - \tilde\bH [\bS_T, \tilde\bS_D]) \big\|_F^2. \label{eq:jmd_problem3}
\end{align}
Analogous to linear channel estimation (cf. \fref{sec:indep}), 
this approach exploits the fact that the jammer's interference during 
the channel estimation and the data phase is contained in the same subspace $\textit{col}(\bJ)$.
Compared to linear channel estimation, joint channel 
estimation benefits from a joint channel estimation and data detection (JED) performance gain, 
at the price of having to solve a more complex optimization problem. 

To characterize the probability of successful data recovery for JMD in combination with joint channel 
estimation, we can use the same definition of eclipsing (\fref{def:eclipse_chest}) as in \fref{sec:indep}.
Furthermore, it turns out that we also obtain the same success guarantee:

\begin{thm} \label{thm:joint_chest:correct}
	If the noise is zero \textnormal{($\bN=\boldsymbol{0}$)}, and if the jammer is not eclipsed, 
	then the problem in \eqref{eq:jmd_problem3} has the unique solution 
	$\{\hat\bP,\hat\bP\hat\bH,\hat\bS_D\} = \{\bI_B - \bJ\pinv{\bJ}, \hat\bP\bH, \bS_D\}$.\footnote{Because
	of the projection $\tilde\bP$ in \eqref{eq:jmd_problem3}, the optimal value $\hat\bH$ of $\tilde\bH$ itself is not
	uniquely determined---only its composition with the projection $\hat\bP$ is.	}
\end{thm}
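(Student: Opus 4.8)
The plan is to follow the template behind \fref{thm:perfect_csi:correct} and \fref{thm:independent_chest:correct}: exhibit a feasible point of \eqref{eq:jmd_problem3} with objective value zero, so that every minimizer also has value zero, and then show that ``value zero'' together with the hypothesis ``the jammer is not eclipsed'' determines $\tilde\bS_D$, $\tilde\bP$, and the product $\tilde\bP\tilde\bH$. The only genuinely new ingredient, compared to the perfect-CSI case in \fref{sec:jmd}, is that the channel variable $\tilde\bH$ must first be \emph{marginalized out} so that the zero-objective condition collapses to a rank condition on the matrix $\bSigma(\tilde\bS_D)$ of \fref{def:eclipse_chest}. Throughout I would use the standing assumption (already implicit in \eqref{eq:pilot_contamination}) that $\bS_T$ has full row rank $U$, so that $\bS_T\pinv{\bS_T}=\bI_U$; I also write $\mathcal{V}\triangleq\textit{col}(\tilde\bJ)=\textit{null}(\tilde\bP)$, which has dimension at most $I$, and I use repeatedly that $[\bH,\bJ]$ has full column rank $U+I$ (so $\dim\textit{col}(\bJ)=I$).

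First I would observe that $\{\tilde\bP,\tilde\bH,\tilde\bS_D\}=\{\bI_B-\bJ\pinv\bJ,\bH,\bS_D\}$ makes the argument of the norm in \eqref{eq:jmd_problem3} equal to $(\bI_B-\bJ\pinv\bJ)\bJ[\bW_T,\bW_D]=\boldsymbol{0}$ (using $\bN=\boldsymbol{0}$), so the optimal value is zero and every minimizer satisfies $\tilde\bP([\bY_T,\bY_D]-\tilde\bH[\bS_T,\tilde\bS_D])=\boldsymbol{0}$, i.e.\ every column of $[\bY_T,\bY_D]-\tilde\bH[\bS_T,\tilde\bS_D]$ lies in $\mathcal{V}$. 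Applying this to the pilot block $(\bH-\tilde\bH)\bS_T+\bJ\bW_T$ and right-multiplying by $\pinv{\bS_T}$ shows that every column of $(\bH-\tilde\bH)+\bJ\bW_T\pinv{\bS_T}$ lies in $\mathcal{V}$; equivalently, $\tilde\bH=\bY_T\pinv{\bS_T}-\bL$ with $\textit{col}(\bL)\subseteq\mathcal{V}$, so $\tilde\bP\bL=\boldsymbol{0}$. Substituting this together with $\bY_T\pinv{\bS_T}=\bH+\bJ\bW_T\pinv{\bS_T}$ into the data block $\bH\bS_D+\bJ\bW_D-\tilde\bH\tilde\bS_D$, the requirement that the data-block columns lie in $\mathcal{V}$ becomes
\[
\tilde\bP[\bH,\bJ]\,\bSigma(\tilde\bS_D)=\boldsymbol{0},\qquad \bSigma(\tilde\bS_D)=\begin{bmatrix}\bS_D-\tilde\bS_D\\ \bW_D-\bW_T\pinv{\bS_T}\tilde\bS_D\end{bmatrix},
\]
i.e.\ exactly the matrix of \fref{def:eclipse_chest}. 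Since $[\bH,\bJ]$ has full column rank, $\textit{rank}(\bSigma(\tilde\bS_D))=\textit{rank}([\bH,\bJ]\bSigma(\tilde\bS_D))\le\dim\mathcal{V}\le I$; if $\tilde\bS_D\neq\bS_D$, this says the jammer is eclipsed, contradicting the hypothesis, so $\tilde\bS_D=\bS_D$.

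It remains to pin down $\tilde\bP$ and $\tilde\bP\tilde\bH$. With $\tilde\bS_D=\bS_D$ the display above reduces to $\textit{col}(\bJ\bW_D')\subseteq\mathcal{V}$ where $\bW_D'\triangleq\bW_D-\bW_T\pinv{\bS_T}\bS_D$. I would then show that a non-eclipsed jammer necessarily has $\textit{rank}(\bW_D')=I$: if $\textit{rank}(\bW_D')\le I-1$, pick $\tilde\bS_D$ that differs from $\bS_D$ in a single entry; then $\bSigma(\tilde\bS_D)$ vanishes on all but one column, and on the other columns its lower block is a column-subset of $\bW_D'$, so $\textit{rank}(\bSigma(\tilde\bS_D))\le(I-1)+1=I$ and the jammer would be eclipsed---a contradiction. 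Hence $\textit{rank}(\bW_D')=I$, so $\textit{col}(\bJ\bW_D')=\textit{col}(\bJ)$, and since $\dim\mathcal{V}\le I=\dim\textit{col}(\bJ)$ this forces $\mathcal{V}=\textit{col}(\bJ)$, i.e.\ $\tilde\bP=\bI_B-\bJ\pinv\bJ$. Finally $\tilde\bP\tilde\bH=\tilde\bP(\bH+\bJ\bW_T\pinv{\bS_T}-\bL)=\tilde\bP\bH$ because $\tilde\bP\bJ=\boldsymbol{0}$ and $\tilde\bP\bL=\boldsymbol{0}$. Since all of this holds for every minimizer, the solution $\{\bI_B-\bJ\pinv\bJ,\tilde\bP\bH,\bS_D\}$ is unique, with $\tilde\bH$ itself determined only modulo $\textit{col}(\bJ)$.

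The step I expect to be the main obstacle is the last argument that ``not eclipsed'' forces $\textit{rank}(\bW_D')=I$, equivalently that $\tilde\bP$ cannot null more than $\textit{col}(\bJ)$: this is precisely where the joint channel estimate $\tilde\bH$ could a priori absorb jammer energy lying outside $\textit{col}(\bJ)$, so the one-entry-perturbation construction has to be arranged so that the rank of $\bSigma(\tilde\bS_D)$ provably stays at most $I$. Everything else is linear algebra paralleling the perfect-CSI and linear-channel-estimation proofs.
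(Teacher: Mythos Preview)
Your proof is correct and is precisely the argument the paper gestures at: it omits the proof entirely, pointing to \cite[Thm.\,1]{marti2023maed} for the single-antenna case and to the proofs of \fref{thm:perfect_csi:correct} and \fref{thm:independent_chest:correct} for the multi-antenna extension, which is exactly the template you execute (marginalize out $\tilde\bH$ via the pilot block, reduce the data block to $\tilde\bP[\bH,\bJ]\,\bSigma(\tilde\bS_D)=\boldsymbol{0}$, invoke \fref{def:eclipse_chest}). One wording slip: in your rank argument you write ``$\bSigma(\tilde\bS_D)$ vanishes on all but one column,'' but only its \emph{top block} vanishes there---the bottom block equals the corresponding columns of $\bW_D'$, which is what you then use; the bound $\textit{rank}(\bSigma(\tilde\bS_D))\le\textit{rank}(\bW_D')+1\le I$ is correct.
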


Since the definition for eclipsing is the one from \fref{def:eclipse_chest},
the results of \fref{thm:independent_chest_eclipsing_nopilot} and \fref{thm:independent_chest_eclipsing} 
specify the probability of eclipsing also for the case of joint channel estimation.
Therefore, regardless of whether we estimate the channel matrix separately with a linear estimator 
(as in \fref{sec:indep}) or jointly (as in this section), randomized pilots are necessary for meaningful success guarantees. 
However, we emphasize that these guarantees do not imply that joint channel estimation has the same performance as linear
channel estimation under non-zero noise, i.e., for $\bN\neq\mathbf{0}$.

\section{Algorithms}
In the previous two sections, we have introduced the JMD paradigm (see \fref{sec:jmd}) and outlined two 
alternatives for channel estimation (linear and joint; see \fref{sec:chest}).
In this section, we provide concrete algorithms, i.e., we propose algorithms for approximately 
solving the optimization problems in \eqref{eq:jmd_problem2} and~\eqref{eq:jmd_problem3}.
The presented algorithms are by no means the only possible algorithms for these optimization problems. 
The development of other JMD algorithms is left as future~work.

\subsection{The SANDMAN Algorithm}
We start by providing SANDMAN, an algorithm for JMD in conjunction with linear channel estimation as expressed
through the optimization problem in \eqref{eq:jmd_problem2}.
Since solving \eqref{eq:jmd_problem2} exactly is difficult, we solve it approximately.
A key difficulty is that, due to the discreteness of $\setS$, the problem in \eqref{eq:jmd_problem2} is
NP-hard even when fixing $\tilde\bP$ and solving only for~$\tilde\bS_D$~\cite{grotschel2012geometric}.
We thus relax the constraint set $\setS$ to its convex hull $\setC\triangleq \textit{conv}(\setS)$.\footnote{
This relaxation, which allows us to use gradient-based methods for optimization, is only temporary---ultimately, 
entrywise rounding has to be used for converting a penultimate estimate $\tilde\bS_D\in\setC^{U\times D}$ 
to one that is contained in $\setS^{U\times D}$. For simplicity, our algorithms omit this detail.}

To promote symbol estimates at, or near, the corners of~$\setC$, 
we add a concave regularizer $-\|\tilde\bS_D\|_F^2$ weighted by $\alpha>0$ to the objective in \eqref{eq:jmd_problem2} \cite{shah2016biconvex}.
We colloquially refer to the resulting constraint and regularizer as a \emph{box prior}.
The use of this box prior is motivated by our assumption (cf. \fref{sec:model}) that the transmit constellation is QPSK,
so that the corners of $\setC$ correspond to the constellation symbols $\setS$.
If larger constellations such as 16-QAM or 64-QAM are used, other signal priors are more effective~\cite{marti2023maed}.
The relaxed problem is
\begin{align}
	\min_{\substack{\tilde\bS_D \,\in\, \setC^{U\times D},\\ \tilde\bP \,\in\, \mathscr{G}_{B-I}(\opC^B)}} 
	\big\| \tilde\bP (\bY\!_D - \hat\bH \tilde\bS_D) \big\|_F^2 - \alpha\|\tilde\bS_D\|_F^2. \label{eq:sandman_problem}	
\end{align}
Due to the nonconvex constraint set $\mathscr{G}_{B-I}(\opC^B)$, the relaxed problem is still non-convex.
But the following result~holds:

\begin{thm} \label{thm:convex}
	When $\tilde\bP$ is fixed and $\alpha\leq \lambda_{\min}(\hat\bH^{\textnormal{H}}\tilde\bP\hat\bH)$, 
	then the objective in \eqref{eq:sandman_problem} is convex in $\tilde\bS_D$.
	Vice versa, when $\tilde\bS_D$ is fixed, then the objective in \eqref{eq:sandman_problem} is minimized with 
	respect to $\tilde\bP$ by $\bI_B - \bU_I\herm{\bU_I}$, where $\bU_I\in\opC^{B\times I}$ consists of the $I$ 
	principal left-singular vectors of $\bY\!_D - \hat\bH \tilde\bS_D$ (i.e., the left-singular vectors corresponding
	to the $I$ largest singular values).
\end{thm}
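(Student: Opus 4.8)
The statement is biconvex in nature, so I would establish the two halves independently. \textbf{Convexity in $\tilde\bS_D$ for fixed $\tilde\bP$.} Since $\tilde\bP$ is an orthogonal projection, it is Hermitian and idempotent, so $\herm{\tilde\bP}\tilde\bP=\tilde\bP$. Writing $\tilde\bS_D=[\tilde\bms_1,\dots,\tilde\bms_D]$ and exploiting that the Frobenius norm decomposes columnwise, the objective in \eqref{eq:sandman_problem} equals $\sum_{k=1}^D\big(\herm{(\bmy_k-\hat\bH\tilde\bms_k)}\tilde\bP(\bmy_k-\hat\bH\tilde\bms_k)-\alpha\normtwo{\tilde\bms_k}^2\big)$. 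Each summand is a real-valued quadratic in $\tilde\bms_k$ whose quadratic part is governed by the Hermitian matrix $\herm{\hat\bH}\tilde\bP\hat\bH-\alpha\bI_U$, and (identifying $\opC^U$ with $\opR^{2U}$ to make the Hessian bookkeeping precise) the whole sum is convex in $\tilde\bS_D$ if and only if $\herm{\hat\bH}\tilde\bP\hat\bH-\alpha\bI_U\succeq\mathbf{0}$, i.e., exactly when $\alpha\le\lambda_{\min}(\herm{\hat\bH}\tilde\bP\hat\bH)$.

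\textbf{Optimal $\tilde\bP$ for fixed $\tilde\bS_D$.} Fix $\tilde\bS_D$ and set $\bM\triangleq\bY\!_D-\hat\bH\tilde\bS_D$; the regularizer $-\alpha\|\tilde\bS_D\|_F^2$ is constant in $\tilde\bP$ and can be ignored. I would parametrize $\tilde\bP\in\mathscr{G}_{B-I}(\opC^B)$ as $\tilde\bP=\bI_B-\bQ\herm{\bQ}$ with $\herm{\bQ}\bQ=\bI_I$, so $\bQ\herm{\bQ}$ is the orthogonal projection onto the $I$-dimensional $\textit{col}(\bQ)$. Using $\tilde\bP=\herm{\tilde\bP}=\tilde\bP^2$ and $\herm{\bQ}\bQ=\bI_I$ gives $\|\tilde\bP\bM\|_F^2=\|\bM\|_F^2-\|\herm{\bQ}\bM\|_F^2=\|\bM\|_F^2-\tr(\herm{\bQ}\bM\herm{\bM}\bQ)$. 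Minimizing $\|\tilde\bP\bM\|_F^2$ is therefore equivalent to maximizing $\tr(\herm{\bQ}\bA\bQ)$ subject to $\herm{\bQ}\bQ=\bI_I$, where $\bA\triangleq\bM\herm{\bM}\succeq\mathbf{0}$.

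\textbf{The trace maximization.} This is the Ky Fan / Rayleigh--Ritz problem, for which I would give a short self-contained argument. Let $\bA=\sum_{i=1}^B\sigma_i^2\bmu_i\herm{\bmu_i}$ with $\sigma_1\ge\dots\ge\sigma_B\ge0$, where the $\sigma_i$ are the singular values of $\bM$ and the $\bmu_i$ its left-singular vectors. Put $c_i\triangleq\herm{\bmu_i}\bQ\herm{\bQ}\bmu_i$; since $\mathbf{0}\preceq\bQ\herm{\bQ}\preceq\bI_B$ we have $c_i\in[0,1]$, and $\sum_ic_i=\tr(\bQ\herm{\bQ})=I$. Then $\tr(\herm{\bQ}\bA\bQ)=\sum_i\sigma_i^2c_i\le\sum_{i=1}^I\sigma_i^2$, and equality holds for $\textit{col}(\bQ)=\textit{span}(\bmu_1,\dots,\bmu_I)$, i.e., $\bQ=\bU_I$. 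Substituting back shows that $\tilde\bP=\bI_B-\bU_I\herm{\bU_I}$ minimizes the objective, as claimed.

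\textbf{Main obstacle and caveats.} Neither half is technically deep. The only point needing care is that when $\sigma_I=\sigma_{I+1}$ the top-$I$ left-singular subspace---hence the minimizing $\tilde\bP$---is not unique; but the theorem only asserts that $\bI_B-\bU_I\herm{\bU_I}$ \emph{is} a minimizer, so this is harmless and I would merely remark on it. A further minor point is that $\mathscr{G}_{B-I}(\opC^B)$ as defined also contains projections onto subspaces of dimension exceeding $B-I$ (when $\tilde\bJ$ is rank-deficient); the identical coefficient argument, now with the relaxed constraint $\sum_ic_i\le I$, still yields the same optimizer.
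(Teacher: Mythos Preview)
Your proof is correct. For the convexity half, your columnwise decomposition is essentially the same argument as the paper's, which instead expands the objective in trace form and computes the complex Hessian via Wirtinger calculus (following Hj{\o}rungnes), obtaining the block-diagonal matrix with blocks $\tp{[\herm{\hat\bH}\tilde\bP\hat\bH-\alpha\bI_U]}\kron\bI_U$ and $[\herm{\hat\bH}\tilde\bP\hat\bH-\alpha\bI_U]\kron\bI_U$; the PSD condition is identical in both cases, and your route simply avoids the Kronecker machinery.

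For the second half you take a genuinely different route. The paper bounds $\|\tilde\bP\bE\|_F=\|\bE-\bQ\herm{\bQ}\bE\|_F\ge\min_{\tilde\bE:\rank(\tilde\bE)\le I}\|\bE-\tilde\bE\|_F$ and invokes the Eckart--Young--Mirsky theorem to identify the minimizer as the rank-$I$ truncated SVD, then checks that $\bQ=\bU_I$ attains the bound. You instead convert to the trace maximization $\max_{\herm{\bQ}\bQ=\bI_I}\tr(\herm{\bQ}\bM\herm{\bM}\bQ)$ and give a self-contained Ky~Fan argument via the coefficients $c_i=\herm{\bmu_i}\bQ\herm{\bQ}\bmu_i$. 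Your argument is more elementary in that it does not appeal to a named theorem, while the paper's is shorter by outsourcing the work to Eckart--Young; the two are of course closely related, since Ky~Fan is one standard proof of Eckart--Young in the Frobenius norm. Your remarks on non-uniqueness when $\sigma_I=\sigma_{I+1}$ and on rank-deficient $\tilde\bJ$ in the definition of $\mathscr{G}_{B-I}(\opC^B)$ are apt and go slightly beyond what the paper addresses.
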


This theorem suggests using an alternating minimization strategy, since solving \eqref{eq:sandman_problem}
for either $\tilde\bS_D$ or $\tilde\bP$ is straightforward as long as the other quantity is fixed:

\subsubsection{Solving for $\tilde\bS_D$}
To solve \eqref{eq:sandman_problem} with respect to $\tilde\bS_D$ (for fixed $\tilde\bP$), we use forward-backward splitting (FBS) \cite{goldstein16a}. 
FBS is a method for solving optimization problems of the~form
\begin{align}
	\min_{\tilde\bms}~ f(\tilde\bms) + g(\tilde\bms), \label{eq:fbs_problem}
\end{align}
where $f$ is convex and differentiable, and $g$ is convex but not necessarily differentiable. 
FBS solves such problems iteratively by computing 
\begin{align}
	\tilde\bms^{(t+1)} = \proxg\big(\tilde\bms^{(t)} - \tau^{(t)}\nabla f(\tilde\bms^{(t)}); \tau^{(t)}\big), \label{eq:fbs_iteration}
\end{align}
where $\tau^{(t)}$ is the stepsize at iteration $t$, $\nabla f(\tilde\bms)$ is the gradient~of $f$ in $\tilde\bms$,  
and $\proxg$ is the proximal operator of $g$, defined as \cite{parikh13a}
\begin{align}
\proxg(\tilde\bms; \tau) = \arg\!\min_{\tilde\bmx~~~~} \tau g(\tilde\bmx) + \frac12 \|\tilde\bms - \tilde\bmx\|_2^2. 
\end{align}
FBS solves convex problems exactly (for a sufficient number of iterations with suitable stepsizes~$\tau^{(t)}$), 
but it is also effective for approximately solving non-convex problems \cite{goldstein16a}.
To solve the problem in \eqref{eq:sandman_problem}, we define the functions $f$ and~$g$ as
\begin{align}
	f(\tilde\bS_D) &= \big\| \tilde\bP (\bY\!_D - \hat\bH \tilde\bS_D) \big\|_F^2, \\
	g(\tilde\bS_D) &= - \alpha\big\|\tilde\bS_D\big\|_F^2 + \chi_\setC(\tilde\bS_D), 
\end{align}
where $\chi_\setC$ acts entrywise on $\tilde\bS_D$ as the indicator function of $\setC$,
\begin{align}
	\chi_\setC(\tilde s) = \begin{cases}
		0 &: \tilde s \in \setC \\
		\infty &: \tilde s \notin \setC.
	\end{cases}
\end{align}
The gradient of $f$ in $\tilde\bS_D$ is given as 
\begin{align}
	\nabla f(\tilde\bS_D) = -2\,\herm{\hat\bH}\tilde\bP(\bY\!_D - \hat\bH \tilde\bS_D).
\end{align}
The proximal operator of $g$ acts entrywise on $\tilde\bS_D$ and is given~as
$\proxg(\tilde s; \tau) = \text{clip}(\tilde s/(1-\tau\alpha); \sqrt{\sfrac{1}{2}})$ when $\alpha\tau<1$ 
(where $\text{clip}(z;a)$ clips the real and imaginary part of $z\in\opC$ to the~interval $[-a,a]$),
and otherwise as 
$\arg\!\min_{\tilde x \in \{\pm\sqrt{\frac{1}{2}} \pm i\sqrt{\frac{1}{2}} \} } |\tilde s - \tilde x|^2$.

\begin{algorithm}[tp]
  \caption{Approximate SVD \cite{liberty2013svd}}
  \label{alg:svd}
  \begin{algorithmic}[1]
	\setstretch{1.0}
	\vspace{1mm}
    \Function{approxSVD}{$\mathbf{E},I$}
    \For{$i=1$ {\bfseries to} $I$}
    	\State draw $\bmx \sim \setC\setN(\mathbf{0},\bI_K)$
    	\State $\bmx' = \herm{\bE}\bE\bmx$
    	\State $\bmv_i = \bmx'/\|\bmx'\|_2$
    	\State $\sigma_i = \|\bE\bmv_i\|_2$
    	\State $\bmu_i = \bE\bmv_i/\sigma_i$
    	\State $\bE = \bE - \sigma_i\bmu_i\herm{\bmv_i}$
    \EndFor
    \State \textbf{output:} $[\bmu_1,\dots,\bmu_I]$
    \EndFunction
    
  \end{algorithmic}
\end{algorithm}

\subsubsection{Solving for $\tilde\bP$}
According to \fref{thm:convex}, we can solve~\eqref{eq:sandman_problem} with respect to $\tilde\bP$ (for fixed $\tilde\bS_D$) 
by calculating the $I$ principal left-singular vectors $[\bmu_1,\dots,\bmu_I]=\bU_I$ of $\bY\!_D - \hat\bH \tilde\bS_D$.
Instead of performing an exact but computationally expensive singular value decomposition (SVD), 
we approximate $\bU_I$ with the power method from \cite{liberty2013svd}.
We perform a single power iteration per dimension. The resulting procedure is outlined in~\fref{alg:svd}.
\vspace{2mm}

The SANDMAN algorithm alternates between descent steps in $\tilde\bS_D$ and approximate computations of $\tilde\bP$ 
for a total of $t_\text{max}$ iterations. We choose $\alpha=2.5$, and the stepsizes $\tau^{(t)}$
are selected using the Barzilai-Borwein criterion \cite{barzilai1988two} detailed in \cite{zhou2006gradient}.
SANDMAN is summarized in \fref{alg:sandman} and has a computational complexity of
$O(t_\text{max}UDB)$, i.e., its complexity is linear in the number of UEs $U$, the number of data samples~$D$, 
and the number of BS antennas $B$.

One key feature of SANDMAN is that, for single antenna jammers 
for which $\pinv{(\tilde\bJ^{(t)})}=\pinv{(\tilde\bmj^{(t)})} = \herm{(\bmj^{(t)})}/\|\bmj^{(t)}\|^2$,
it requires \emph{no} matrix inversion. This makes SANDMAN particularly attractive for hardware implementation. 
In fact, the first (and so far the only) jammer-mitigating MU-MIMO receiver ASIC~\cite{bucheli2024jammer} is based on SANDMAN, mainly 
thanks to the fact that no matrix inversion is required.

\floatstyle{spaceruled}% Select new float style
\restylefloat{algorithm}% Apply spaceruled float style to algorithm
\begin{algorithm}[tp]
  \caption{SANDMAN}
  \label{alg:sandman}
  \begin{algorithmic}[1]
	\setstretch{1.0}
	\vspace{1mm}
    \Function{SANDMAN}{$\bY\!_D, \bY_T, \bS_T, I, t_\text{max}$}
    \State  $\hat\bH = \bY_T \pinv{\bS_T}$ \hfill \textcolor{gray}{// LS channel estimate}
    \State $\tilde\bS_D^{(0)} = \mathbf{0}_{U\times D}$
    \State $\tilde\bE^{(t)} = [\bY_T,\bY\!_D]$
    \For{$t=0$ {\bfseries to} $t_\text{max}-1$}
		\State $\tilde\bJ^{(t)} = \textsc{approxSVD}(\tilde\bE^{(t)}, I)$ 
		\hfill\textcolor{gray}{// cf. Algorithm 1} 
		\State $\tilde\bP^{(t)} = \bI_B - \tilde\bJ^{(t)}\pinv{(\tilde\bJ^{(t)})}$
		\State $\nabla f(\tilde\bS_D^{(t)}) = -2\,\herm{\hat\bH}\tilde\bP^{(t)}(\bY\!_D - \hat\bH \tilde\bS_D^{(t)})$
		\State $\tilde\bS_D^{(t+1)} = \proxg\big(\tilde\bS_D^{(t)} - \tau^{(t)}\nabla f(\tilde\bS_D^{(t)}); \tau^{(t)}\big)$
		\State $\tilde\bE^{(t+1)} = [\bY_T,\bY\!_D] - \hat\bH [\bS_T,\tilde\bS_D^{(t+1)}]$
    \EndFor
    \State \textbf{output:} $\tilde\bS_D^{(t_\text{max})}$    
    \EndFunction    
  \end{algorithmic}
\end{algorithm}

\subsection{The MAED Algorithm}

MAED is an algorithm for JMD in conjunction with joint channel estimation as expressed through the optimization problem
in \eqref{eq:jmd_problem3}. A version of MAED for single-antenna jammers has originally been proposed in 
\cite{marti2023maed}.\footnote{Apart from the number of jammer antennas, the MAED version provided in~\cite{marti2023maed} 
differs slightly from the one provided in this paper. The conference version of this paper \cite{marti2023jmd}
has therefore referred to the newer version of MAED as ``MAED~2.0.'' 
For simplicity, however, we have decided to return to the original name ``MAED,'' 
since these minor differences between the algorithms do not seem to justify a new name.}
Estimating the channel jointly enables MAED to achieve the detection gains associated
with joint channel estimation and data detection (JED) 
\cite{vikalo2006efficient, xu2008exact, kofidis2017joint, yilmaz2019channel, he2020model}.
Other than estimating the channel jointly as well, however, MAED works similarly to SANDMAN. 

To derive the MAED algorithm, we start by noting that the objective in \eqref{eq:jmd_problem3} is quadratic in $\tilde\bH$. 
The optimal $\tilde\bH$ as a function of $\tilde\bP$ and $\tilde\bS\triangleq[\bS_T,\tilde\bS_D]$ is therefore equal to
\begin{align}
	\tilde\bH = \tilde\bP\bY\pinv{\tilde\bS}.
\end{align}
By plugging this expression back into \eqref{eq:jmd_problem3}, we obtain an optimization problem which only depends on 
$\tilde\bP$ and $\tilde\bS$. Furthermore, as in \eqref{eq:sandman_problem}, we also relax the constraint 
set $\setS$ to its convex hull $\setC$ and add a concave regularizer $-\alpha\|\tilde\bS_D\|_F^2$ to promote symbol 
estimates near the constellation points. The resulting optimization problem is therefore 
\begin{align}
	\min_{\substack{
	\tilde\bP \,\in\, \mathscr{G}_{B-I}(\opC^B),\\
	\tilde\bS=[\bS_T,\tilde\bS_D]:\, \tilde\bS_D \,\in\, \setC^{U\times D}
	}} 
	\big\| \tilde\bP \bY (\bI_K - \pinv{\tilde\bS}\tilde\bS) \big\|_F^2 - \alpha\|\tilde\bS_D\|_F^2. 	
\end{align}
As in SANDMAN, we now use an alternating minimization strategy where we alternate between 
FBS-based descent steps in $\tilde\bS$ and approximate optimization steps in $\tilde\bP$. 
The gradient required for FBS is 
\begin{align}
	\nabla f(\tilde\bS) = - \herm{(\bY\pinv{\tilde\bS})}\tilde\bP\bY(\bI_K-\pinv{\tilde\bS}\tilde\bS).
\end{align}
The proximal operator $\text{prox}_g$ maps the first $T$ columns of $\tilde\bS$ to $\bS_T$ 
and acts entrywise on the remaining columns as 
$\text{prox}_g(\tilde s; \tau) = \text{clip}(\tilde s/(1-\tau\alpha); \sqrt{\sfrac{1}{2}})$ when $\alpha\tau<1$ 
(where $\text{clip}(z;a)$ clips the real and imaginary part of $z\in\opC$ to the~interval $[-a,a]$),
and otherwise as 
$\arg\!\min_{\tilde x \in \{\pm\sqrt{\frac{1}{2}} \pm i\sqrt{\frac{1}{2}} \} } |\tilde s - \tilde x|^2$.

As in SANDMAN, we choose $\alpha=2.5$ and compute the stepsizes
using the Barzilai-Borwein criterion of \cite{zhou2006gradient}.
The resulting algorithm is summarized in \fref{alg:maed} and has computational complexity $O(t_\text{max}UKB)$. 
While this is the same asymptotic complexity order as for SANDMAN, MAED requires the computation 
of a pseudoinverse (of $\tilde\bS\in\opC^{U\times K}$) in every algorithm iteration, which makes it less 
attractive for hardware~implementation.

\begin{algorithm}[tp]
  \caption{MAED}
  \label{alg:maed}
  \begin{algorithmic}[1]
	\setstretch{1.0}
	\vspace{1mm}
    \Function{MAED}{$\bY_T, \bY\!_D, \bS_T, I, t_\text{max}$}
    \State $\tilde\bS^{(0)} \!=\! \left[\bS_T, \mathbf{0}_{U\!\times\! D}\right]$
	\State $\tilde\bE^{(0)} = [\bY_T,\bY\!_D]$
    \For{$t=0$ {\bfseries to} $t_\text{max}-1$}
    	\State $\tilde\bJ^{(t)} = \textsc{approxSVD}(\tilde\bE^{(t)}, I)$ 
		\hfill\textcolor{gray}{// cf. Algorithm 1} 
		\State $\tilde\bP^{(t)} = \bI_B - \tilde\bJ^{(t)}\pinv{(\tilde\bJ^{(t)})}$
		\State $\nabla f(\tilde\bS^{(t)}) = -\herm{\big(\bY\tilde\bS^{(t)}{}^\dagger\big)} \tilde\bP^{(t)}\bY(\bI_K - \tilde\bS^{(t)}{}^\dagger \tilde\bS^{(t)})$
		\State $\tilde\bS^{(t+1)} = \proxg\big(\tilde\bS^{(t)} - \tau^{(t)}\nabla f(\tilde\bS^{(t)}); \tau^{(t)}\big)$
		\State $\tilde\bE^{(t+1)} = [\bY_T,\bY\!_D](\bI_K - \tilde\bS^{(t+1)}{}^\dagger \tilde\bS^{(t+1)})$
    \EndFor
    \State \textbf{output:} $\tilde\bS^{(t_\text{max})}_{[T+1:K]}$
    
    \EndFunction
    
  \end{algorithmic}
\end{algorithm}

\section{Evaluation}
\label{sec:results}

\subsection{Simulation Setup} \label{sec:setup}
We evaluate our JMD-type methods SANDMAN and MAED using simulations. 
The channel vectors are generated with QuaDRiGa \cite{jaeckel2014quadriga}. 
We simulate a \mbox{MU-MIMO} system with $B=32$ BS antennas and $U=16$ single-antenna UEs
at a carrier frequency of 2\,GHz 
using the 3GPP 38.901 urban macrocellular (UMa) channel model~\cite{3gpp22a}. All antennas are omnidirectional.
The BS antennas are arranged as a uniform linear array (ULA) and spaced at half wavelength.
The UEs are uniformly distributed at distances between $10$\,m and $250$\,m in a $120^\circ$ angular sector 
in front of the BS, and with a minimum angular separation of $1^\circ$ between any two UEs. 
We assume $\pm3$\,dB per-UE power control.
The specific jammer model varies between the different experiments, see below. In general, we consider 
$J\geq1$ jammers placed randomly in the same area as the UEs, with a minimum angular separation 
of $1^\circ$ between any two jammers as well as between any jammer and any UE. 
Every jammer is equipped with $\frac{I}{J}\geq1$ antennas arranged as a ULA with half-wavelength 
spacing that faces towards the~BS. 

To be able to compare our methods with training-period-based methods, we now consider communication frames 
of length $L=K+R$ instead of $K$. That is, we replace \eqref{eq:matrix_io} with 
\begin{align}
	\bY = \bH\bX + \bJ\bW + \bN \in \opC^{B\times L}, 
\end{align}
where the columns of $\bX$ consist of $R$ (for ``redundancy'') evenly distributed all-zero 
columns used as a jammer training period as well as the columns of $\bS=[\bS_T,\bS_D]$.\footnote{Spreading
the training period over the frame is the most sensible strategy against dynamic multi-antenna jammers 
as considered in \fref{sec:smart}, so this gives the strongest possible baseline.}
The submatrices of~$\bY$ consisting of the columns of the jammer training period, the pilot phase, and the data phase
are denoted $\bY_J\in\opC^{B\times R}$, $\bY_T\in\opC^{B\times T}$, and $\bY_D\in\opC^{B\times D}$, respectively. 
Our JMD methods will use $R=0$ since they do not require a jammer training period.
We assume a frame duration of $L=100$ channel uses. 
 
We consider QPSK transmission, and we use a square pilot matrix ($T=U=16$). 
The pilots are selected as rows of a $U\times U$ Haar matrix 
(normalized to unit symbol energy) and not revaled to the jammer.
The signal-to-noise ratio (SNR) is
\begin{align}
\textit{SNR} \define \frac{\Ex{\bms_k}{\|\bH\bms_k\|_s^2}}{\Ex{\bmn_k}{\|\bmn_k\|_2^2}} 
= \frac{\|\bH\|_F^2}{B\No}.
\end{align}
Furthermore, we characterize the receive power of the jammer interference relative to the 
receive power of the average UE as
\begin{align}
	\rho \define \frac{\frac{1}{L}\|\bJ\bW\|_F^2}{\frac1U\Ex{\bms_k}{\|\bH\bms_k\|_2^2}} 
	= \frac{\frac{1}{L}\|\bJ\bW\|_F^2}{\frac{1}{U}\|\bH\|_F^2},
\end{align}
where we deterministically scale the interference $\bJ\bW$ to some specified $\rho$. 
As performance metrics, we consider the uncoded bit error rate (BER) and 
a metric called modulation~error ratio (MER) between the data symbols $\bS_D$ 
and their estimate~$\hat\bS_D$, 
\begin{align}
	\textit{MER}\triangleq \sqrt{\frac{\mathbb{E}\big[\|\hat\bS_D - \bS_D\|_F^2\big]}{\mathbb{E}\big[\|\bS_D\|_F^2\big]}}.
\end{align}
We use the MER as a surrogate for error vector magnitude (EVM), which 
the 3GPP 5G NR technical specification requires to be below 17.5\% \cite[Tbl. 6.5.2.2-1]{3gpp21a} for QPSK transmission.

\subsection{Higher Data Rates}\label{sec:rate}

The first advantage of JMD compared to training-period based mitigation schemes 
is increased data rates since no symbol time slots are reserved for estimating the jammer's channel.

%%%%%%%%%%%%
\vspace{2mm}
\subsubsection*{\textbf{Jammer Model}}
The rate advantage due to the absence of a training period is shown for the following jammer model:
\subsubsection*{\tinygraycircled{1} Single-antenna barrage jammer} 
One single-antenna jammer transmits i.i.d. circularly-symmetric complex
Gaussian noise during the entire frame. The jammer power is $\rho=30$\,dB.

%%%%%%%%%%%%
\vspace{2mm}
\subsubsection*{\textbf{Baselines}} 
The following receivers are used as baselines:

\subsubsection*{POS-JED}
This receiver uses $0\leq R \leq L-T$ channel uses in each frame for estimating the jammer subspace
as the $I$ principal left singular vectors of $\bY_J$.  This reduces the number of channel uses for data transmission 
from $D=K-T$ to $D=K-T-R$.
The jammer is mitigated by first projecting~$\bY_T$ and $\bY_D$ onto the orthogonal complement of the estimated subspace, 
followed by performing FBS-based joint channel estimation and data detection analogous to MAED. 

\subsubsection*{G-POS-JED} 
This receiver serves as an upper bound to the achievable performance of MAED. 
It works analogous to MAED but is furnished with ground-truth knowledge of the jammer channel matrix $\bJ$. 
It therefore uses $R=0$ and sets~$\tilde\bP^{(t)}$ on line~7 of \fref{alg:maed} 
to the optimal projector $\bP = \bI_B - \Hj\pinv\Hj$.

\subsubsection*{POS-BOX}
Like POS-JED, this receiver uses $0\leq R \leq L-T$ channel uses of each frame for estimating the jammer subspace
as the $I$ principal left singular vectors of $\bY_J$. The jammer is mitigated by first projecting $\bY_T$ and $\bY_D$ 
onto the orthogonal complement of the estimated subspace, followed by performing LS channel estimation and FBS-based data 
detection with a box prior analogous to SANDMAN. 

\subsubsection*{G-POS-BOX}
This receiver serves as an upper bound to the achievable performance of SANDMAN. 
It works analogous to SANDMAN but is furnished with ground-truth knowledge of the jammer's channel matrix~$\bJ$. 
It therefore uses $R=0$ and fixes~$\tilde\bP^{(t)}$ on line 7 of \fref{alg:sandman} 
to the optimal projector $\bP = \bI_B - \Hj\pinv\Hj$.

All of the above algorithms run $t_\text{max}=30$ iterations.

%%%%%%%%%%%%
\vspace{2mm}
\subsubsection*{\textbf{Results}}
In order to demonstrate the rate increase due to the absence of a jammer training period, we consider the tradeoff between 
the lowest SNR for which a receiver satisfies $\text{MER}\leq 17.5\%$ and the ratio
\begin{align}
	r = \frac{L-T-R}{L-T},
\end{align}
which is the number of channel uses $L-T-R$  per communication frame that are available for data transmission, 
normalized with respect to the maximum number of samples $L-T$ available without a jammer estimation phase.
The ratio~$r$ translates directly to the achieved data rate (measured in terms of bits/s).
\fref{fig:rate_reduction} shows the results:

Since SANDMAN and MAED as well as G-POS-BOX and G-POS-JED have no jammer training period ($R=0$), they achieve $r=1$. 
In contrast, the detection accuracy of \mbox{POS-BOX} and POS-JED increases with the redundancy $R$ 
(a longer training period yields a better estimate of the jammer channel, which enables more precise nulling), 
at the expense of~$r$. 
Our results demonstrate that SANDMAN and MAED, which can utilize the receive data of the entire frame to estimate the jammer subspace, 
achieve virtually the same performance as their genie-assisted counterparts G-POS-BOX and G-POS-JED. 
In contrast, POS-BOX and POS-JED can only use a subset of the receive samples to estimate the jammer interference, 
and so perform significantly worse even when dedicating a significant fraction of the frame to jammer training. 
In fact, POS-BOX does not reliably outperform SANDMAN even when using almost the entire frame for jammer training
(so that no almost no channel uses remain for data transmission, \mbox{$r\to0$}). The performance of POS-JED even 
deteriorates when $r\to0$ (since joint channel estimation and data detection degenerates when there are very 
few data symbols), and it is always at least $0.18$\,dB away from the performance of MAED.

\begin{figure}[tp]
\centering
\includegraphics[width=0.95\columnwidth]{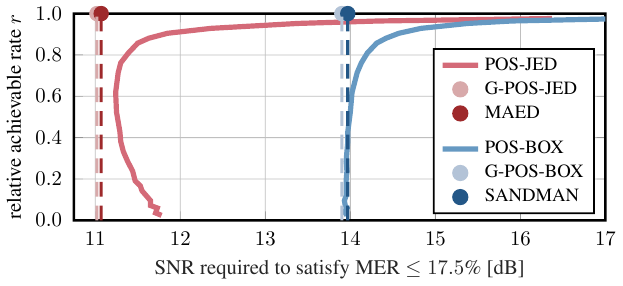}
\vspace{-2mm}
\caption{Trade-off between the relative achievable rate $r$ and the lowest SNR for which the different 
receivers satisfy the criterion $\text{MER}\leq17.5\%$ 
when mitigating a single-antenna barrage jammer \tinygraycircled{1}. 
}
\label{fig:rate_reduction}
\end{figure}

%%%%%%%%%%%%%%%%%%%%%%%%%%%%%%%%%%%%%%%%%%%%%%%%
%%%%%%%%%%%%%%%%%%%%%%%%%%%%%%%%%%%%%%%%%%%%%%%%
\begin{figure*}[tp]
\centering
\includegraphics[width=\textwidth]{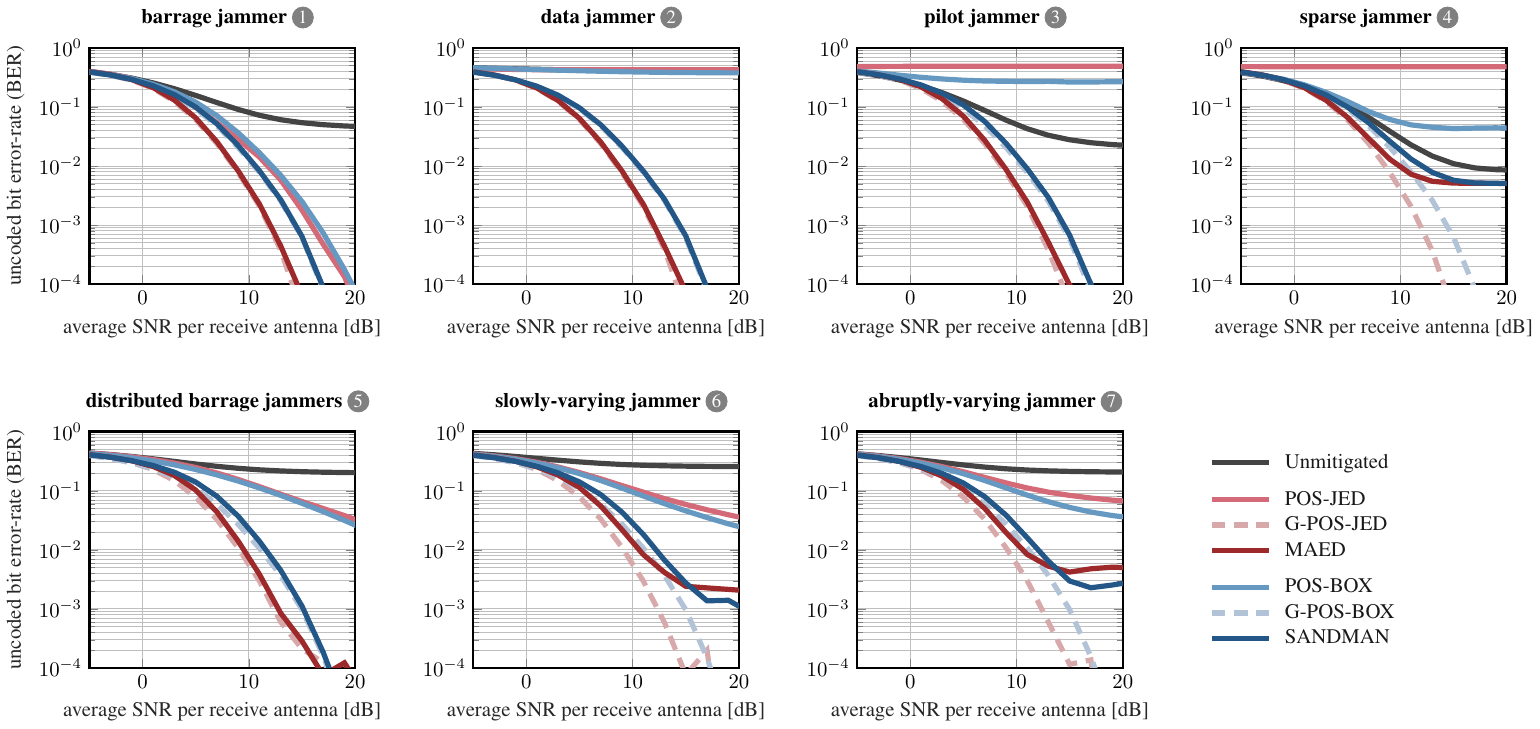}
\vspace{-5mm}
\caption{Uncoded bit error-rate (BER) vs. SNR performance of different receivers when mitigating different kinds of jammers, 
including smart, distributed, and dynamic multi-antenna jammers.}
\vspace{-4mm}
\label{fig:smart}
\end{figure*}

\subsection{Mitigating Smart, Distributed, and Multi-Antenna Jammers} \label{sec:smart}
In this experiment, we analyze the ability of JMD to mitigate smart jammers that might not jam during 
any training phase but only at specific instants, or spatially distributed single-antenna jammers, 
or multi-antenna jammers that use dynamic beamforming to change their interference subspace. 
%%%%%%%%%%%%
\vspace{2mm}
\subsubsection*{\textbf{Jammer Models}}
Besides the single-antenna barrage jammer~(\tinygraycircled{1}), we consider the following types of jammers:
\subsubsection*{\tinygraycircled{2} Single-antenna data jammer} A single-antenna jammer
that does not jam the training period and pilot phase, 
and that transmits i.i.d. complex Gaussian noise with \mbox{$\rho=30$\,dB} in the data phase.
Note that this jammer is smart (i.e., \mbox{protocol-aware}). 
\subsubsection*{\tinygraycircled{3} Single-antenna pilot jammer} A single-antenna jammer
that does not jam the training period and data phase, 
and that transmits i.i.d. complex Gaussian noise with \mbox{$\rho=30$\,dB} during the pilot phase.
Note that this jammer is smart, too.
\subsubsection*{\tinygraycircled{4} Single-antenna sparse jammer} A non-smart single-antenna jammer
that only jams for a (randomly selected) single symbol time per communication frame, with \mbox{$\rho=30$\,dB.}
\subsubsection*{\tinygraycircled{5} Distributed single-antenna barrage jammers} 
Four distributed and statistically independent single-antenna jammers transmit
i.i.d. complex Gaussian noise with $\rho=30$\,dB. 
\subsubsection*{\tinygraycircled{6} Slowly varying multi-antenna jammer}
A non-smart four-antenna jammer where the $\tilde\bmw_k$ (see \eqref{eq:jammer_beamforming}) are white Gaussian random vectors 
with $\rho=30$\,dB. The matrices $\bA_k$ are constructed as follows:
Only the leftmost column $\bma_{k,1}$ of $\bA_k$ is nonzero. 
For randomly selected instants $k_1<\dots<k_M, M=5$, 
the vector $\bma_{k_m,1}$ is fixed to randomly drawn vectors $\{\bma^{(m)}\}_{m=1}^M$. 
For $k_m\!<\!k\!<\!k_{m+1}$, $\bma_{k,1}$ interpolates between $\bma^{(m)}$ and $\bma^{(m+1)}$.
\subsubsection*{\tinygraycircled{7} Abruptly varying multi-antenna jammer}
A non-smart four-antenna jammer that transmits
only~with a random subset of one or two of its antennas at each instant $k$. 
To this end, a randomly selected subset of the rows
of the beamforming matrix $\bA_k$ (see \eqref{eq:jammer_beamforming}) is populated with i.i.d. $\setC\setN(0,1)$ entries, 
and the other rows are set to zero. The beamforming matrix $\bA_{k+1}$ is equal to $\bA_k$ with probability $0.95$ and
otherwise is redrawn (including the support set of the rows of $\bA_{k+1}$). 
The vectors $\tilde\bmw_k$ are white Gaussian random vectors with $\rho=30\,$dB.

%%%%%%%%%%%%
\vspace{2mm}
\subsubsection*{\textbf{Baselines}} 
We compare SANDMAN and MAED with the baselines from \fref{sec:rate}. 
The training period-based methods POS-BOX and POS-JED use a redundancy of $R=4$ and $D=80$ data transmission symbols
per frame (yielding a relative rate of $r=80/84\approx0.95$) 
while the other methods use $R=0$ and $D=84$ data transmission symbols (yielding $r=1$). 
All iterative algorithms run for $t_\text{max}=30$ iterations when facing a single-antenna jammer, 
and for $t_\text{max}=50$ iterations when facing distributed or multi-antenna jammers. 
The algorithms are provided with knowledge of the number $I$ of jammer antennas.
Other than that, the configurations of the algorithms do not depend on the type of jammer being faced. 
For context, we also include the following unmitigated receiver as an additional baseline:
\subsubsection*{Unmitigated} This receiver does not mitigate the jammer. 
It performs LS channel estimation and (jammer-oblivious) LMMSE data detection.

%%%%%%%%%%%%
\vspace{2mm}
\subsubsection*{\textbf{Results}}
\fref{fig:smart} depicts the results. 
The performance of the G-POS-BOX and G-POS-JED baselines depends on the jammer type only via 
the number of jammer antennas.\footnote{Since the jammer is perfectly nulled using ground-truth knowledge, 
there is no residual jamming interference. However, each jammer antenna entails the loss of one degree of 
freedom after nulling, see \cite[Prop. 3]{marti2024fundamental}.}
In contrast, the performance of the unmitigated receiver suffers to varying degrees---but always significantly---under 
the different types of jammers: the least harmful jammers for an unmitigated receiver are the pilot jammer and the sparse 
jammer (with a BER of around $1\%$ at high SNR), and the most harmful are the data jammer \tinygraycircled{2}
as well as distributed and multi-antenna jammers \tinygraycircled{5}, \tinygraycircled{6}, \tinygraycircled{7} 
(with a BER of $20\%$\,-\,$50\%$ even at high~SNR). 
As expected, the training period-based baselines POS-BOX and POS-JED  mitigate the barrage jammers \tinygraycircled{1}
and \tinygraycircled{5} successfully, 
but fail against all smart or dynamic jammers.\footnote{POS-BOX and POS-JED often perform even worse 
than the unmitigated method (see, e.g., the pilot jammer \tinygraycircled{3}). 
This is due to the fact that the~nonlinear data detectors of POS-BOX and POS-JED try to fit the receive signal
with a signal model that is accurate only if the jammer subspace is correctly estimated, and which otherwise can 
fail catastrophically.
In contrast, the linear detection of the unmitigated baseline is less susceptible to model mismatches.
} 

In contrast, the JMD-type methods SANDMAN and MAED are able to mitigate \emph{all} jammers more or less successfully: 
\begin{itemize}
	\item They achieve virtually the same performance as their genie-assisted counterparts G-POS-BOX and G-POS-JED
	against the barrage jammers \tinygraycircled{1} and \tinygraycircled{5} as well as against the data jammer \tinygraycircled{2}
	and the pilot jammer \tinygraycircled{3}. This indicates that SANDMAN and MAED null these jammers essentially perfectly.
	Note also that SANDMAN and MAED outperform their training period-based counterparts POS-BOX and POS-JED even against
	the barrage jammers \tinygraycircled{1} and \tinygraycircled{5} (as expected based on \fref{sec:rate}).  
	\item Also against the smart jammers \tinygraycircled{2} and \tinygraycircled{3}, 
	SANDMAN and MAED achieve virtually the same performance as their genie-assisted counterparts. In contrast, 
	their training period-based counterparts fail completely against such smart jammers that do not jam during the 
	training~period. 
	\item Even against the sparse jammer \tinygraycircled{4}, which 	might be expected to be their Achilles heel (based on the
	discussion in \fref{sec:jmd} and \fref{sec:chest}), SANDMAN and MAED perform outperform the unmitigated baseline 
	as well as their training-period based counterparts. They manage to achieve a BER of below $1\%$ at high SNR. 
	\item Against the dynamic multi-antenna jammers \tinygraycircled{6} and~\tinygraycircled{7}, SANDMAN and MAED
	achieve BERs significantly below $1\%$ at high SNR. In contrast, the BERs of their training period-based counterparts 
	POS-BOX and POS-JED remain significantly above $1\%$ even against such non-smart dynamic jammers. 
	Note, however, that also SANDMAN and MAED eventualy hit an error floor due to difficulty of disentangling a
	high-dimensional and time-varying interference subspace from the signal subspace. The error floor
	of SANDMAN is lower than that of MAED because the optimization problem approximated by SANDMAN is less
	complex, and thus less prone to misconvergence, than the one approximated by MAED.
\end{itemize}
In summary, SANDMAN and MAED outperform their training period-based counterparts POS-BOX and POS-JED for every 
considered jammer type (sometimes decivisely). They also consistently outperform the unmitigated baseline. 
This demonstrates the suitability of JMD-based methods for mitigating smart and/or dynamic single-antenna jammers, 
distributed jammers, or multi-antenna jammers.

%%%%%%%%%%%
\section{Discussion and Conclusion}

We have proposed joint jammer mitigation and data detection (JMD), a novel paradigm 
for mitigating jammers in MIMO systems that does not use a dedicated jammer training period. 
As a result, JMD is able to mitigate smart jammers regardless of (i) when they are active
and (ii) how they vary their multi-antenna transmit beamforming. 
We have provided theoretical success guarantees for the case of smart single-antenna jammers, 
and we have proposed two JMD-type algorithms (SANDMAN and MAED) whose efficacy against single- 
and multi-antenna jammers has been demonstrated through extensive simulations. 

At the moment, JMD still exhibits certain drawbacks:
(i)~we only provide success guarantees for single-antenna jammers, 
(ii)~JMD requires the number of jammer antennas (or the rank of jamming interference) to be known at the receiver in advance, 
and (iii)~the JMD-type algorithms SANDMAN and MAED exhibit an error floor against sparse jammers 
(which are prone to eclipsing, see \fref{sec:jmd} and \fref{sec:chest}) and against multi-antenna jammers 
(for which solving the respective optimization problems is challenging). 
All of these issues can be remedied by combining JMD with a recently developed technique in which 
the transmit signals are transformed using a linear time-domain transform. 
Preliminary results are shown in \cite{marti2023universal} and will be detailed more fully in future work.

%%%%%%%%%%%
\appendix 
\label{app:test}
%%%%%%%%%%%

\section*{Proofs}
\label{app:proofs}

\subsection{Proof of \fref{thm:perfect_csi:correct}}
We start by noting that the Frobenius norm in \eqref{eq:jmd_problem} is nonnegative and evaluates to zero if 
and only if its argument is the all-zero matrix. 
That $\{\hat\bP,\hat\bS_D\} = \{\bI_B - \bJ\pinv{\bJ}, \bS_D\}$ is \emph{a} 
minimizer follows then directly from 
\begin{align}
	\!\!\hat\bP(\bY\!_D - \bH\hat\bS_D)
	&= (\bI_B - \bJ\pinv{\bJ}) (\bH\bS_D + \bJ\bW\!_D - \bH\bS_D)\!\! \\
	&= (\bI_B - \bJ\pinv{\bJ})\bJ\bW\!_D \\
	&= \mathbf{0},
\end{align}
since $(\bI_B - \bJ\pinv{\bJ})\bJ = \mathbf{0}$. It remains to show that $\{\hat\bP,\hat\bS_D\}$
is the \emph{only} minimizer of \eqref{eq:jmd_problem}. For this, we rewrite the argument 
of the Frobenius norm in \eqref{eq:jmd_problem} as
\begin{align}
	\tilde\bP(\bH\bS_D + \bJ\bW\!_D - \bH\tilde\bS_D) 
	&= \tilde\bP \begin{bmatrix}\bH,\bJ \end{bmatrix} \begin{bmatrix} \bS_D - \tilde\bS_D\\ \bW\!_D \end{bmatrix}.
	\label{eq:data_objective_rewritten}
\end{align}
Since $\tilde\bP\in\mathscr{G}_{B-I}(\opC^B)$, it can null a matrix only if that matrix
has rank less than or equal to $I$. But by assumption, the matrix $[\bH,\bJ]$ has full column rank $U+I$. 
So its product with $[\bS_D - \tilde\bS_D; \bW\!_D]$ can give a matrix whose rank does not exceed $I$
only if the rank of $[\bS_D - \tilde\bS_D; \bW\!_D]$ itself does not exceed $I$. 
By our assumption that the jammer is not eclipsed, this can be the case only if $\tilde\bS_D = \bS_D$, 
in which case \eqref{eq:data_objective_rewritten} simplifies to $\tilde\bP\bJ\bW\!_D$. 
Since the jammer is not eclipsed, the rank of $\bW\!_D$ itself is equal to $I$. 
Hence $\textit{col}(\bJ\bW\!_D)=\textit{col}(\bJ)$, so that $\tilde\bP\bJ\bW\!_D=\mathbf{0}$
if and only if $\tilde\bP=\bI_B - \bJ\pinv{\bJ}$.
\hfill$\blacksquare$

\subsection{Proof of \fref{thm:csi_eclipsing}}

Since signals from the pilot phase play no role for this result, it will be convenient 
to simply call $\bmw_D$ the jammer transmit signal (instead of ``the transmit signal during the data phase'').

We start by defining the \emph{coset} 
\begin{align}
	\mathfrak{E}(\bS_D) \triangleq \Big\{ \bE(\tilde\bS_D; \bS_D)=\bS_D - \tilde\bS_D : 
	\tilde\bS_D \in \setS^{U\times D}\!\setminus\!\{\bS_D\} \Big\}.
\end{align}
Note that, by definition, the coset $\mathfrak{E}(\bS_D)$ does not include the all-zero matrix.
We can now rewrite \fref{def:csi_eclipsing} as follows: 
\begin{defi}[Eclipsing with perfect CSI]
A single-antenna jammer is eclipsed in a given frame if there exists a matrix 
$\bE(\tilde\bS_D; \bS_D)\in\mathfrak{E}(\bS_D)$ such that 
$[\bE(\tilde\bS_D; \bS_D); \tp{\bmw_D}]$ is a matrix of rank $1$. 
\end{defi}

Thus, the jammer eclipses if (i) $\mathfrak{E}(\bS_D)$ includes a matrix $\bE(\tilde\bS_D; \bS_D)$ whose 
rows are all collinear 
and (ii) $\tp{\bmw_D}$ is collinear with these~rows. 
Without loss of generality, for all $\bE(\tilde\bS_D; \bS_D)\in\mathfrak{E}(\bS_D)$, 
denote the $u$th row of $\bE(\tilde\bS_D; \bS_D)$ by $\tp{\bme_{(u)}(\tilde\bms_{(u)};\bms_{(u)})}$,
where $\tp{\tilde\bms_{(u)}}$ and $\tp{\bms_{(u)}}$ denote the corresponding rows of $\tilde\bS_D$ and $\bS_D$, respectively. 
We then define the cosets
\begin{align}
	\mathfrak{e}_{(u)}(\bms_{(u)}) \!\triangleq \!
	\big\{\bme_{(u)}(\tilde\bms_{(u)};\bms_{(u)})\!=\tilde\bms_{(u)}\!-\bms_{(u)} : \tilde\bms_{(u)}\!\!\in\!\setS^D\!\setminus\!\{\bms_{(u)}\} \!\big\}
\end{align}
for $u=1,\dots,U$. 
Note that, by definition, $\mathfrak{e}_{(u)}(\bms_{(u)})$ does not include the all-zero vector.
We have the following lemmas:

\begin{lem} \label{lem:coset_eclipsing}
	A single-antenna jammer with transmit signal $\bmw_D$ eclipses if and only if, for some $u\in[1:U]$, there exists 
	an $\bme\in\mathfrak{e}_{(u)}(\bms_{(u)})$ that is collinear with $\bmw_D$.
\end{lem}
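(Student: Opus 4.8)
The plan is to reduce the rank condition defining eclipsing to a statement about collinearity of rows, and then to use the fact that a single row of the symbol-error matrix can be prescribed without constraining the others. First I would record the elementary observation that, because $\mathfrak{E}(\bS_D)$ by construction omits the all-zero matrix, for every admissible error matrix $\bE(\tilde\bS_D;\bS_D)$ the stacked matrix $[\bE(\tilde\bS_D;\bS_D);\tp{\bmw_D}]$ is nonzero; hence ``rank at most $1$'' is equivalent to ``rank exactly $1$'', which in turn holds if and only if all rows of the matrix lie on one common line through the origin of $\opC^D$, i.e. the rows are pairwise collinear (with the convention that $\mathbf{0}$ is collinear with every vector, matching the $\bmw_D=\mathbf{0}$ branch of \eqref{eq:csi_eclipsing}).

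For the forward implication, suppose the jammer is eclipsed, witnessed by some $\tilde\bS_D\in\setS^{U\times D}\setminus\{\bS_D\}$. Since $\bE(\tilde\bS_D;\bS_D)\neq\mathbf{0}$, at least one of its rows is nonzero, say the $u$th; then $\tilde\bms_{(u)}\neq\bms_{(u)}$, so $\bme_{(u)}(\tilde\bms_{(u)};\bms_{(u)})\in\mathfrak{e}_{(u)}(\bms_{(u)})$, and by the observation above this nonzero row is collinear with $\tp{\bmw_D}$. That is the asserted conclusion.

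For the converse, suppose $\bme\in\mathfrak{e}_{(u)}(\bms_{(u)})$ is collinear with $\bmw_D$ for some $u\in[1:U]$, and write $\bme=\tilde\bms_{(u)}-\bms_{(u)}$ with $\tilde\bms_{(u)}\in\setS^D\setminus\{\bms_{(u)}\}$. I would build an explicit eclipsing witness $\tilde\bS_D$ by letting it agree with $\bS_D$ in all rows except the $u$th, which I set equal to $\tilde\bms_{(u)}$. Then $\tilde\bS_D\in\setS^{U\times D}$ and $\tilde\bS_D\neq\bS_D$ (since $\bme\neq\mathbf{0}$), and $\bE(\tilde\bS_D;\bS_D)$ has all rows zero except its $u$th row, which equals $\bme$. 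Consequently $[\bE(\tilde\bS_D;\bS_D);\tp{\bmw_D}]$ has at most the two nonzero rows $\tp{\bme}$ and $\tp{\bmw_D}$, which are collinear by hypothesis, so its rank is at most $1$ (and it is nonzero); by \fref{def:csi_eclipsing} the jammer eclipses. The argument is essentially bookkeeping: the only place that needs attention is keeping the notion of collinearity symmetric, so that the degenerate case $\bmw_D=\mathbf{0}$ (``jammer never jams'') is handled uniformly, but I expect no real obstacle here.
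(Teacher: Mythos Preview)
Your proposal is correct and follows essentially the same approach as the paper: the converse is handled by the identical single-row construction, and for the forward implication you argue directly (pick a nonzero row of the eclipsing error matrix) whereas the paper phrases it as a contrapositive, which is the same content. Your explicit handling of the rank-$0$/rank-$1$ distinction and of the degenerate case $\bmw_D=\mathbf{0}$ is a nice touch that the paper leaves implicit.
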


\begin{proof}
We start with the ``only if'' direction: 
The $\mathfrak{e}_{(u)}(\bms_{(u)})$, $u=[1:U],$ contain the rows of the elements of $\mathfrak{E}(\bS_D)$ 
which are distinct from zero. Thus, if there exists no $\bme\in\mathfrak{e}_{(u)}(\bms_{(u)})$ that is collinear
with $\bmw_D$---for any $u\in[1:U]$---then for any matrix~$\bE$ which contains at least one row from 
$\bigcup_{u=1}^U \mathfrak{e}_{(u)}(\bms_{(u)})$, the augmented matrix $[\bE; \tp{\bmw_D}]$ has at least least rank $2$. 
Since all matrices $\mathbf{E}\in\mathfrak{E}(\bS_D)$ contain at least one such row, 
it follows that the jammer does not eclipse. 

For the ``if'' direction, we may assume that there exists~a $u\in[1:U]$ and an $\bme\in\mathfrak{e}_{(u)}(\bms)$ 
such that $\bmw_D$ is collinear with~$\bme$. We now construct a matrix $\tilde\bS_D\in\setS^{U\times D}$ as follows: 
In all rows except the $u$th one, $\tilde\bS_D$ equals $\bS_D$; 
in the $u$th one, $\tilde\bS_D$ equals the corresponding row of $\bS_D$ \emph{minus}~$\tp{\bme}$. 
By being constructed this way, $\tilde\bS_D$ has the properties that 
(i) $\bS_D-\tilde\bS_D\in\mathfrak{E}(\bS_D)$, and 
(ii) $[\bS_D-\tilde\bS_D;\tp{\bmw_D}]=[\tp{\mathbf{0}};\dots;\tp{\mathbf{0}};\tp{\bme};\tp{\mathbf{0}};\dots;\tp{\mathbf{0}};\tp{\bmw_D}]$ has rank one. 
Thus, the jammer is eclipsed. 
\end{proof}

\begin{lem} \label{lem:multiple_eclipsing_chances}

	Let $\bmw_D$ be the transmit signal of a single-antenna jammer,
	let $\bms\sim\textnormal{Unif}[\setS^D]$ be a random vector, 
	and define
	\begin{align}
	\mathfrak{e}(\bms) \triangleq \big\{\tilde\bms-\bms : \tilde\bms\in\setS^D\setminus\{\bms\} \big\}. \label{eq:ecoset}
	\end{align}
	Let furthermore $u\in[1:U]$ be fixed. 
	Then the following two events have identical probabilities:
	\begin{enumerate}
		\item[(i)\,] There is a $\bme\!\in\!\mathfrak{e}_{(u)}(\bms_{(u)})$ such that $\bmw_D$ is collinear with~$\bme$.
		\item[(ii)] There is a $\bme\!\in\!\mathfrak{e}(\bms)$ such that $\bmw_D$ is collinear with $\bme$.
	\end{enumerate}	
\end{lem}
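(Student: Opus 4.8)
The plan is to recognize this lemma as a relabeling statement: events (i) and (ii) are one and the same event evaluated at two random vectors that are identically distributed, so their probabilities must agree. First I would recall from the transmission model in \fref{sec:model} that $\bS_D\sim\textnormal{Unif}[\setS^{U\times D}]$ with i.i.d.\ columns, hence every entry of $\bS_D$ is an independent $\textnormal{Unif}[\setS]$ draw; reading the $u$th coordinate across the $D$ columns, this gives $\bms_{(u)}\sim\textnormal{Unif}[\setS^D]$, exactly the law of the auxiliary vector $\bms$ in the statement. Since the jammer does not observe the UE data, $\bmw_D$ is independent of $\bS_D$, so I may treat $\bmw_D$ as fixed throughout (formally, condition on its realization and average at the end).

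Next I would note the purely notational fact that the row-coset $\mathfrak{e}_{(u)}(\cdot)$ defined just above the lemma is nothing but the expression \eqref{eq:ecoset} evaluated at its argument, i.e.\ $\mathfrak{e}_{(u)}(\bma)=\mathfrak{e}(\bma)$ for all $\bma\in\setS^D$. Fixing the value of $\bmw_D$, define the deterministic predicate $\Phi:\setS^D\to\{0,1\}$ by
\begin{align}
\Phi(\bma)=1 \iff \exists\,\bme\in\mathfrak{e}(\bma)\ \text{such that}\ \bmw_D\ \text{is collinear with}\ \bme .
\end{align}
Then event (i) is precisely $\{\Phi(\bms_{(u)})=1\}$ and event (ii) is precisely $\{\Phi(\bms)=1\}$.

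Finally, since $\Phi$ is a fixed function on the finite set $\setS^D$ and $\bms_{(u)}$ and $\bms$ have the same distribution, the pushforward probabilities coincide, $\Prob[\Phi(\bms_{(u)})=1]=\Prob[\Phi(\bms)=1]$, which (after averaging over $\bmw_D$ if it is random) is the claim. I expect no genuine obstacle here: the only point deserving care is the marginal uniformity of the $u$th row of $\bS_D$, and the only modeling assumption used is the independence of the jammer signal from the legitimate data---indeed, the lemma is invoked precisely so that the subsequent combinatorial arguments can be carried out for a single generic uniform row rather than for the specific $u$th row of $\bS_D$.
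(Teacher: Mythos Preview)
Your proposal is correct and matches the paper's own argument essentially line for line: both observe that $\bms_{(u)}\sim\textnormal{Unif}[\setS^D]$ (from the i.i.d.\ assumption on $\bS_D$), that $\mathfrak{e}_{(u)}(\cdot)$ and $\mathfrak{e}(\cdot)$ are the same set-valued map, and hence that the two events are the same measurable function evaluated at identically distributed inputs. Your explicit introduction of the predicate $\Phi$ and the remark about conditioning on $\bmw_D$ add a little extra rigor, but the underlying idea is identical to the paper's two-sentence proof.
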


\begin{proof}
	Since $\bms_{(u)}\sim\textnormal{Unif}[\setS^D]$ for any $u\in[1:U]$ (which follows since 
	the transmit symbols are assumed to be i.i.d. uniform, cf. \fref{sec:model})	, the random sets 
	$\mathfrak{e}_{(u)}(\bms_{(u)})$ and $\mathfrak{e}(\bms)$ have the same distribution. The result follows immediately.
\end{proof}

\fref{lem:coset_eclipsing} and \fref{lem:multiple_eclipsing_chances} give rise to the following corollary:

\begin{cor} \label{cor:eclipsing}
	Let $\bmw_D$ be the transmit signal of a single-antenna jammer. 
	If $\bms\sim\textnormal{Unif}[\setS^D]$ and $\mathfrak{e}(\bms)$ is defined as in~\eqref{eq:ecoset}, 
	and if 
	\begin{align}
		\!\!\! q(\bmw_D) \!\triangleq\! \Prob(\exists \bme\in\mathfrak{e}(\bms) \text{ such that } \bmw_D \text{ is collinear with } \bme),\!\!
		\label{eq:define_q}
	\end{align}
	then the jammer eclipses with probability $1-(1-q(\bmw_D))^U$. 
\end{cor}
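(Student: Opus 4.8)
The plan is to assemble \fref{cor:eclipsing} from the two preceding lemmas together with an independence argument over the rows of $\bS_D$. For each $u\in[1:U]$ I would introduce the event
\[
	A_u \triangleq \big\{\exists\,\bme\in\mathfrak{e}_{(u)}(\bms_{(u)})\text{ such that }\bmw_D\text{ is collinear with }\bme\big\}.
\]
By \fref{lem:coset_eclipsing}, the jammer eclipses if and only if $\bigcup_{u=1}^{U}A_u$ occurs, so that the non-eclipsing event is exactly $\bigcap_{u=1}^{U}A_u^c$. By \fref{lem:multiple_eclipsing_chances}, for every fixed $u$ the probability of $A_u$ coincides with the probability of the analogous event for a generic $\bms\sim\textnormal{Unif}[\setS^D]$, which by definition equals $q(\bmw_D)$ (cf.\ \eqref{eq:define_q}); in particular $\Prob(A_u)=q(\bmw_D)$ for all $u$.

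The key remaining step is to observe that the events $A_1,\dots,A_U$ are mutually independent. Indeed, each $A_u$ is $\sigma(\bms_{(u)})$-measurable, since it refers only to the coset $\mathfrak{e}_{(u)}(\bms_{(u)})$ (a deterministic, finite set built from the $u$th row of $\bS_D$) and to the fixed vector $\bmw_D$; and the rows $\bms_{(1)},\dots,\bms_{(U)}$ of $\bS_D$ are independent because the columns $\bms_k$ of $\bS_D$ are independent with i.i.d.\ entries (cf.\ \fref{sec:model}). Consequently,
\[
	\Prob\!\Big(\bigcap_{u=1}^{U}A_u^c\Big)=\prod_{u=1}^{U}\Prob(A_u^c)=\big(1-q(\bmw_D)\big)^{U},
\]
and taking the complement shows that the jammer eclipses with probability $1-(1-q(\bmw_D))^{U}$, which is the assertion of the corollary.

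I expect the only point requiring care to be the independence claim: one must make explicit that each $A_u$ depends on the data matrix $\bS_D$ only through its $u$th row and that, under the transmission model of \fref{sec:model}, these rows form an independent family. Everything else is a routine combination of \fref{lem:coset_eclipsing} and \fref{lem:multiple_eclipsing_chances}, with the edge case $\bmw_D=\mathbf{0}$ (where $q(\bmw_D)=1$ and the formula returns $1$) handled automatically.
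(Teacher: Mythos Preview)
Your proposal is correct and follows essentially the same approach as the paper: use \fref{lem:coset_eclipsing} to express eclipsing as the union $\bigcup_u A_u$, invoke \fref{lem:multiple_eclipsing_chances} to get $\Prob(A_u)=q(\bmw_D)$, and exploit the independence of the rows of $\bS_D$ to factorize the complementary probability. Your version is slightly more explicit about why the $A_u$ are independent (via $\sigma(\bms_{(u)})$-measurability), which is a welcome clarification but not a different argument.
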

\begin{proof}
	By \fref{lem:coset_eclipsing}, the jammer eclipses if and only if there exists 
	a $\bme\in\mathfrak{e}_{(u)}(\bms_{(u)})$ that is collinear with $\bmw_D$, for some $u\in[1:U]$.
	Eclipsing corresponds therefore to the union of these $U$ events.
	Since the $U$ rows of $\bS_D$ are independent (cf.~\fref{sec:model}), these events are independent, 
	and by \fref{lem:multiple_eclipsing_chances}, they all have probability $q(\bmw_D)$. 
	The result follows immediately.
\end{proof}

To complete the proof of \fref{thm:csi_eclipsing}, it now only remains to show that the probability 
$q(\bmw_D)$ as defined in \eqref{eq:define_q} is at most
\begin{align}
	\frac{2^{\|\bmw_D\|_0}-1}{4^{\|\bmw_D\|_0-1}}.
\end{align}
For this, we define the concept of an \emph{eclipsing-optimal jammer}.

\begin{defi}
	An \emph{eclipsing-optimal jammer} is a jammer which, for a given zero-norm $\|\bmw_D\|_0$, transmits
	a signal with maximal probability of eclipsing, i.e., a signal from the set
	\begin{align}
		\argmax_{\bmw'\in\opC^D:\|\bmw'\|_0=\|\bmw_D\|_0} q(\bmw').
	\end{align}	
\end{defi}

\begin{lem} \label{lem:jammer_constellation1}
To be eclipsing-optimal, a single-antenna jammer has to transmit a signal 
$\bmw_D\in\hat\setW_\alpha^D$ for~some $\alpha\in\opC$, 
where $\setW_\alpha\triangleq \{0, \alpha , \alpha i, -\alpha , -\alpha i\}$. 
\end{lem}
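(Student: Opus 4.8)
The plan is to prove the contrapositive: any transmit signal $\bmw_D$ with $\|\bmw_D\|_0=m$ that does \emph{not} lie in $\setW_\alpha^D$ for any $\alpha$ has strictly smaller eclipsing probability $q$ than some signal of the same zero-norm that \emph{does}, so it cannot be eclipsing-optimal. First, unwinding \fref{cor:eclipsing}, one has $q(\bmw_D)=\Prob(\exists\,c\neq0:\,s_k+cw_k\in\setS\ \forall k\in\setK)$ with $\setK\triangleq\supp(\bmw_D)$ and the $s_k$ i.i.d.\ uniform on $\setS$. Since collinearity --- hence $q$ --- is scale-invariant, and since an eclipsing-optimal jammer must have $q>0$ (the signal that is constant on $\setK$ already does), every realization witnessing $q>0$ forces each $w_k$ into $c^{-1}\Delta$, where $\Delta$ denotes the set of nonzero QPSK differences. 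Rescaling, we may assume $w_k\in\Delta$ for all $k\in\setK$; and $\Delta=\Delta^*\cup\Delta^{**}$ with $\Delta^*=\{\pm\sqrt2,\pm\sqrt2 i\}=\sqrt2\{1,i,-1,-i\}$ (each occurring as a difference from a uniform QPSK symbol with probability $1/2$) and $\Delta^{**}=\{\pm\sqrt2\pm\sqrt2 i\}=\sqrt2(1+i)\{1,i,-1,-i\}$ (probability $1/4$). If all $w_k$ lie in $\Delta^*$ then $\bmw_D\in\setW_{\sqrt2}^D$; if all lie in $\Delta^{**}$ then $\bmw_D\in\setW_{\sqrt2(1+i)}^D$; so it remains only to defeat the \emph{mixed} case, where $a\geq1$ of the $w_k$ lie in $\Delta^*$ and $b\geq1$ lie in $\Delta^{**}$, with $a+b=m$ (so $m\geq2$).

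For the mixed case I would upper-bound $q(\bmw_D)$ by a union bound over admissible scalars $c$. A $c$ can contribute only if $cw_k\in\Delta$ for all $k$; spelling this out, at a $\Delta^*$-position it forces $c\in\{1,i,-1,-i\}\cup(1+i)\{1,i,-1,-i\}$, and at a $\Delta^{**}$-position $c\in\{1,i,-1,-i\}\cup\tfrac{1-i}{2}\{1,i,-1,-i\}$; since a mixed signal has positions of both kinds and these eight-element sets meet only in the magnitude-one part, any admissible $c$ lies in $\{1,i,-1,-i\}$. For each such $c$ one has $cw_k\in\Delta^*$ at $\Delta^*$-positions and $cw_k\in\Delta^{**}$ at $\Delta^{**}$-positions, so by independence $\Prob(s_k+cw_k\in\setS\ \forall k)=(1/2)^a(1/4)^b$, whence $q(\bmw_D)\leq 4\cdot2^{-a}4^{-b}=2^{\,2-m-b}\leq 2^{\,1-m}$ because $b\geq1$.

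Next I would lower-bound $q$ for the comparison signal $\bmw_D^*$ supported on $\setK$ with every nonzero entry equal to $\sqrt2$; note $\bmw_D^*\in\setW_{\sqrt2}^D$ and $\|\bmw_D^*\|_0=m$. Here a useful $c$ must satisfy $\sqrt2 c\in\Delta$, and the relevant combinatorial fact is that each difference set $\Delta_s\triangleq(\setS-s)\setminus\{0\}$ contains exactly two elements of $\Delta^*$, which always form an \emph{adjacent} pair on the $4$-cycle $\{\sqrt2,\sqrt2 i,-\sqrt2,-\sqrt2 i\}$ --- the antipodal pairs $\{\pm\sqrt2\}$ and $\{\pm\sqrt2 i\}$ never both appear. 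Taking the three axis differences $\delta_1=\sqrt2$, $\delta_2=\sqrt2 i$, $\delta_3=-\sqrt2 i$ and the events $A_{\delta_\ell}=\{\delta_\ell\in\Delta_{s_k}\ \forall k\in\setK\}$, independence gives $\Prob(A_{\delta_\ell})=2^{-m}$, $\Prob(A_{\delta_1}\cap A_{\delta_2})=\Prob(A_{\delta_1}\cap A_{\delta_3})=4^{-m}$ (adjacent pairs, each realized by a single $\Delta_s$), and $\Prob(A_{\delta_2}\cap A_{\delta_3})=\Prob(A_{\delta_1}\cap A_{\delta_2}\cap A_{\delta_3})=0$ (they involve the antipodal pair $\{\pm\sqrt2 i\}$). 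Inclusion--exclusion then yields $q(\bmw_D^*)\geq\Prob(A_{\delta_1}\cup A_{\delta_2}\cup A_{\delta_3})=3\cdot2^{-m}-2\cdot4^{-m}$. Since $2^{-m}>2\cdot4^{-m}$ for $m\geq2$, this exceeds $2\cdot2^{-m}\geq q(\bmw_D)$, so the mixed $\bmw_D$ is not eclipsing-optimal; undoing the initial rescaling, every eclipsing-optimal signal lies in $\setW_\alpha^D$ for some $\alpha\in\opC$.

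The main obstacle is the combinatorics of the QPSK difference sets. Two points need care: (i) showing that the \emph{only} scalars $c$ that can help a mixed signal form the four-element set $\{1,i,-1,-i\}$, which caps the mixed-case union bound at $2^{\,1-m}$; and (ii) producing a lower bound on $q(\bmw_D^*)$ that beats $2\cdot2^{-m}$ --- a single union term gives only $2^{-m}$, and a two-event inclusion--exclusion only $2\cdot2^{-m}-4^{-m}$, so one genuinely needs a three-event argument with the $\delta_\ell$ chosen as a ``path'' on the $4$-cycle (two adjacent to a common third, the remaining pair antipodal) so that one pairwise term and the triple term vanish. Keeping careful track of which pairs (and triples) of differences can co-occur in a single $\Delta_s$ is the fiddly part; the rest is routine probability.
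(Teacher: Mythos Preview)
Your argument is correct and, in fact, more rigorous than the paper's own proof. The paper normalizes $\bmw_D$ into $\Delta\setS^D$, observes that the per-entry hit probability $\Prob(e_k\in\mathfrak{e}_k(s_k))$ equals $1/2$ for axis-type entries and $1/4$ for corner-type entries, and then concludes directly that an optimal jammer restricts its alphabet to axis points (and zero). This treats $q(\bmw_D)$ as if it were determined by a single candidate error vector $\bme$; it does not explicitly address the union over all nonzero scalars $c$ (equivalently, over all $\bme$ collinear with~$\bmw_D$), which is what $q$ actually measures.

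You confront that union head-on: for the mixed case you show the set of admissible scalars collapses to $\{1,i,-1,-i\}$ and cap $q$ at $2^{1-m}$ via a union bound; for the pure-axis comparison signal you run a three-event inclusion--exclusion (with the $\delta_\ell$ chosen as a path on the $4$-cycle so one pairwise term and the triple term vanish) to certify $q>2^{1-m}$. This buys airtightness at the price of a few lines of combinatorics. The paper's argument captures the right intuition in fewer words, but to be fully rigorous it would need either your explicit comparison or a coupling argument showing that replacing any corner-type entry by an axis-type entry can only increase~$q$; neither is spelled out there.
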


\begin{proof}
By \fref{cor:eclipsing}, for $\bms\sim\text{Unif}[\setS^D]$, the probability of eclipsing is increasing in 
the probability that there exists an $\bme\in\mathfrak{e}(\bms)$
such that $\bmw$ is collinear with $\bme$. 
The entries of any element $\bme\in\mathfrak{e}(\bms)$ take value in the 
set $\Delta\setS \triangleq \{s-\tilde s : s,\tilde s \in \setS\}$ illustrated in \fref{fig:delta_quemlis}. 
We have 
\begin{align}
	\!\!\!\!\Delta\setS =& \hat\setW_{\sqrt2} \nonumber \\
	&\! \cup \big\{\sqrt2(1\!+\!i), \sqrt2(1\!-\!i), \sqrt2(-1\!+\!i), \sqrt2(-1\!-\!i)\big\}.\!\!\!
\end{align}
So the $\bmw_D\in\opC^D$ that lead to nonzero probability of eclipsing are exactly those 
contained in $\alpha\Delta\setS^D\triangleq\{\alpha\Delta\bms:\Delta\bms\in\Delta\setS^D\}$ for some $\alpha\in\opC$. 
However, not all $\bme\in\Delta\setS^D$ are contained in $\mathfrak{e}(\bms)$ with equal probability, 
and so not all $\bmw_D\in\alpha\Delta\setS^D$ lead to the same probability of eclipsing. 
Define
\begin{align}
	\mathfrak{e}_k(s_k) \triangleq  \{e_k : \bme\in\mathfrak{e}(\bms) \} \label{eq:coset_entries}
\end{align}
to be the $k$th ``entry'' of $\mathfrak{e}(\bms)$, which depends on $\bms$ only through~$s_k$.%\footnote{
Vice versa, $\mathfrak{e}(\bms)=\{\bme:e_k\in\mathfrak{e}_k(s_k),k=1,\dots,D\}$. 
For all $\bme\in\Delta\setS^D\setminus\{\boldsymbol{0}\}$, we have
\begin{align}
	\Pr(\bme\in\mathfrak{e}(\bms)) = \prod_{k=1}^D\Pr(e_k\in\mathfrak{e}_k(s_k)). \label{eq:e_prod}
\end{align}
The reason why not all $\bmw_D\in\alpha\Delta\setS^D$ have the same probability of eclipsing
is that, for given~$k$, $\mathfrak{e}_k(s_k)$ does not contain all elements of $\Delta\setS$
with equal probability,~cf.~\fref{fig:delta_quemlis}:

First, we have $P(0\in\mathfrak{e}_k(s_k))=1$. That is, the origin is contained in
 $\mathfrak{e}_k(s_k)$ with probability one (i.e, for all possible realizations of $s_k$), 
as it can be ``reached'' (by subtracting some $\tilde s_k\in\setS$) 
from any symbol $s_k\in\setS$, by setting $\tilde s_k = s_k$.\footnote{
This is also the reason why \eqref{eq:e_prod} does not hold for $\bme=\boldsymbol{0}$: 
the right-hand-side is equal to $1$, but the left-hand-side is equal to $0$, cf. \eqref{eq:ecoset}.}
Note that by transmitting a zero, the jammer does not increase $\|\bmw\|_0$. 

The four points $\sqrt2, i\sqrt2, -\sqrt2, -i\sqrt2$ on the coordinate axes are each contained in 
$\mathfrak{e}_k(s_k)$ with probability $\frac12$, as they can be reached from $s_k$ if $s_k$ takes on 
either of the two constellation values adjacent to that value of the coordinate semi-axis.

Finally, the four corner points $\sqrt2, i\sqrt2, -\sqrt2, -i\sqrt2$ are each contained 
in $\mathfrak{e}_k(s_k)$ with probability $\frac14$, as they can be reached from $s_k$ only if 
$s_k$ takes on the constellation value in the corresponding quadrant. 

So, to be optimal for some $\|\bmw\|_0\geq0$, a jammer should~restrict its transmit alphabet
to the zero symbol and (a scaled version of) the four points $\sqrt2, i\sqrt2, -\sqrt2, -i\sqrt2$.
That is, the jammer should restrict its alphabet to $\hat\setW_\alpha$ for some~$\alpha\in\opC$.
\end{proof}

\begin{figure}[tp]
\centering
\hspace{8mm}
\includegraphics[height=4.5cm]{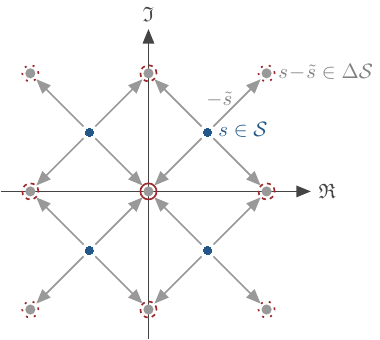}
\caption{Illustration of $\setS$ and $\Delta\setS$. Values in $\Delta\setS$ which are contained
in $\mathfrak{e}_k(s_k)$ with probability $1$ are circumscribed with a drawn red line; 
values that are contained with probability $\frac12$ are circumscribed with a dashed red line; 
values that are contained with probability $\frac14$ are circumscribed with a dotted red line.}
\label{fig:delta_quemlis}
\end{figure}

\begin{lem} \label{lem:prob_of_q}
For an optimal jammer, the probability $q$ as defined in \eqref{eq:define_q} is equal to
\begin{align}
	q(\bmw_D) = 
	\begin{cases}
		1 &\text{if } \bmw_D=\mathbf{0} \\
		\frac{2^{\|\bmw_D\|_0}-1}{4^{\|\bmw_D\|_0-1}} &\text{else}.
	\end{cases}
\end{align}
\end{lem}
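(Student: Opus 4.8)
The plan is to dispose of the trivial case $\bmw_D=\mathbf 0$ first and then, for $\bmw_D\neq\mathbf 0$, to enumerate \emph{all} vectors collinear with $\bmw_D$ that could possibly lie in $\mathfrak{e}(\bms)$ and apply inclusion--exclusion. If $\bmw_D=\mathbf 0$, then every nonzero vector is collinear with $\bmw_D$ (stacking it on $\tp{\mathbf 0}$ yields a rank-$1$ matrix); since $\mathfrak{e}(\bms)$ is nonempty and contains only nonzero vectors, the event in \eqref{eq:define_q} is certain, so $q(\mathbf 0)=1$. For the rest, write $N=\|\bmw_D\|_0\geq 1$. By \fref{lem:jammer_constellation1}, an eclipsing-optimal jammer transmits $\bmw_D$ whose every entry lies in $\{0,\alpha,\alpha i,-\alpha,-\alpha i\}$ for a common $\alpha\in\opC$, and the computation will show $q(\bmw_D)$ depends on $\bmw_D$ only through $N$. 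Any nonzero vector collinear with $\bmw_D$ has the form $\bme=t\bmw_D$ with $t\in\opC\setminus\{0\}$, and a necessary condition for $\bme\in\mathfrak{e}(\bms)$ is that every entry of $\bme$ lies in $\Delta\setS$; hence on the $N$ support coordinates of $\bmw_D$ the entries of $\bme$ all share the modulus $|t\alpha|$ and are therefore either all ``axis points'' of $\Delta\setS$ ($|t\alpha|=\sqrt2$) or all ``corners'' ($|t\alpha|=2$). Using that the nonzero entries of $\bmw_D/\alpha$ lie in the group $\{1,i,-1,-i\}$ of fourth roots of unity, and that the four axis points of $\Delta\setS$ (respectively the four corners) are $\sqrt2\cdot\{1,i,-1,-i\}$ (respectively $2e^{i\pi/4}\cdot\{1,i,-1,-i\}$), a short computation confines $t\alpha$ to exactly four values in each case. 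This produces exactly eight candidate vectors---four ``axis candidates'' and four ``corner candidates''---and $q(\bmw_D)=\Prob\big(\bigcup_{j=1}^{8}\{\bme^{(j)}\in\mathfrak{e}(\bms)\}\big)$.

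Second, since the transmit symbols, hence the sets $\mathfrak{e}_k(s_k)=\setS-s_k$, are independent across coordinates $k$ (cf.\ \fref{sec:model}), the probability that any prescribed subcollection of the $\bme^{(j)}$ lies in $\mathfrak{e}(\bms)$ simultaneously factorizes over $k$, just as in \eqref{eq:e_prod}; the factors for $k$ outside the support of $\bmw_D$ equal $1$, so only the $N$ support coordinates matter. From the per-coordinate analysis already carried out in the proof of \fref{lem:jammer_constellation1} (see also \fref{fig:delta_quemlis}), on a support coordinate an axis point of $\Delta\setS$ lies in $\mathfrak{e}_k(s_k)$ with probability $1/2$ and a corner with probability $1/4$. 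Hence each of the four axis candidates lies in $\mathfrak{e}(\bms)$ with probability $2^{-N}$, and each of the four corner candidates with probability $4^{-N}$.

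Third, I would determine which subcollections of the eight candidates can lie in $\mathfrak{e}(\bms)$ together, by inspecting the four-element set $\mathfrak{e}_k(s_k)$ on a support coordinate: it consists of $0$, exactly one real-axis point, exactly one imaginary-axis point, and exactly one corner, all three of which are determined by $(\sign(\Re s_k),\sign(\Im s_k))$, so that the corner already determines the two axis points. From this I expect to read off: (i) two ``antipodal'' axis candidates, and likewise two distinct corner candidates, are incompatible (joint probability $0$, using $N\geq1$); (ii) a ``perpendicular'' pair of axis candidates lies in $\mathfrak{e}(\bms)$ with probability $4^{-N}$, and forcing such a pair at a coordinate automatically forces exactly one corner candidate there as well, so this pair together with that corner is a jointly-compatible triple of probability $4^{-N}$; and (iii) each axis candidate is jointly compatible (probability $4^{-N}$) with exactly two of the four corner candidates, and no four of the candidates are jointly compatible. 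Tallying the nonzero inclusion--exclusion terms---eight singletons ($4\cdot2^{-N}+4\cdot4^{-N}$), four perpendicular axis pairs and eight axis--corner pairs (twelve pairs in all, each of probability $4^{-N}$), and four triples ($4\cdot4^{-N}$), nothing beyond---gives
\begin{align*}
q(\bmw_D) &= \big(4\cdot2^{-N}+4\cdot4^{-N}\big)-12\cdot4^{-N}+4\cdot4^{-N} \\
&= 4\cdot2^{-N}-4\cdot4^{-N} = \frac{2^{N}-1}{4^{\,N-1}},
\end{align*}
which is the claimed expression (and, as anticipated, independent of which optimal $\bmw_D$ we started from).

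The main obstacle is step (ii)--(iii): the finite but fiddly bookkeeping of which of the eight candidates are mutually compatible and with what probability---in particular verifying that a perpendicular axis pair forces exactly one ``diagonal'' corner candidate (and forbids the other three), and that no quadruple survives. This is cleanest if each candidate's support entries are written as $\sqrt2\,\omega c_k$ (axis type, $\omega\in\{1,i,-1,-i\}$, $w_{D,k}=\alpha c_k$) or $2e^{i\pi/4}\omega c_k$ (corner type) and matched against the three nonzero elements of $\mathfrak{e}_k(s_k)$; everything else---the reduction to eight candidates, the factorization over coordinates, and the per-coordinate probabilities $1/2$ and $1/4$---is either immediate or already available from the proof of \fref{lem:jammer_constellation1}.
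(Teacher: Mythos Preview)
Your proposal is correct, and the inclusion--exclusion bookkeeping does work out exactly as you anticipate (in particular, each perpendicular axis pair forces a unique corner candidate, each axis candidate is compatible with exactly two corners at probability $4^{-N}$, and no quadruple survives). However, your route is genuinely different from the paper's. The paper first uses a symmetry argument to reduce, without loss of generality, to a jammer whose nonzero entries are all equal to $\sqrt{2}$; it then observes that for such a jammer, eclipsing occurs if and only if the corresponding data symbols $s_\ell$ all lie in a common half-plane of the QPSK constellation (left/right/top/bottom), and counts such sequences directly: $4\cdot 2^{\|\bmw_D\|_0}$ sequences minus $4$ doubly-counted constant sequences, over $4^{\|\bmw_D\|_0}$ total. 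This is shorter and more geometric, and the ``same half-plane'' characterization replaces your entire eight-candidate catalogue in one stroke. Your approach, by contrast, avoids the symmetry reduction and works uniformly over all eclipsing-optimal $\bmw_D$; it is more mechanical and would generalize more readily to other constellations, at the price of the combinatorial bookkeeping you flagged as the main obstacle.
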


\begin{proof}
Without loss of generality, an optimal jammer uses the transmit alphabet $\hat\setW_{\sqrt{2}}=\{0,\sqrt{2},i\sqrt{2},-\sqrt{2},-i\sqrt{2}\}$.
First, if the jammer only transmits zeros, $\bmw_D=\mathbf{0}$, then $\bmw_D$ is always collinear with any 
$\bme\in\mathfrak{e}(\bms)$, and hence $q(\bmw_D)=1$. In contrast, if $\bmw_D\neq\mathbf{0}$, then we 
can ignore the zeros in the jammer's transmit signal for our analysis:
Let $\bmw\in\opC^L$ be a jammer transmit signal (of arbitrary length $L$) without zeros, and 
if $\bmw'\in\opC^{L'}$ with $L'>L$ is a zero-padded version of $\bmw$, with
$L'-L$ zeros inserted at arbitrary indices $\setI\subset[1:L']$. 
If and only if~$\bmw$ is collinear with some $\bme\in\mathfrak{e}(\bms)$, then $\bmw'$ 
is collinear with $\bme'$, where $\bme'\in\opC^{L'}$ is the zero-padded version 
of $\bme$ (with zeros inserted at indices $\setI$) and is contained in $\mathfrak{e}(\bms')$ 
for any $\bms'\in\setS^{L'}$ that satisfies $\bms'_{[\setI]}=\bms$. 
So we have both $q(\bmw)=q(\bmw')$ and $\|\bmw\|_0=\|\bmw'\|_0$. 
It follows that we can ignore the presence of zero-valued entries in the jammer's transmit signal.

Furthermore, note that the $\mathfrak{e}_k(s_k)$ as defined in \eqref{eq:coset_entries} are independent 
and identically distributed for all $k$ (since the $s_k$ are i.i.d.). 
Therefore, and since the $s_k$ are uniformly distributed, we can assume
without loss of generality that all non-zero symbols that the jammer transmits are 
equal to $\sqrt2$. 

If the jammer only transmits one non-zero symbol, $w_1=\sqrt{2}$,
then for any realization of $s_1$, we have eclipsing: 
An $\bme\in\mathfrak{e}(\bms)$ that is collinear with $\bmw$ can always be found by setting
$\tilde s_1\in\setS\setminus\{s_1\}$ and $\tilde s_k = s_k$ for any $k\neq1$.

If the jammer transmits two or more nonzero symbols $w_\ell=\sqrt2$, $\ell=1,\dots,L$, $L\geq2$,  
then we only have eclipsing if all $s_\ell, \ell=1,\dots,L$ lie on the same half-plane of the constellation. 
More precisely: An $\bme\in\mathfrak{e}(\bms)$ that is collinear with $[w_1,\dots,w_{L}]$ exists if and only if
all $s_\ell, \ell=1,\dots,L$ lie on the same half-plane of the constellation. 
To get the probability of this event, we can simply count the allowable combinations, 
since all realizations of $\bms$ are equally likely: 
There are four different half-planes (left, right, top, bottom), and in each half-plane, 
there are $2^{L}=2^{\|\bmw_D\|_0}$ different sequences.
But in counting like this, we count each of the four constant sequences
$e^{i\frac{\pi}{4}}\mathbf{1}, e^{i\frac{3\pi}{4}}\mathbf{1}, e^{i\frac{5\pi}{4}}\mathbf{1},
e^{i\frac{7\pi}{4}}\mathbf{1}$ 
twice (e.g., $e^{i\frac{\pi}{4}}\mathbf{1}$ is counted both for the right and the top half-plane), 
so we need to subtract these four double-counted realizations. 
This gives us $4\cdot2^{\|\bmw_D\|_0}-4$ realizations of $\bms$ that lead to eclipsing, 
out of $4^{\|\bmw_D\|_0}$ sequences in total. So the probability $q(\bmw_D)$ is
\begin{align}
	q(\bmw_D) = \frac{4\cdot2^{\|\bmw\|_0}-4}{4^{\|\bmw\|_0}} = \frac{2^{\|\bmw_D\|_0}-1}{4^{\|\bmw_D\|_0-1}}.
\end{align}
\end{proof}
\fref{thm:csi_eclipsing} now follows from \fref{cor:eclipsing} and \fref{lem:prob_of_q},
where the approximation for large $\|\bmw\|_0$ follows by approximating $q(\bmw_D)$ as $4\cdot2^{-\|\bmw\|_0}$
and then using a first-order Taylor approximation around $q(\bmw_D)=0$.
\hfill$\blacksquare$

\subsection{Proof of \fref{thm:independent_chest:correct}}
Using the theorem's assumptions and \eqref{eq:pilot_contamination}, we rewrite the argument of the optimization objective 
in \eqref{eq:jmd_problem2} as 
\begin{align}
	& \tilde\bP (\bY\!_D - \hat\bH \tilde\bS_D) \\
	&= \tilde\bP (\bH\bS_D + \bJ\bW\!_D - (\bH + \bJ\bW_T\pinv{\bS_T}) \tilde\bS_D) \\
	&= \tilde\bP \begin{bmatrix}\bH,\bJ \end{bmatrix} 
	\begin{bmatrix} \bS_D - \tilde\bS_D\\ \bW\!_D - \bW_T\pinv{\bS_T}\tilde\bS_D \end{bmatrix}
\end{align}
From here on, the proof is very similar to the one of \fref{thm:perfect_csi:correct} (considering the modified 
notion of eclipsing as defined in \fref{def:eclipse_chest}), and so is omitted. 
\hfill$\blacksquare$

\subsection{Proof of \fref{thm:independent_chest_eclipsing_nopilot}}

Assume without loss of generality that the jammer knows the pilot sequence of the first UE, 
i.e., the first row of $\bS_T$, which we denote $\tp{\bms_{T,1}}$. Then the jammer can transmit
$\bmw_T=\bms_{T,1}$ and $\bmw_D\in\setS^D$. 
We therefore have $\tp{\bmw_T}\pinv{\bS_T}=[1,0,\dots,0]$ and so 
\begin{align}
	\bSigma &= [\bS_D-\tilde\bS_D;\tp{\bmw_D} - \tp{\bmw_T}\pinv{\bS_T}\tilde\bS_D] \\
	&= [\bS_D-\tilde\bS_D;\tp{\bmw_D} - \tp{\tilde\bms_{D,1}}],
\end{align}
where $\tp{\tilde\bms_{D,1}}$ is the first row of $\tilde\bS_D$. 
In that case, consider for $\tilde\bS_D$ the matrix which is equal to $\bS_D$ on all 
rows except the first row, where it is equal to $\tp{\bmw_D}$. 
If $\tp{\bmw_D}\neq\tp{\bms_{D,1}}$, then $\tilde\bS_D\neq\bS_D$ and $\bSigma$ is a 
matrix of rank one, meaning that the jammer eclipses. 

Since $\tp{\bms_{D,1}}$ is drawn uniform at random from $\setS$, the probability that $\tp{\bmw_D}\neq\tp{\bms_{D,1}}$
is equal to $1-4^{-D}$. 
\hfill$\blacksquare$

\subsection{Proof of \fref{thm:independent_chest_eclipsing}}

If the jammer does not jam during the pilot phase, $\bmw_T=\mathbf{0}$, then eclipsing with channel estimation
coincides with eclipsing with perfect CSI (cf.\,\fref{def:csi_eclipsing}), and the result follows from~\fref{thm:csi_eclipsing}.

If the jammer does jam during the pilot phase, $\bmw_T\neq\mathbf{0}$, then $\bmw_T$ is independent of $\bS_T$
(since the jammer does not know $\bS_T$). 
We now focus on the last row of $\bSigma$, which is $\tp{\bmw_D}-\tp{\bmw_T}\pinv{\bS_T}\tilde\bS_D$.
Since $\bS_T$ is Haar distributed (and therefore unitary) up to a scale-factor, 
$\pinv{\bS_T}=\frac{1}{T}\herm{\bS_T}$ is Haar distributed up to a scale-factor, too. 
Hence, $\tp{\bmx}\triangleq\tp{\bmw_T}\pinv{\bS_T}$ is uniformly distributed over the complex $U$-dimensional 
sphere of radius $\|\bmw_T\|_2/T$ \cite[p.\,16]{meckes2019random}. 
We can therefore also write $\bmx = \frac{\|\bmw_T\|_2}{T \|\bmz\|_2}\bmz$, where $\bmz\sim\setC\setN(\mathbf{0},\bI_U)$, 
and rewrite the last row of $\bSigma$ as 
$\tp{\bmw_D} - \frac{\|\bmw_T\|_2}{T \|\bmz\|_2}\tp{\bmz}\tilde\bS_D$.
Here, $\tp{\bmz}\tilde\bS_D$ is a (transposed) complex Gaussian vector with covariance matrix 
$\herm{\tilde\bS_D}\tilde\bS_D$, so that its entries are complex circularly-symmetric scalar Gaussians with 
variance $U$ (corresponding to the energy of the columns of~$\tilde\bS_D$).
Hence, the last row $\tp{\bmw_D}-\tp{\bmw_T}\pinv{\bS_T}\tilde\bS_D$ has full support $D$ with probability one. 
Since this row does not depend on $\bS_D$, we can simply refer to \fref{thm:csi_eclipsing}, with 
the full-support (with probability one) vector $\tp{\bmw_D}-\tp{\bmw_T}\pinv{\bS_T}\tilde\bS_D$ \emph{in lieu} 
of the potentially sparse vector $\tp{\bmw_D}$, and the result follows. 
\hfill$\blacksquare$

\subsection{Proof of \fref{thm:joint_chest:correct}}
This theorem was already proved for the special case of a single-antenna jammer in \cite[Thm.\,1]{marti2023maed}. 
The extension to multi-antenna jammers is straightforward using arguments as in the proofs of 
\fref{thm:perfect_csi:correct} and \fref{thm:independent_chest:correct} and so is omitted. 
\hfill$\blacksquare$ 

\subsection{Proof of \fref{thm:convex}}
We first prove convexity in $\tilde\bS_D$ for fixed $\tilde\bP$:
For fixed $\tilde\bP$, the objective in \eqref{eq:sandman_problem} can be rewritten as 
\begin{align}
	\!\!\! \hat f(\tilde\bS_D) \triangleq& \Tr\big(\herm{\bY\!_D}\tilde\bP\bY\!_D \nonumber \\
	& + \herm{\tilde\bS_D}[\herm{\hat\bH}\tilde\bP\hat\bH -\alpha \bI_U]\tilde\bS_D 
	- 2\Re\{\herm{\bY\!_D}\tilde\bP\hat\bH\tilde\bS_D\}\big).\!\! \label{eq:traceform}
\end{align}
$\hat f$ is a real-valued function of the complex matrix $\tilde\bS_D$.
Following \cite{hjorungnes11a}, we represent the dependence on this complex input matrix by using 
both $\tilde\bS_D$ and $\tilde\bS_D^\ast$, giving rise to the four different Hessian matrices
$\setH_{\tilde\bS_D, \tilde\bS_D^\ast} \hat f, \setH_{\tilde\bS_D^\ast, \tilde\bS_D^\ast} \hat f, 
\setH_{\tilde\bS_D, \tilde\bS_D} \hat f$, and $\setH_{\tilde\bS_D^\ast, \tilde\bS_D} \hat f$ of $\hat f$. 
The objective is convex if the matrix
\begin{align}
	\setH \hat f = \begin{bmatrix}
		\setH_{\tilde\bS_D, \tilde\bS_D^\ast} \hat f & \setH_{\tilde\bS_D^\ast, \tilde\bS_D^\ast} \hat f \\
		\setH_{\tilde\bS_D, \tilde\bS_D} \hat f& \setH_{\tilde\bS_D^\ast, \tilde\bS_D} \hat f
	\end{bmatrix}
\end{align}
is positive semidefinite. Recognizing that the \mbox{second-order} derivatives of the constant 
and affine 
terms in \eqref{eq:traceform}
are zero, and following \cite[Ex. 5.4]{hjorungnes11a}, we obtain
\begin{align}
	\setH \hat f = \begin{bmatrix}
 	\tp{[\herm{\hat\bH}\tilde\bP\hat\bH\!-\!\alpha \bI_U]}\kron \bI_U & \mathbf{0} \\
 	\mathbf{0} & [\herm{\hat\bH}\tilde\bP\hat\bH\!-\!\alpha \bI_U] \kron \bI_U
 \end{bmatrix}. \!\!
\end{align}
$\setH \hat f$ is positive semidefinite if all its eigenvalues are non-negative. 
The eigenvalues of $\setH \hat f$ are simply the eigenvalues of $\herm{\hat\bH}\tilde\bP\hat\bH\!-\!\alpha \bI_U$, 
and they are all non-negative as long as $\alpha\leq \lambda_\text{min}$, 
where $\lambda_\text{min}$ is the smallest eigenvalue of $\herm{\hat\bH}\tilde\bP\hat\bH$. 

We now turn to the minimization with respect to $\tilde\bP$, and define $\bE \triangleq \bY\!_D - \hat\bH\tilde\bS_D$.
Note that any $\tilde\bP\in \mathscr{G}_{B-I}(\opC^B)$
can be written as $\tilde\bP = \bI_B - \bQ\herm{\bQ}$, where $\bQ\in\opC^{B\times I}$ consists of $I$ 
orthonormal columns. The second term of \eqref{eq:sandman_problem} does not depend on $\tilde\bP$, and since the squaring 
of the norm does not affect the minimizing argument (which is what we are interested in), it suffices to consider
\begin{align}
	\min_{\tilde\bP \,\in\, \mathscr{G}_{B-I}(\opC^B)} \| \tilde\bP\bE \|_F
	&= \min_{\bQ} \| \bE - \bQ\herm{\bQ}\bE \|_F \label{eq:Q_range} \\ 
	&\geq \min_{\tilde\bE\,:\,\rank{\tilde\bE}\leq I} \|\bE - \tilde\bE\|_F. \label{eq:E_range}
\end{align}
Here, \eqref{eq:E_range} follows since the matrix $\bQ\herm{\bQ}\bE$ has at most rank $I$ (since $\bQ$ has dimensions $B\times I$),
so the optimization range in \eqref{eq:Q_range} is a subset of the optimization range in \eqref{eq:E_range}. 
By the Eckart-Young-Mirksy theorem, \eqref{eq:E_range} is minimized for $\tilde\bE = \bU_I \boldsymbol{\Sigma}_I \herm{\bV_I}$,
where $\boldsymbol{\Sigma}_I = \text{diag}(\sigma_1, ..., \sigma_I)$ is the diagonal matrix 
whose diagonal entries are the $I$ largest singular values of $\bE$, and $\bU_I$ and $\bV_I$ consist of the corresponding
left- and right-singular vectors, respectively \cite{eckart1936approximation}. 
But if we let $\bE=\bU\boldsymbol{\Sigma}\herm{\bV}$ be the singular-value decomposition of $\bE$ and choose
$\bQ=\bU_I$, then $\bQ\herm{\bQ}\bE = \bU_I \boldsymbol{\Sigma}_I \herm{\bV_I}$, meaning that \eqref{eq:E_range} holds with 
equality and that \eqref{eq:Q_range} is minimized for $\bP=\bI_B - \bU_I \herm{\bU_I}$. 
\hfill$\blacksquare$

% Generated by IEEEtran.bst, version: 1.14 (2015/08/26)

\end{document}